\documentclass[a4paper,twocolumn,notitlepage,nofootinbib,longbibliography,superscriptaddress,floatfix]{revtex4-2}

\usepackage[caption=false]{subfig}
\usepackage[normalem]{ulem}
\usepackage[colorlinks=true,citecolor=teal,linkcolor=orange,urlcolor=orange]{hyperref}
\usepackage{
    adjustbox,
    algorithm,
    algpseudocode,
    amscd,
    amsfonts,
    amsmath,
    amssymb,
    amsthm,
    bm,
    booktabs,
    csquotes,
    enumerate,
    enumitem,
    graphicx,
    IEEEtrantools,
    indentfirst,
    latexsym,
    mathrsfs,
    mathtools,
    multirow,
    nicefrac,
    placeins,
    relsize,
    soul,
    tabls,
    times,
    xcolor,
    comment
}

\newtheorem{theorem}{Theorem}
\newtheorem{definition}{Definition}
\newtheorem{lemma}[theorem]{Lemma}
\newtheorem{proposition}[theorem]{Proposition}
\newtheorem{corollary}[theorem]{Corollary}

\newtheorem{question}{Question}

\providecommand{\customgenericname}{}
\newcommand{\newcustomtheorem}[2]{%
  \newenvironment{#1}[1]
  {%
   \renewcommand\customgenericname{#2}%
   \renewcommand\theinnercustomgeneric{##1}%
   \innercustomgeneric
  }
  {\endinnercustomgeneric}
}

\newcustomtheorem{customthm}{Theorem}
\newcustomtheorem{customlemma}{Lemma}
\newcustomtheorem{customproposition}{Proposition}

\newenvironment{remark}[1][Remark]{\begin{trivlist}
\item[\hskip \labelsep {\bfseries #1}]}{\end{trivlist}}


\newcommand*{\conj}{\ensuremath{^{*}}}
\newcommand*{\dagg}{\ensuremath{^{\dagger}}}

\DeclareMathOperator{\Tr}{tr}

\DeclareMathOperator{\sign}{sign}
\DeclareMathOperator{\Herm}{Herm}

\DeclareMathOperator{\poly}{poly}
\DeclareMathOperator{\BQP}{\mathsf{BQP}}
\DeclareMathOperator{\BPP}{\mathsf{BPP}}

\DeclareMathOperator{\SWAP}{SWAP}
\DeclareMathOperator{\FT}{FT}
\renewcommand{\exp}{\ensuremath{\mathrm{exp}}}

\newcommand{\longto}{\longrightarrow}


\providecommand{\he}{\ensuremath{\hat{e}}}



\providecommand{\to}{\ensuremath{\Tilde{o}}}

\providecommand{\tr}{\ensuremath{\Tilde{r}}}

\providecommand{\trho}{\ensuremath{\Tilde{\rho}}}


\providecommand{\tcalO}{\ensuremath{\Tilde{\mathcal{O}}}}


\providecommand{\calC}{\ensuremath{\mathcal{C}}}

\providecommand{\calH}{\ensuremath{\mathcal{H}}}

\providecommand{\calM}{\ensuremath{\mathcal{M}}}

\providecommand{\calO}{\ensuremath{\mathcal{O}}}

\providecommand{\calX}{\ensuremath{\mathcal{X}}}


\providecommand{\bbC}{\ensuremath{\mathbb{C}}}

\providecommand{\bbE}{\ensuremath{\mathbb{E}}}

\providecommand{\bbI}{\ensuremath{\mathbb{I}}}

\providecommand{\bbN}{\ensuremath{\mathbb{N}}}

\providecommand{\bbP}{\ensuremath{\mathbb{P}}}

\providecommand{\bbR}{\ensuremath{\mathbb{R}}}



\graphicspath{ {images/} }

\newcommand{\fu}{Dahlem Center for Complex Quantum Systems, Freie Universit\"{a}t Berlin, 14195 Berlin, Germany}
\newcommand{\hzb}{Helmholtz-Zentrum Berlin f{\"u}r Materialien und Energie, 14109 Berlin, Germany}
\newcommand{\hhi}{Fraunhofer Heinrich Hertz Institute, 10587 Berlin, Germany}
\newcommand{\aqa}{$\langle aQa^L\rangle $ Applied Quantum Algorithms, Universiteit Leiden}
\newcommand{\liacs}{LIACS, Universiteit Leiden, Niels Bohrweg 1, 2333 CA Leiden, Netherlands}

\newcommand{\stkout}[1]{\ifmmode\text{\st{\ensuremath{#1}}}\else\st{#1}\fi}
\newif\ifverbose



\verbosefalse

\begin{document}

\title{On the expressivity of embedding quantum kernels}
\date{\today}

\author{Elies Gil-Fuster}
\email{emgilfuster@gmail.com}
\affiliation{\fu}
\affiliation{\hhi}

\author{Jens Eisert}
\affiliation{\fu}
\affiliation{\hhi}
\affiliation{\hzb}

\author{Vedran Dunjko}
\affiliation{\aqa}
\affiliation{\liacs}

\begin{abstract}
    One of the most natural connections between quantum and classical machine learning has been established in the context of kernel methods.
    Kernel methods rely on kernels, which are inner products of feature vectors living in large feature spaces.
    Quantum kernels are typically evaluated by explicitly constructing quantum feature states and then taking their inner product, here called
    embedding quantum kernels.
    Since classical kernels are usually evaluated without using the feature vectors explicitly, we wonder how expressive embedding quantum kernels are.
    In this work, we raise the fundamental question: can all quantum kernels be expressed as the inner product of quantum feature states?
    Our first result is positive: Invoking computational universality, we find that for any kernel function there always exists a corresponding quantum feature map and an embedding quantum kernel. The more operational reading of the question is concerned with efficient constructions, however.
    In a second part, we formalize the question of universality of efficient embedding quantum kernels.
    For shift-invariant kernels, we use the technique of random Fourier features to show that they are universal within the broad class of all kernels which allow a variant of efficient Fourier sampling.
    We then extend this result to a new class of so-called composition kernels, which we show also contains projected quantum kernels introduced in recent works.
    After proving the universality of embedding quantum kernels for both shift-invariant and composition kernels, we identify the directions towards new, more exotic, and unexplored quantum kernel families, for which it still remains open whether they correspond to efficient embedding quantum kernels.
\end{abstract}

\maketitle


\section{Introduction}\label{s:intro}

    Quantum devices carry the promise of surpassing classical computers in certain computational tasks~\cite{shor1994algorithms, nielsen2000quantum, montanaro2016quantum, arute2019quantum, wu2021strong, hangleiter2022computational}.
    With machine learning playing a crucial role in predictive tasks based on training data, the question arises naturally to investigate to what extent quantum computers may assist in tackling \emph{machine 
    learning} (ML) tasks.
    Indeed, such tasks are among the potential applications foreseen for near-term and intermediate-term quantum devices~\cite{biamonte2017quantum,dunjko2018machine,carleo2019machine,schuld2021machine, cerezo2021variational, bharti2022noisy}.
    
    In the evolving field of \emph{quantum machine learning} (QML), researchers have explored the integration of quantum devices to enhance learning algorithms~\cite{schuld2017implementing, havlicek2019supervised, schuld2019quantum, benedetti2019generative, perezsalinas2020reuploading, lloyd2020quantum, hubregtsen2021training}.
    The most-studied approach to QML relies on learning models based on \emph{parametrized quantum circuits} (PQCs)~\cite{benedetti2019parametrized}, sometimes referred to as quantum neural networks.
    When considering learning tasks with classical input data, PQCs must embed data into quantum states.
    This way, PQCs are built from encoding and trainable parts, and real-valued outputs are extracted from measuring certain observables.
    Since the inception of the field, a strong parallelism has been drawn between PQC-based QML models and \emph{kernel methods}~\cite{schuld2019quantum, havlicek2019supervised, schuld2021kernels}.
    
    Kernel methods, like neural networks, have been used in ML for solving complex learning tasks.
    Yet, unlike neural networks, kernel methods reach the solution by solving a linear optimization task on a larger feature space, onto which input data is mapped.
    Consequently, the kernel approach is very well suited by our ability to map classical data onto the Hilbert space of quantum states.
    These maps are called \emph{quantum feature maps}, and they lead to \emph{quantum kernel methods}.
    Although kernel methods are more costly to implement than neural networks, they are guaranteed to produce optimal solutions.
    Much the same way, with quantum kernel methods, we are guaranteed to find better solutions than with other PQC-based models
    (where \enquote{better} means solutions which perform better on the training set;
    see Ref.~\cite{jerbi2023beyond} for a discussion on when this guarantee is not enough to ensure a learning advantage).

    For quantum kernel methods, plenty of knowledge is inherited from classical ML -- including kernel selection tools~\cite{hubregtsen2021training, altares2021automatic}, optimal solution guarantees~\cite{schuld2021kernels, jerbi2023beyond}, generalization bounds~\cite{gyurik2021structural}, and approximation protocols~\cite{landman2022classically, shin2023analyzing, Sweke2023Potential}.
    Nevertheless, there is one large difference between quantum and classical kernel methods, and namely one that affects the cornerstone of these techniques: the \emph{kernel function} (or just \emph{kernel}).
    Formally, all kernels correspond to the inner product of a pair of feature maps.
    Yet, first constructing the feature vector and second evaluating the inner product is often inefficient.
    Fortunately, many cases are known in which the inner product can be evaluated efficiently, by means other than constructing the feature map explicitly.
    For instance the Gaussian kernel is a prominent instance of this case.
    This is sometimes called \enquote{the kernel trick}, and as a result it is often the case that practitioners do not even specify the feature vectors when using kernel methods.
    In contrast to this, it is fair to say that quantum kernels hardly ever use this trick, with some exceptions~\cite{huang2021power, suzuki2022quantum}.

    \begin{figure*}[t]
        \centering
        \includegraphics{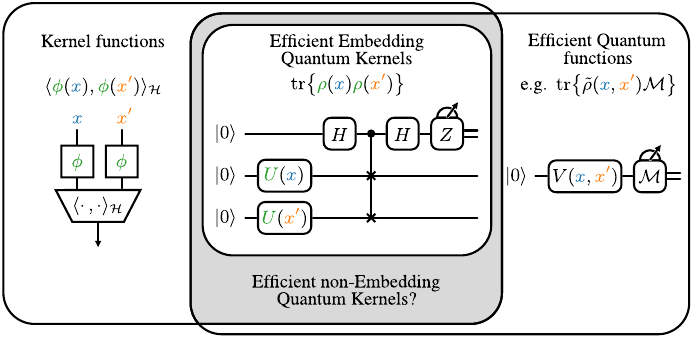}
        \caption{Illustration of the main question of this paper.
        Embedding Quantum Kernels (EQKs) have the form of an explicit inner product on the Hilbert space of quantum density matrices, which is evaluated using a quantum circuit.
        The box \enquote{Kernel functions} indicates that EQKs correspond to an inner product of feature vectors on a Hilbert space.
        The box \enquote{Efficient Quantum functions} restricts EQKs to functions that can be evaluated using a quantum computer in polynomial time, for instance these would include preparing a state-dependent state $\trho(x,x')$ and then measuring the expectation value of an observable $\calM$ on the data-dependent state.
        The box \enquote{Efficient Embedding Quantum Kernels} then clearly lives in the intersection of the two other boxes.
        The question we address here is then whether EQKs do cover the whole intersection.
        Said otherwise, can every efficient quantum kernel function be expressed as efficient EQKs?
        Or, on the contrary, do there exist efficient quantum kernels which are not expressible as efficient EQKs?
        }
        \label{fig:nEQK}
    \end{figure*}
    
    Almost all quantum kernels conceived in the literature are constructed explicitly from a quantum feature map, or \emph{quantum embedding}~\cite{mengoni2019kernel, schuld2021kernels},
    as discussed below.
    Specifically, one considers as quantum kernel $\kappa$ the inner product $\kappa(x,x')\coloneqq\langle\psi(x),\psi(x')\rangle$, where $x\mapsto \psi(x)$ is a representation of a quantum state, either a state vector or a density operator, and $\langle\cdot,\cdot\rangle$ is the appropriate inner product.
    In particular, quantum embeddings map classical data onto the Hilbert space of quantum states, or said otherwise, on the Hilbert space of quantum computations.
    We call \emph{embedding quantum kernels} (EQKs) the kernels which come from quantum embeddings.

    This difference between quantum and classical kernels raises some interesting questions, as for example:
    \emph{
        Are EQKs the whole story for quantum kernel methods?
        Can all quantum kernels be expressed as EQKs?
    }  

    In this manuscript we analyze what families of kernels are already covered by EQKs, see Fig.~\ref{fig:nEQK}.
    Our contributions are the following:
    \begin{enumerate}
        \item We show that all kernels can be seen as EQKs, thus proving their universality, when no considerations on efficiency are made.
        \item We formalize the question of expressivity of \emph{efficient} EQKs.
        We immediately provide a partial answer restricted to \emph{shift-invariant kernels}.
        We show that efficient EQKs are universal within the class of shift-invariant kernels, and we provide sufficient conditions for an EQK approximation to be produced efficiently in time.
        \item We introduce a new class of kernels, called composition kernels, containing also non-shift-invariant kernels.
        We prove that efficient EQKs are universal in the class of efficient composition kernels, from where we can show that the \emph{projected quantum kernel} from Ref.~\cite{huang2021power} can in fact also be realized as an EQK efficiently.
    \end{enumerate}
    In all, we unveil the universality of EQKs in two important function domains.

    The rest of this work is organized as follows.
    The mathematical background and relevant definitions appear in Section~\ref{s:preliminaries}.
    Related and prior work is elucidated in Section~\ref{s:related}.
    Next, we prove the universality of embedding quantum kernels and formally state Question~\ref{Q:existence_neqk} in Section~\ref{s:universality}.
    Our results on shift-invariant kernels are in Section~\ref{s:shift-invariant}, and the extension to composition kernels and the projected quantum kernel in Section~\ref{s:composition}.
    Finally, Section~\ref{s:outlook} contains a collection of questions left open.
    A summary of the manuscript and closing remarks constitute Section~\ref{s:conclusion}.


\section{Preliminaries}\label{s:preliminaries}

    In this section we fix notation and introduce the necessary bits of 
    mathematics on quantum kernel methods.

\subsection{Notation}

    For a vector $v=(v_i)_i\in\bbR^m$,  we call the $1$-norm
    (or $\ell_1$-norm)
    $\lVert v\rVert_1 = \sum_{i=1}^m \lvert v_i\rvert$ and the $2$-norm $\lVert v\rVert_2 = \sqrt{\sum_{i=1}^m v_i^2}$.
    We denote as $\ell_1^m$ the set of $m$-dimensional unit vectors with respect to the $1$-norm, and similarly $\ell_2^m$ the set of vectors that are normalized to have a unit $2$-norm.

    For a Hilbert space $\calH$, we use $\langle\cdot,\cdot\rangle_\calH$ to denote the inner product on that space.
    In the case of Euclidean spaces, we drop the subscript.
    Let $A,B\in\bbC^{m\times m}$ be square complex matrices.
    Then, we call the \emph{Hilbert-Schmidt inner product} of $A$ and $B$
    \begin{align}
        \langle A, B\rangle_{\text{HS}} &= \Tr\left\{A\dagg B\right\} = \sum_{i,j=1}^m A_{i,j}\conj B_{i,j}.
    \end{align}
    For Hermitian matrices, we have $A=A\dagg$, and so the HS inner product becomes just
    \begin{align}
        \langle A, B\rangle_{\text{HS}} &= \Tr\left\{AB\right\} = \sum_{i,j=1}^m A_{j,i} B_{i,j}.
    \end{align}
    The \emph{Frobenius norm} $\lVert\cdot\rVert_F$ of a matrix $A$ is the square root of the sum of the magnitude square of all its entries, which is also equal to the root of the Hilbert-Schmidt inner product of the matrix with itself, as
    \begin{align}
        \lVert A\rVert_F^2 &= 
        {\langle A, A\rangle_{\text{HS}}} = 
        {\sum_{i,j} \lvert A_{i,j}\rvert^2}.
    \end{align}
    In what follows, we call $\calX\subseteq\bbR^d$ a $d$-dimensional compact subset of the reals.
    We reserve $n$ to denote qubit numbers.
    As we explain below, we use $k$ to refer to arbitrary kernel functions, while $\kappa$ is used exclusively for embedding quantum kernels.

    When talking about efficient and inefficient approximations, we consider sequences of functions $\{k_s\}_{s\in\bbN}$.
    We refer to $s$ as the \emph{scale parameter}.
    Scale parameters can correspond to different qualities of the sequence, as for example the dimension of the input data, or the number of qubits involved in evaluating a function.
    Efficiency then means at most polynomial scaling in $s$, and inefficiency means at least exponential scaling in $s$, hence the name scale parameter.
 
\subsection{Kernel methods}\label{ss:kernels}

    Kernel methods solve ML tasks as linear optimization problems on large feature spaces, sometimes implicitly.
    The connection between the input data and the feature space comes from the use of a kernel function.
    In this work we do not busy ourselves with how the solution is found, but rather we focus on our ability to evaluate kernel functions, which are defined as follows.
    
    \begin{definition}[\bf Kernel function]\label{def:kernel}
        A kernel function $k\colon\calX\times\calX\to\bbR$ is a map from pairs of inputs on the reals fulfilling two properties:
        \begin{enumerate}
            \item Symmetry under exchange: $k(x,x')=k(x',x)$ for every $x,x'\in\calX$.
            \item Positive semi-definite (PSD): for any integer $m\in\bbN$, for any sets $\{x_1, \ldots, x_m\}\subseteq\calX$ and $\{a_1, \ldots, a_m\}\subseteq\bbR$ of size $m$, it holds
                \begin{align}
                    \sum_{i,j=1}^m a_i a_j k(x_i,x_j)\geq0.
                \end{align}
                This is equivalent to saying that the \emph{Gram matrix} $K\coloneqq[k(x_i,x_j)]_{i,j=1}^m$ is positive semi-definite for any $m$ and any $\{x_i\}_{i=1}^m$.
        \end{enumerate}
    \end{definition}

    Other standard definitions exclude the PSD property.
    In this work we do not study indefinite, non-PSD kernels, although the topic is certainly of interest.
    The common optimization algorithms used in kernel methods (SVM, KRR) require kernels to be PSD.
    Even though we said we do not deal with the optimization part in this manuscript, we study only the kernels that would be used in a (Q)ML context.

    Symmetry and PSD are properties usually linked to inner products, which partly justifies our definition of a Gram matrix as the one built from evaluating the kernel function on pairs of inputs.
    Indeed, by Mercer's Theorem (detailed in Appendix~A) for any kernel function $k$, there exists a Hilbert space $\calH$ and a \emph{feature map} $\phi\colon\calX\to\calH$ such that, for every pair of inputs $x,x'\in\calX$, it holds that evaluating the kernel is equivalent to computing the $\calH$-inner product of pairs of \emph{feature vectors} $k(x,x') = \langle\phi(x),\phi(x')\rangle_\calH$.
    We say every kernel $k$ has an associated feature map $\phi$, which turns each datum $x$ into a feature vector $\phi(x)$, living in a \emph{feature space} $\calH$.
    
    One notable remark is that different kernel functions have different feature maps and feature spaces associated to them.
    Each learning task comes with a specific data distribution.
    The ultimate goal is to, given the data distribution, find a map onto a feature space where the problem becomes solvable by a linear model.
    The \emph{model selection} challenge for kernel methods is to find a kernel function whose associated feature map and feature space fulfill this linear separation condition.
    Intuitively, in a classification task, we would want data from the same class to be mapped to the same corner of Hilbert space, and data from different classes to be mapped far away from one another.
    Then our project can be framed within the \emph{quantum kernel selection} problem, we want to identify previously unexplored classes of quantum kernels.

\subsection{Embedding quantum kernels}\label{ss:EQKs}

    In the following, we introduce the relevant concepts in quantum kernel methods.
    Our presentation largely follows the lines of Ref.~\cite{schuld2021kernels}.
    We use the name \enquote{embedding quantum kernels} following Ref.~\cite{hubregtsen2021training}; many other works use simply \enquote{quantum kernels} when referring to the same concept.
    Hopefully our motives are clear by now.
    
    Learning models based on PQCs are the workhorses in much of today's approaches to QML.
    With some notable exceptions, these PQC-based models comprise a two-step process:
    \begin{enumerate}
        \item Prepare a data-dependent quantum state. For every $x\in\calX$, produce $\lvert \phi(x)\rangle = U(x)\lvert0\rangle$ with some data-embedding unitary $U(x)$.
        \item Measure the expected value of a variationally-tunable observable on the data-dependent state. Given a parametrized observable $\calM(\vartheta)$, evaluate
        \begin{align}
            \langle\calM(\vartheta)\rangle_{\phi(x)} &= \Tr\left\{\lvert \phi(x)\rangle\!\langle \phi(x)\rvert \calM(\vartheta)\right\}.
        \end{align}
    \end{enumerate}
    For instance, for binary classification, one would then consider labeling functions like $h(x) = \sign\left(\langle\calM(\vartheta)\rangle_{\phi(x)} + b\right)$ and optimize the variational parameters $\vartheta$ and $b$.

    Examples of how $U(x)$ could look like in practice include \emph{amplitude encoding}~\cite{schuld2021kernels}, different \emph{data re-uploading} encoding strategies~\cite{schuld2021fourier, perezsalinas2020reuploading,caro2021encoding}, or the \emph{IQP ansatz}~\cite{havlicek2019supervised}.
    In turn, variationally-tunable observables are usually realized as a fixed \enquote{easy} observable (like a single Pauli operator), preceded by a few layers of brickwork-like $2$-local trainable gates.
    In our current presentation, we do not restrict a particular form for $U(x)$ or $\calM(\vartheta)$, but rather allow for any general form possible, as long as it can be implemented on a quantum computer.
    
    This approach can also be understood through the lens of kernel methods.
    In this case, the feature space is explicitly chosen to be the Hilbert space of Hermitian matrices (of which density matrices are a subset, so also quantum states).
    Given a data-dependent state preparation $x\mapsto\lvert \phi(x)\rangle$, one could choose to call it a \emph{quantum feature map} $\rho$ and promote it to quantum density operators, or quantum states, as
    \begin{align}
        \rho\colon\calX \to & 
        \Herm , \\
        x \mapsto & \rho(x) = \lvert \phi(x)\rangle\!\langle \phi(x)\rvert.
    \end{align}
    Another fitting name would be \emph{quantum feature state}.
    In a slightly more general view, we call \emph{data embedding} any map from classical data onto quantum density matrices of fixed dimension $\rho\colon\calX\to\Herm(2^n)$ for $n$-qubit systems.
    With this we abandon the need for a unitary gate applied to the $\lvert0\rangle$ state vector.

    Together with the Hilbert-Schmidt inner product, quantum feature maps give rise to an important family of kernel functions:
    
    \begin{definition}[Embedding quantum kernel (EQK)]\label{def:EQK}
        Given a data embedding $x\mapsto\rho(x)$ used to encode classical data $x$ in a quantum density operator $\rho(x)$, we call \emph{embedding quantum kernel (EQK)} $\kappa_\rho$ the Hilbert-Schmidt inner product of pairs of quantum feature vectors:
        \begin{align}
            \kappa_\rho \colon \calX\times\calX \to &\bbR ,\\
            (x,x') \mapsto & \kappa_\rho(x,x')  \coloneqq \Tr\left\{\rho(x)\rho(x')\right\}.
        \end{align}
        Fig.~\ref{fig:EQK} illustrates this construction.
    \end{definition}

    \begin{figure}
        \centering
        \includegraphics{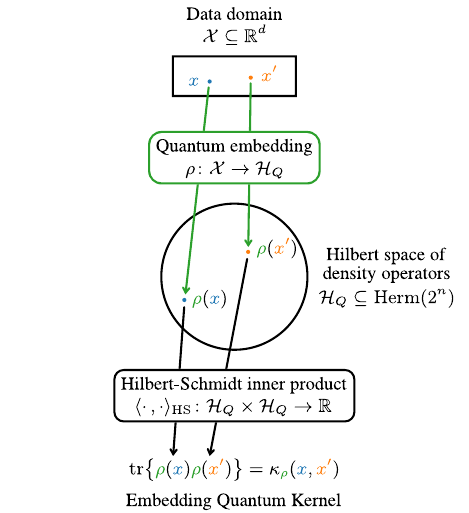}
        \caption{Schematic of the different ingredients that form embedding quantum kernels.
            The data input is mapped onto the \enquote{quantum feature space} of \emph{quantum density operators} via a \emph{quantum embedding}.
            There the \emph{Embedding Quantum Kernel} is defined as the \emph{Hilbert-Schmidt inner product} of pairs of quantum features.
        }
        \label{fig:EQK}
    \end{figure}

    Since $\kappa_\rho$ is defined explicitly as an inner product for any $\rho$, it follows that it is a PSD, symmetric function.
    For ease of notation, we will write just $\kappa$ whenever $\rho$ is unimportant or clear from context.
    There are, from the outset, a few good reasons to consider EQKs as QML models:
    \begin{enumerate}
        \item It is 
        possible to construct EQK families that are not classically estimable unless $\BQP=\BPP$, thus opening the door to quantum advantages \cite{Liu2021discretelog}.
        \item A core necessity for successful kernel methods is to map complex data to high-dimensional spaces where feature vectors become linearly separable.
            The Hilbert space of quantum states thus becomes a prime candidate due to its exponential dimension, together with our ability to estimate inner products efficiently\footnote{However, high dimensions are related to overfitting, so as usual there is a trade-off.}.
        \item Since EQKs are explicitly defined from the data embeddings, we are free to design embeddings with specific desired properties.
    \end{enumerate}
    And, a priori, other than the shot noise\footnote{It is known that the finite approximation to the expectation values can break down PSD.}, EQKs do not add drawbacks to the list of general issues for kernel methods.
    So, EQKs are a well-founded family of quantum kernel functions.
    Nevertheless, their reliance on a specific data embedding could be a limiting factor \emph{a priori}.
    
    In the case of classical computations we are not accustomed to think in terms of the Hilbert space of the computation (even though it does exist).
    There are numerous examples of kernels which have been designed without focusing on the feature map.
    The most prominent example of a kernel function used in ML is the \emph{radial basis function (RBF) Gaussian kernel} (or just Gaussian kernel, for short),
    given by
    \begin{align}
        k_\sigma(x,x') &= \exp\left(-\frac{\lVert x-x'\rVert^2}{2\sigma^2}\right).
    \end{align}
    On the one hand, we know the Gaussian kernel is a PSD function by virtue of Bochner's theorem (stated in Section~\ref{s:shift-invariant} as Theorem~\ref{thm:Bochner}), which explains it via the \emph{Fourier transform} (FT).
    On the other hand, though, one can prove that the Gaussian kernel corresponds to an inner product in the Hilbert space of all monomials of the components of $x$, from where it follows that the Gaussian kernel can learn any continuous function on a compact domain Ref.~\cite{micchelli2006universal}, provided enough data is given.
    The feature map corresponding to the Gaussian kernel is infinite-dimensional.
    This fact alone makes the Gaussian kernel not immediately identifiable as a reasonable EQK, which motivates the question:
    \emph{can all quantum kernels be expressed as embedding quantum kernels?}
    
    In this section, we have  introduced kernel functions, embedding quantum kernels, and the question whether EQKs are all  there is to quantum kernels.
    In the following sections, we first see that EQKs are universal, then we formalize a question about the expressivity of efficient EQKs, and finally we answer the question by proving universality of EQKs in two restricted kernel families.


\section{Related work}\label{s:related}

    An introduction to quantum kernel methods can be found in the note~\cite{schuld2021kernels} and in the review~\cite{mengoni2019kernel}.
    We refrain from including here a full compendium of all quantum kernel works to date, as good references can already be found in Ref.~\cite{bharti2022noisy}.
    Instead, we make an informed selection of papers that bear relation with the object of our study: which quantum kernels are \emph{embedding quantum kernels} (EQKs).

    Kernel methods use different optimization algorithms depending on each task, the most prominent examples being \emph{support vector machine} and \emph{kernel ridge regression}.
    The early years of QML owed much of their activity to the HHL algorithm~\cite{harrow2009quantum}.
    The quantum speed-up for linear algebra tasks was leveraged to propose a \emph{quantum support vector machine} (qSVM) algorithm~\cite{rebentrost2014quantum}.
    The qSVM algorithm is listed among the first steps of QML historically.
    Since qSVM is a quantum application for kernel methods, it is no wonder that the term quantum kernel methods was introduced for concepts around it.
    This, however, is not what we mean by quantum kernel methods in this work.
    Instead, we occupy ourselves with kernel methods where the kernel function itself requires a quantum computer to be evaluated, independently of the nature of the optimization algorithm.
    The optimization step comes only after the kernel function has been evaluated on the training data, so the object of study of qSVM is not the same as in this manuscript.
    We study the expressivity of a known kind of kernel functions, and not how to speed up the optimization of otherwise classical algorithms.

    Among the first references mentioning the evaluation of kernel functions with quantum computers is Ref.~\cite{schuld2017implementing}, where the differences between quantum kernels and qSVM have been made explicit. References~\cite{schuld2019quantum,havlicek2019supervised} have showcased the link between quantum kernels and PQCs and have demonstrated their implementation experimentally.
    In them, the authors have mentioned the parallelisms between quantum feature maps and the kernel trick.
    In particular, Ref.~\cite{schuld2019quantum} has coined the distinction between an \emph{implicit quantum model} (quantum kernel method), and an \emph{explicit quantum model} (PQC model with gradient-based optimization).
    Implicit models are in this way an analogous name for EQKs, where emphasis is made on the distinction from other PQC-based models.

    Quantum kernels have enjoyed increasing attention since they have been used to prove an advantage of QML over classical ML in Ref.~\cite{Liu2021discretelog}.
    The authors have morphed the discrete logarithm problem into a learning task and then used a quantum kernel to solve it efficiently, which cannot be done classically according to well-established cryptographic assumptions.
    In that, the approach taken is similar to that of
    Ref.~\cite{sweke2021quantum} (showing a quantum advantage in
    distribution learning).
    As such, this has been among the first demonstrations of quantum advantage in ML, albeit in an artificially constructed learning task.
    Importantly, the quantum kernel used in this work has explicitly been constructed from a quantum embedding, 
    so it is an EQK in the sense of this work.

    One important difference between EQKs and other PQC-based approaches is that, for EQKs, the only design choice is the data embedding itself.
    Ref.~\cite{lloyd2020quantum} has wondered how to construct optimal quantum feature maps using measurement theory.
    The work has proposed constructing embeddings specific to learning tasks, which resonates with the idea that some feature maps are better than others for practically relevant problems.
    Ref.~\cite{hubregtsen2021training}, where the term EQKs has been  introduced for the first time, has presented the possibility of optimizing the feature map variationally, drawing bridges to ideas from \emph{data re-uploading}~\cite{perezsalinas2020reuploading,schuld2021fourier}.
    Other than the trainable kernels of Ref.~\cite{hubregtsen2021training}, the data re-uploading framework did not fit  with the established quantum kernel picture up until this point in time.
    
    In Ref.~\cite{jerbi2023beyond}, the differences between explicit models, implicit models, and re-uploading models have been analyzed.
    On the one hand, the authors have found that the optimality of kernel methods might not be optimal enough, since they show a learning task in which kernels perform much worse than explicit models when evaluated on data outside the training set.
    On the other hand, a rewriting algorithm has been devised to convert re-uploading circuits into equivalent encoding-first circuits, so re-uploading models are explicit models.
    By construction, each explicit model has a quantum kernel associated to it.
    So, for the first time, using the rewriting via explicit models, Ref.~\cite{jerbi2023beyond} made data re-uploading models fit in the quantum kernel framework.
    This reference takes the explicit versus implicit distinction from~\cite{schuld2019quantum}, which means that it also only considers EQKs when it comes to kernels.

    Some examples of different kinds of quantum kernels can be found in Refs.~\cite{huang2021power, suzuki2022quantum}, which upon first glance do not resemble the previous proposals.
    In Ref.~\cite{huang2021power} a type of kernel functions based on the classical shadows protocol has been proposed, under the name of \emph{projected quantum kernel}.
    These functions have been analyzed in the context of a quantum-classical learning separation, now considering real-world data, as opposed to the results of Ref.~\cite{Liu2021discretelog}.
    Even though the projected quantum kernel uses an explicit feature map which requires a quantum computer, the feature vectors are only polynomial in size, and they are stored in classical memory.
    Next, the kernel function is the Gaussian kernel evaluated on pairs of feature vectors, and not their Euclidean or Hilbert-Schmidt inner product.
    These two differences set the projected quantum kernel apart from the rest.
    In turn, the authors of Ref.~\cite{suzuki2022quantum} set out to address a looming issue for EQKs called \emph{exponential kernel concentration}~\cite{thanasilp2022exponential}, or \emph{vanishing similarity}, which is tantamount to the \emph{barren plateau problem}~\cite{mcclean2018barren} that could arise also for kernels.
    They show that the new construction, called \enquote{anti-symmetric logarithmic derivative quantum Fisher kernel}, 
    does not suffer from the vanishing similarity problem.
    Here, the classical input data is mapped onto an exponentially large feature space.
    Given a PQC, classical inputs are mapped to a long array, with as many entries as trainable parameters in the PQC.
    Each entry in this array is the product of the unitary matrix implemented by the PQC and its derivatives with respect to each variational parameter.
    Interestingly, this kernel can be seen as the Euclidean inner product of a flattened vector of unitaries, with a metric induced by the initial quantum state.
    In this way, classical data is not mapped onto the Hilbert space of quantum states, but the inner product used is the same as in regular EQKs.
    This realization paves the way to expressing these kernels as EQKs.

    Recent efforts in de-quantizing PQC-based models via \emph{classical surrogates}~\cite{schreiber2022classical} have touched upon quantum kernels.
    In Refs.~\cite{landman2022classically, shin2023analyzing, Sweke2023Potential}, the authors propose using a classical kernel-approximation protocol based on \emph{random Fourier features} (RFF) to furnish classical learning models capable of approximating the performance of PQC-based architectures.
    The techniques used in these works are not only very promising for de-quantization of PQC-based learning models, but also they are relevant to our discussion.
    Below, we also use the RFF approach, albeit in quite a different way, as it is not our goal to find classical approximations of quantum functions, but rather quantum approximations of other quantum functions.
    The goal of the study in Refs.~\cite{landman2022classically, shin2023analyzing, Sweke2023Potential} is to, given a PQC-based model, construct a classical kernel model with guarantees that they are similarly powerful.
    In this way, the input to the algorithms is a PQC (either encoding-first, or data re-uploading), and the output is a classical kernel.
    Conversely, in our algorithms, the input is a kernel function, and the output is an EQK-based approximation of the same function.

    In Ref.~\cite{yamasaki2020learning}, we find an earlier study of RFFs in a QML scenario.
    The authors focus on a more advanced version of RFFs, called \emph{optimized Fourier Features}, which involves sampling from a data-dependent distribution.
    In the classical literature, it was found that sampling from this distribution could be \enquote{hard in practice}~\cite{bach2017equivalence}, so the authors of Ref.~\cite{yamasaki2020learning} set out to propose an efficient quantum algorithm for sampling.
    This way, a quantum algorithm is proposed to speed up a training algorithm for a classical learning architecture, similar to the case of qSVM~\cite{rebentrost2014quantum}.

    Similarly, Ref.~\cite{nakaji2022deterministic} considered combining notions from EQKs, the projected quantum kernel, and their respective approximations with the RFF algorithm.
    The authors explored the options of combining distance-based and inner-product-based kernels in an attempt to address the vanishing similarity problem.
    In turn, this paper tackles the suboptimal scaling of kernel methods by making use of RFFs.
    Some of the techniques introduced in this reference are very much in line with the composition kernels we introduce in Section~\ref{s:composition}.
    The main similarity between our work and Ref.~\cite{nakaji2022deterministic} is the identification of feature maps for different quantum kernels, including the use of RFFs and the study of the projected quantum kernel.
    The main difference is the perspective in which these objects are studied.
    In Ref.~\cite{nakaji2022deterministic} the authors look for kernel constructions that have advantageous properties when it comes to solving learning tasks.
    Conversely, in this work we want to establish what are the ultimate limits in expressivity of using kernel functions based on quantum feature maps.


\section{The universality of quantum feature maps}\label{s:universality}  
    
    In the previous section we saw how we can define quantum kernel functions explicitly from a given data embedding.
    This section, together with the following two sections, contains the main results of this manuscript.
    The statements shift around different notions of efficiency and different classes of kernel functions.
    In Fig.~\ref{fig:venn_diagram} we provide a small sketch of how each of our results relates to one another from a zoomed-out perspective.
    The leitmotiv is that we can always restrict further what restrictions we are satisfied with when concocting EQK-based approximations of kernels.
    We find all kernels can be approximated as EQKs if our only restriction is to use finitely many quantum resources.
    But then, we specialize the search for kernels which can be approximated efficiently as EQKs, with a distinction between space-efficiency (number of qubits required) and total run-time efficiency.
    It should also be said that when we talk about time efficiency we always consider \enquote{quantum time}, so we assume we are always allowed access to a quantum computer.
    This way, the analysis from now on departs from the usual quantum-classic separation mindset.
    
    From the outset, two basic results combined, namely Mercer's feature space construction (elaborated in Appendix~A), together with the universality of quantum circuits, already certify the possibility of all kernel functions to be realized as EQKs.
    If we demanded mathematical equality, Mercer's construction could require infinite-dimensional Hilbert spaces, and thus quantum computers with \emph{infinitely many qubits}.
    Instead, for practical purposes, from now on we do not talk about \enquote{evaluating} functions in one or another way, but rather about \enquote{approximating} them to some precision.
    With this, Theorem~\ref{thm:approx_universality} confirms that we can always approximate any kernel function as an EQK to arbitrary precision \emph{with finitely many resources}.
    
    For this universality statement, we allow for extra multiplicative and additive factors.
    Instead of talking about $\varepsilon$-approximation as $\lvert k(x,x')-\Tr\{\rho_n(x)\rho_n(x')\}\rvert<\varepsilon$, we consider $ \left\lvert k(x,x') - 2^n \Tr\{\rho_n(x)\rho_n(x')\} + 1\right\rvert < \varepsilon$.
    These extra factors come from Lemma~\ref{l:innerprodqs}, introduced right below, and they do not represent an obstacle against universality.
    
    \begin{theorem}[Approximate universality of finite-dimensional quantum feature maps]\label{thm:approx_universality}
        Let $k\colon\calX\times\calX\to\bbR$ be a kernel function.
        Then, for any $\varepsilon>0$ there exists $n\in\bbN$ and a data embedding $\rho_n$ onto the Hilbert space of quantum states of $n$ qubits such that
        \begin{align}\label{eq:universality}
            \left\lvert k(x,x') - 2^n \Tr\{\rho_n(x)\rho_n(x')\} + 1\right\rvert < \varepsilon
        \end{align}
        for almost all $x,x'\in\calX$.
    \end{theorem}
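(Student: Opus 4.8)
The plan is to combine Mercer's theorem with a finite-dimensional truncation and then call on Lemma~\ref{l:innerprodqs} to turn a truncated Euclidean inner product into the Hilbert--Schmidt inner product of genuine quantum states. First I would invoke Mercer's theorem (Appendix~\ref{a:mercer}) to write the kernel as $k(x,x')=\sum_{i=1}^\infty\lambda_i e_i(x)e_i(x')=\langle\phi(x),\phi(x')\rangle_{\ell_2}$, with feature map $\phi(x)=(\sqrt{\lambda_i}\,e_i(x))_{i\in\bbN}$, nonnegative eigenvalues $\lambda_i$ summing to $\int_\calX k(x,x)\,\diff x<\infty$, and orthonormal eigenfunctions $e_i$. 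Because the diagonal $\sum_i\lambda_i e_i(x)^2=k(x,x)$ is finite for almost every $x\in\calX$, truncating the expansion at level $N$ and setting $\phi_N(x)=(\sqrt{\lambda_i}\,e_i(x))_{i=1}^N$ yields a bounded, finite-dimensional feature map whose kernel $k_N(x,x')=\langle\phi_N(x),\phi_N(x')\rangle$ approximates $k$.

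Second, I would control the truncation error in the almost-everywhere sense that the statement demands. By Cauchy--Schwarz the tail obeys $|k(x,x')-k_N(x,x')|\le\big(\sum_{i>N}\lambda_i e_i(x)^2\big)^{1/2}\big(\sum_{i>N}\lambda_i e_i(x')^2\big)^{1/2}$, and each factor is the tail of a convergent nonnegative series that vanishes as $N\to\infty$ for almost all $x$. Hence for $N$ large enough $|k-k_N|<\varepsilon$ off a null set of pairs. This pointwise-a.e. (rather than uniform) convergence is exactly why the conclusion is phrased for \emph{almost all} $x,x'$ rather than for all of them, unless one imposes extra regularity on $k$ that would upgrade Mercer convergence to uniform.

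Finally, I would apply Lemma~\ref{l:innerprodqs} to the map $\phi_N$. The lemma provides a qubit number $n$ and a data embedding $\rho_n\colon\calX\to\Herm(2^n)$ into legitimate density operators such that the Euclidean inner product is reproduced exactly as $\langle\phi_N(x),\phi_N(x')\rangle=2^n\Tr\{\rho_n(x)\rho_n(x')\}-1$; the factors $2^n$ and $-1$ are precisely the artifacts of encoding the feature vectors as small perturbations of the maximally mixed state so that positivity and unit trace are preserved, while the universality of quantum circuits guarantees such an embedding is realizable. Chaining the two estimates by the triangle inequality gives $|k(x,x')-2^n\Tr\{\rho_n(x)\rho_n(x')\}+1|=|k(x,x')-k_N(x,x')|<\varepsilon$ for almost all $x,x'$, as claimed. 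I expect the main obstacle to be the second step: guaranteeing that finitely many Mercer terms suffice \emph{almost everywhere} and quantifying $N$ from the spectral tail, whereas the positivity and normalization of the embedding are handled cleanly by the lemma at the cost of the stated multiplicative and additive factors.
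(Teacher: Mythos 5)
Your overall route coincides with the paper's: a Mercer expansion, a finite-dimensional truncation $\phi_N$, and then Lemma~\ref{l:innerprodqs} to realize the truncated Euclidean inner product as $2^n\Tr\{\rho_n(x)\rho_n(x')\}-1$, with the triangle inequality chaining the two errors. The paper simply packages your first two steps as Corollary~\ref{cor:finite_mercer} and cites it, whereas you attempt to re-derive the truncation bound by hand.

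That hand-derived step is where your argument has a genuine gap. Cauchy--Schwarz plus the observation that the tail $\sum_{i>N}\lambda_i e_i(x)^2$ vanishes as $N\to\infty$ for almost every \emph{fixed} $x$ gives only pointwise-a.e.\ convergence; it does not yield a single $N$ for which $\lvert k-k_N\rvert<\varepsilon$ off a null set of pairs, because the threshold $N(x)$ may be unbounded over the good set (Egorov's theorem would give uniformity only off a set of small \emph{positive} measure, not a null set). Since the theorem asserts one finite $n$ and one embedding valid for almost all $(x,x')$, you need the stronger clause in statement 2 of Theorem~\ref{thm:mercer}: when $N_\calH=\infty$ the series converges absolutely and \emph{uniformly} for almost every $(x,x')$, which is exactly what Corollary~\ref{cor:finite_mercer} invokes; replacing your tail estimate by that citation closes the gap (your own closing remark correctly flags this step as the obstacle). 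A secondary point: Lemma~\ref{l:innerprodqs} applies only to vectors of unit $1$-norm, so it does not by itself \enquote{handle positivity and normalization} for $\phi_N$; you must either renormalize, which introduces the $x$-dependent multiplicative factor of Lemma~\ref{l:2normQRFF} that the theorem statement does not accommodate, or assume, as the paper does without loss of generality, that $\lVert\phi_N(x)\rVert_1=1$ before feeding the features into Algorithm~\ref{alg:QEPIP} --- a hypothesis you should state explicitly.
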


    The statement that Eq.~\eqref{eq:universality} holds for \emph{almost all $x,x'\in\calX$} comes from measure theory, and it is a synonym of \enquote{except in sets of measure $0$}, or equivalently \enquote{with probability $1$}.
    That means that although there might exist individual adversarial instances of $x,x'\in\calX$ for which the inequality does not hold, these \enquote{bad} instances are sparse enough that the event of drawing them from the relevant probability distribution has associated probability $0$.
    
    Theorem~\ref{thm:approx_universality} says every kernel function can be approximated as an EQK up to a multiplicative and an additive factor using finitely-many qubits.
    Before we give the theorem proof, we introduce the useful Algorithm~\ref{alg:QEPIP} to map classical vectors to quantum states that we can then use to evaluate Euclidean inner products as EQKs.
    Then, Lemma~\ref{l:mappingqs} contains the correctness statement and runtime complexity of Algorithm~\ref{alg:QEPIP}, and Lemma~\ref{l:innerprodqs} shows the relation between the Euclidean inner product of the encoded real vectors and the Hilbert-Schmidt inner product of the encoding quantum states.
    
    \begin{algorithm}[H]
        \caption{Classical to quantum embedding, $\mathtt{C2QE}$}
        \label{alg:QEPIP}
        \begin{algorithmic}[1]
            \Require A $1$-norm unit vector $r\in\ell_1^d$.
            \Ensure A quantum state $\rho_r\propto\bbI+\sum_{i=1}^d r_iP_i$. \Comment See Lemma~\ref{l:mappingqs}.
            \State Set $n=\lceil\log_4(d+1)\rceil$.
            \State Pad $r$ with $0$s until it has length $4^n-1$.
            \State\label{algline:sample_i} Draw $i\in\{1,\ldots,4^n-1\}$ with probability $\lvert r_i\rvert$.
            \State Prepare $\rho_i = \frac{1}{2^n}\left(\bbI+\sign(r_i)P_i\right)$.
            \State\Return $\rho_i$.
        \end{algorithmic}
    \end{algorithm}

    Notice that the output of Algorithm~\ref{alg:QEPIP}, $\frac{1}{2^n}\left(\bbI \pm P\right)$, is a single (pure) eigenstate of a Pauli operator $P$ with eigenvalue $\pm1$.
    Nevertheless, as Line~\ref{algline:sample_i} involves drawing an individual index $i\in\{1,\ldots,4^n-1\}$, we see Algorithm~\ref{alg:QEPIP} as a random algorithm, which prepares a mixed state as a classical mixture of pure states.
    
    \begin{lemma}[Correctness and runtime of Algorithm~\ref{alg:QEPIP}]\label{l:mappingqs}
        Let $r\in\ell_1^d\subseteq\bbR^d$ be a unit vector with respect to the $1$-norm, $\lVert r\rVert_1=1$.
        Take $n=\lceil\log_4(d+1)\rceil$ and pad $r$ with $0$s until it has length $4^n-1$.
        Let $(P_i)_{i=1}^{4^n-1}$ be the set of all Pauli matrices on $n$ qubits without the identity.
        Then Algorithm~\ref{alg:QEPIP} prepares the following state as a classical mixture
        \begin{align}
            \rho_{(\cdot)}\colon\ell_1^d\to & \operatorname{Herm}(2^n) ,\\
            r \mapsto &\rho_r = \frac{\bbI + \sum_{i=1}^{4^n-1} r_i P_i}{2^n}.
        \end{align}
        The total runtime complexity $t$ of Algorithm~\ref{alg:QEPIP} fulfills $t\in\calO(\poly(d))$.
    \end{lemma}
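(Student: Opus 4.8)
The plan is to separate the statement into a correctness claim (the ensemble produced equals $\rho_r$) and a complexity claim (the procedure runs in time $\calO(\poly(d))$). First I would check that the construction is well posed: choosing $n=\lceil\log_4(d+1)\rceil$ guarantees $4^n\geq d+1$, so there are at least $d$ non-identity Paulis $(P_i)_{i=1}^{4^n-1}$ available to host the (padded) entries of $r$, and the zero-padding is consistent. Because $\lVert r\rVert_1=1$, the weights $\{\lvert r_i\rvert\}_i$ used in Line~\ref{algline:sample_i} form a genuine probability distribution over $\{1,\dots,4^n-1\}$ (the padded indices simply receive weight $0$). I would also record that each candidate output $\rho_i=\tfrac{1}{2^n}(\bbI+\sign(r_i)P_i)$ is a legitimate density operator: since $P_i$ has eigenvalues $\pm1$, the matrix $\bbI\pm P_i$ is positive semi-definite with trace $2^n$, so $\rho_i$ is PSD with unit trace.

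With this in place, correctness follows from linearity of expectation. Viewing Algorithm~\ref{alg:QEPIP} as a randomized preparation that outputs $\rho_i$ with probability $\lvert r_i\rvert$, the classical mixture it realizes is the ensemble average
\begin{align}
\sum_{i=1}^{4^n-1}\lvert r_i\rvert\,\rho_i
&=\frac{1}{2^n}\sum_{i=1}^{4^n-1}\lvert r_i\rvert\bigl(\bbI+\sign(r_i)P_i\bigr)\\
&=\frac{1}{2^n}\Bigl(\bbI+\sum_{i=1}^{4^n-1}r_iP_i\Bigr)=\rho_r,
\end{align}
where I used $\sum_i\lvert r_i\rvert=\lVert r\rVert_1=1$ for the identity term and $\lvert r_i\rvert\sign(r_i)=r_i$ for the Pauli terms. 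This is exactly the target state.

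For the runtime I would bound each step in terms of $d$. The ceiling gives $4^n<4(d+1)$, hence there are $4^n-1=\calO(d)$ Pauli slots; padding $r$ and building the sampling distribution over these indices therefore costs $\calO(d)$, and drawing $i$ (e.g.\ by a cumulative-distribution lookup) is likewise $\calO(d)$. The only genuinely quantum step is preparing a single $\rho_i$, and here lies the main subtlety: I must argue that $\tfrac{1}{2^n}(\bbI\pm P_i)$ can be prepared with few gates. The cleanest route is to note it is the maximally mixed state on the $\pm1$-eigenspace of $P_i$, i.e.\ the stabilizer state of the group $\langle\pm P_i\rangle$. One prepares the trivial instance $\tfrac{1}{2^n}(\bbI+Z_j)$ (a single qubit in $\lvert0\rangle$ tensored with the maximally mixed state on the rest) and conjugates by a Clifford that maps $Z_j$ to $\pm P_i$; since $P_i$ is a tensor product of single-qubit Paulis, single-qubit Cliffords followed by $\calO(n)$ CNOTs suffice. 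As $n=\calO(\log d)$, this costs $\calO(\log d)$ gates. Summing the steps gives total time $\calO(d)=\calO(\poly(d))$, proving the claim. The main obstacle to watch is precisely this Clifford-preparation argument, together with the ceiling bookkeeping that keeps $4^n=\Theta(d)$; the correctness identity itself is immediate.
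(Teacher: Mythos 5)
Your proof is correct, and its core --- the convex decomposition $\rho_r=\sum_i\lvert r_i\rvert\,\tfrac{1}{2^n}\left(\bbI+\sign(r_i)P_i\right)$, using $\lVert r\rVert_1=1$ for the identity term and $\lvert r_i\rvert\sign(r_i)=r_i$ for the Pauli terms --- is exactly the identity on which the paper's proof rests; the runtime bookkeeping via $4^n-1\in\calO(d)$ slots also matches. Where you genuinely diverge is in preparing the single-term states $\tfrac{1}{2^n}(\bbI\pm P_i)$. The paper prepares them directly as product states: rotate each qubit's $\lvert0\rangle$ into the eigenbasis of the corresponding local Pauli factor, flip the necessary qubits, and mix over the polynomially many terms. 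You instead start from $\tfrac{1}{2^n}(\bbI+Z_j)=\lvert0\rangle\!\langle0\rvert_j\otimes\bbI/2^{n-1}$ and conjugate by a Clifford mapping $Z_j\mapsto\pm P_i$, built from single-qubit Cliffords and $\calO(n)=\calO(\log d)$ CNOTs. Your route invokes heavier machinery but buys rigor on a point the paper treats loosely: for $n\geq2$ the state $\tfrac{1}{2^n}(\bbI\pm P_i)$ is not \enquote{a single (pure) eigenstate} as the paper's remark asserts, but the rank-$2^{n-1}$ maximally mixed state on the $\pm1$-eigenspace, and your stabilizer argument produces exactly this state, whereas the paper's product-eigenstate recipe implicitly requires an additional randomization (e.g., over the sign pattern of the local eigenstates) to average to it. This exactness is not cosmetic, since Lemma~\ref{l:innerprodqs} computes $\Tr\{\rho_r\rho_{r'}\}$ from the precise algebraic form of $\rho_r$. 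Both arguments yield the same $\calO(\poly(d))$ total runtime.
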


    \begin{lemma}[Euclidean inner products]
    \label{l:innerprodqs}
        Let $r,r'\in\bbR^d$ be unit vectors with respect to the $1$-norm $\lVert r^{(\prime)}\Vert_1=1$.
        Then, for $\rho_r,\rho_{r'}$ as produced in Algorithm~\ref{alg:QEPIP}, the following identity holds
         \begin{align}
            \langle r,r'\rangle &= 2^n \Tr\{\rho_{r}\rho_{r'}\}-1.
        \end{align}
    \end{lemma}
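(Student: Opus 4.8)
The plan is to expand the Hilbert--Schmidt product $\Tr\{\rho_r\rho_{r'}\}$ directly from the explicit form of the states furnished by Lemma~\ref{l:mappingqs} and then use the orthogonality of the Pauli basis to collapse the resulting double sum into the Euclidean inner product $\langle r,r'\rangle$. First I would substitute $\rho_r=(\bbI+\sum_{i=1}^{4^n-1} r_iP_i)/2^n$ and the analogous expression for $\rho_{r'}$, so that
\begin{align}
    \Tr\{\rho_r\rho_{r'}\}=\frac{1}{4^n}\Tr\left\{\left(\bbI+\sum_i r_iP_i\right)\left(\bbI+\sum_j r'_jP_j\right)\right\}.
\end{align}
Expanding the product inside the trace produces four groups of terms: the identity--identity term $\bbI$, the two cross terms $\sum_i r_iP_i$ and $\sum_j r'_jP_j$, and the Pauli--Pauli term $\sum_{i,j}r_ir'_jP_iP_j$.

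The argument then rests on three standard facts about the $n$-qubit Pauli basis $(P_i)_{i=1}^{4^n-1}$. First, $\Tr\{\bbI\}=2^n$. Second, every non-identity Pauli string is traceless, $\Tr\{P_i\}=0$, which annihilates both cross terms. Third, the Pauli strings obey the orthogonality relation $\Tr\{P_iP_j\}=2^n\delta_{ij}$; this is the load-bearing identity and follows because $P_i^2=\bbI$ while, for $i\neq j$, the product $P_iP_j$ is again a non-identity Pauli string up to a global phase, hence traceless. Applying these facts reduces the Pauli--Pauli term to $\sum_{i,j}r_ir'_j\Tr\{P_iP_j\}=2^n\sum_i r_ir'_i=2^n\langle r,r'\rangle$.

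Collecting the surviving contributions gives
\begin{align}
    \Tr\{\rho_r\rho_{r'}\}=\frac{1}{4^n}\left(2^n+2^n\langle r,r'\rangle\right)=\frac{1+\langle r,r'\rangle}{2^n},
\end{align}
and rearranging yields the claimed identity $\langle r,r'\rangle=2^n\Tr\{\rho_r\rho_{r'}\}-1$. The computation is elementary throughout, so I do not anticipate a genuine obstacle; the only point requiring care is the Pauli orthogonality relation, which is precisely what converts the operator trace into the Euclidean inner product and thereby explains the appearance of both the multiplicative $2^n$ and the additive $-1$ in the statement. Note finally that, since Lemma~\ref{l:mappingqs} already identifies the classical mixture produced by Algorithm~\ref{alg:QEPIP} with the single density operator $\rho_r=(\bbI+\sum_i r_iP_i)/2^n$, the identity is a purely algebraic fact about these density operators and needs no separate averaging argument.
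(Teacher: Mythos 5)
Your proposal is correct and follows essentially the same route as the paper's own proof in Appendix~\ref{a:rtorho}: expand the product $\rho_r\rho_{r'}$, kill the cross terms by tracelessness of non-identity Pauli words, and use $\Tr\{P_iP_j\}=2^n\delta_{ij}$ (with $P_i^2=\bbI$ and products of distinct Pauli words again traceless) to collapse the double sum to $\langle r,r'\rangle$, giving $\Tr\{\rho_r\rho_{r'}\}=(1+\langle r,r'\rangle)/2^n$. Nothing is missing.
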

    \begin{proof}[Proof of Lemmas~\ref{l:mappingqs} and~\ref{l:innerprodqs}]
        The proofs are presented in Appendix~B.
    \end{proof}

    \begin{remark}{(Prefactors)\textbf{.}}
        The $2^n$ multiplicative factor is of no concern, since $n\in\calO(\log(d))$ and we are interested in methods that are allowed to scale polynomially in $d$.
        In general $\rho_r$ will be a mixed quantum state which can be efficiently prepared.
        Also, the map is injective, but not surjective.
    \end{remark}

    We are now in the position to prove Theorem~\ref{thm:approx_universality}.
    We note that the first step in the proof is to furbish a \enquote{classical} feature map, and in a second step we turn the resulting feature vector into a quantum state.
    This way, Theorem~\ref{thm:approx_universality} makes no pretense at quantum advantage, but rather establishes the ultimate expressivity of embedding quantum kernels.
    We also point out that not imposing space efficiency breaks the division between quantum and classical computing.
    Every quantum circuit running in finite time can be fully simulated by a classical circuit also running in finite time.
    This way, the universality statement coming from Theorem~\ref{thm:approx_universality} is not unique to quantum kernels, since formally at this point there is no distinction between classical and quantum functions.
    
    \begin{proof}[Proof of Theorem~\ref{thm:approx_universality}]

        We prove this statement directly using a corollary of Mercer's theorem and the universality of quantum computing.
        First, we invoke Corollary~A.3 in Appendix~A, which ensures the existence of a finite-dimensional feature map $\Phi_m:\calX\to\bbR^m$ for which it holds that
        \begin{align}
            \left| k(x,x') - \langle\Phi_m(x),\Phi_m(x')\rangle\right| &< \varepsilon.
        \end{align}        
        Without loss of generality, we assume $\lVert\Phi_m(x)\rVert_1=1$ for all $x\in\calX$.
        Now, we can prepare the quantum state $\rho_{\Phi_m(x)}$, which will use $\lceil\log_4(m+1)\rceil$ many qubits.
        By preparing two such states, one for $\Phi_m(x)$ and one for $\Phi_m(x')$, we can compute their inner product as the Hilbert-Schmidt inner product of the quantum states as in Lemma~\ref{l:innerprodqs}, to get
        \begin{align}
            \langle \Phi_m(x),\Phi_m(x')\rangle &= 2^n \Tr\{\rho_{\Phi_m(x)}\rho_{\Phi_m(x')}\}-1.
        \end{align}
        For reference, notice $\Tr\{\rho_{\Phi_m(x)}\rho_{\Phi_m(x')}\}$ could be computed using the $\SWAP$ test, 
        to additive precision in the number of shots.
        With this, we can ensure good approximation to polynomial additive precision efficiently
        \begin{align}
            \lvert k(x,x') - 2^n \Tr\{\rho_{\Phi_m(x)}\rho_{\Phi_m(x')}\} +1\rvert < \varepsilon,
        \end{align}
        for almost every $x,x'\in\calX$.
        This completes the proof.
    \end{proof}
    
    Notice this is an conceptually motivated existence result only.
    Noteworthy is also that no claims are made about practicality neither in Theorem~\ref{thm:approx_universality} nor in Algorithm~\ref{alg:QEPIP}.
    These results only aim at establishing the ultimate existence of EQKs for any kernel, and they are not claims that computing these kernels as EQKs should lead to any form of advantage over the inner product of the classical feature vectors.
    
    The only statement is that there exists an EQK using a finite number of qubits, but it does not get into how quickly the number of qubits will grow for increasingly computational complicated kernel function $k$, or increased required precision $\varepsilon>0$.
    The number of qubits $n$ will depend on some properties of the kernel $k$ and the approximation error $\varepsilon$, and if for example we had exponential scaling of the required number of qubits on some of these quantities, then Theorem~\ref{thm:approx_universality} would bring no practical application.
    Similarly, one should also consider the time it would take to find such an EQK approximation, independently of the memory and run-time requirements of preparing the feature vectors and computing their inner product.

    Let us take the anti-symmetric logarithmic derivative quantum Fisher kernel of Ref.~\cite{suzuki2022quantum} as an example.
    Upon first inspection, evaluating that kernel does not seem to rely in the usual quantum feature map and Hilbert-Schmidt inner product combination.
    Yet, a feature map can be identified by rewriting some of the variables involved as a vector of exponential length.
    That means the scaling in the number of qubits required to encode the feature vector is at worst polynomial, according to Lemma~\ref{l:innerprodqs}.
    The normalization requirement from Lemma~\ref{l:innerprodqs} could still prevent the feature map to be encoded using the construction we propose, but for the sake of illustration let us assume normalization is not a problem.
    Then, we have found a way of realizing the same kernel as an EQK.
    In this case, even though the scaling of the qubit number is no more than polynomial, the scaling in total run-time of the EQK approximation would still be at worst exponential.
    
    The message remains that, although all kernel functions can be realized as EQKs, there could still exist kernel functions which cannot be realized as EQKs \emph{efficiently}.
    Pointing back to Fig.~\ref{fig:nEQK}, this explains why we added the word \enquote{efficient} to the sets of quantum functions and EQKs.
    In order to talk about efficiency we need to replace individual functions $k$ by function sequences $\{k_s\}_{s\in\bbN}$, where $s$ is the scale parameter.
    When we refer to an efficient $\varepsilon$-approximation, we refer to an algorithm making use of up to $\poly(s,1/\varepsilon)$ resources, for each $k_s$ in the sequence.
    We now formally present the question we aim at answering.

    \begin{question}[Expressivity of efficient EQKs]\label{Q:existence_neqk}
        Let $\{k_s\}_{s\in\bbN}$ be a sequence of kernel functions, let $\varepsilon>0$ be a precision parameter, and consider the properties:
        \begin{enumerate}
            \item \emph{Quantum efficiency:} There is an algorithm that takes a specification of $k_s$ as input and produces an $\varepsilon$-approximation of $k_s$ with a quantum computer efficiently in $s$ and $1/\varepsilon$.
            \item \emph{Embedding-quantum efficiency:} There is an algorithm that takes a specification of $k_s$ as input and produces an $\varepsilon$-approximation of $k_s$ as an EQK efficiently in $s$ and $1/\varepsilon$.
            \item \emph{Classical inefficiency:} Any algorithm that takes a specification of $k_s$ as input and produces an $\varepsilon$-approximation of $k_s$ with a classical computer must be inefficient in either $s$ or $1/\varepsilon$.
        \end{enumerate}
        Then, assuming $\{k_s\}_s$ fulfills classical inefficiency, does quantum efficiency imply embedding-quantum efficiency?
    \end{question}

    The question above contains a few moving pieces which still need to be made fully precise, as for instance: the meaning of the scaling parameter $s$, the sequence of domains $\calX_s$ from where each $k_s$ takes its input, any restrictions on the functions $k_s$, in what form must the functions $k_s$ be specified, the choice of notion of $\varepsilon$-approximation, and the choice between space and time efficiency.
    These are left open on purpose to admit for diverse approaches to studying the question.

    We could have required a stronger sense of inefficiency, namely that there cannot exist an efficient and uniform construction for $\varepsilon$-approximating $k_s$ with classical computers.
    Instead, we judge it enough to require that, even if such an efficient approximation could exist, it would be impossible to find efficiently, conditional on $\BQP\nsubseteq\mathsf{P/poly}$.
    We added classical inefficiency because otherwise, in principle, one could have all: quantum efficiency, embedding-quantum efficiency, and classical efficiency, in which case the result would not be interesting for QML.
    An interesting question is then to imagine how classical kernel functions could also be embedding classical kernels, in the sense of data being mapped onto the Hilbert space of classical computation.

    In later sections, we fix all of these in our answers to Question~\ref{Q:existence_neqk}.
    For instance, $s$ for us is the dimension of the domains $\calX_s$ from where data is taken, the kernel functions can be specified either as black-boxes or as the description of circuits, we will take infinity-norm approximation almost everywhere, and we alternate between qubit number efficiency and run-time efficiency.

    As an aside, Question~\ref{Q:existence_neqk} also invites research in the existence of quantum kernels beyond EQKs, so efficient quantum kernels which do not admit efficient EQK-based approximations.
    At the moment of writing, the authors are not aware of any concrete example of a quantum efficient kernel function which is provably not embedding-quantum efficient.
    Moreover, we do not know any quantum efficient kernel for which we do not have an explicit way of constructing the efficient quantum embedding.
    The outlook in Section~\ref{s:outlook}, and Appendices~A and~E contain several directions in which we expect candidate kernels beyond efficient EQKs to appear.
    It should be said that searching for quantum kernels beyond EQKs is slightly contradictory to the foundational philosophy of QML in the beginnings, which actively sought to express everything in terms of inner products in the Hilbert space of density operators.
    The foundational works~\cite{schuld2017implementing,schuld2019quantum,havlicek2019supervised} have motivated the use of embedding quantum kernels precisely because the inner product could be taken directly efficiently.
    Nevertheless, we point out the possibility of alternative constructions, keeping focus on methods which could still harbour quantum advantages.

    \begin{figure}[t]
        \centering
        \includegraphics{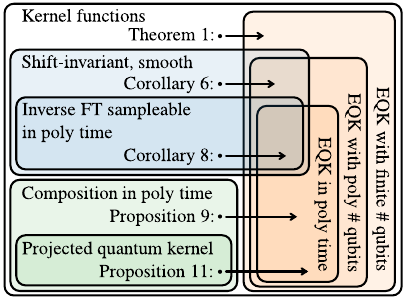}
        \caption{Venn diagram with set relations of different classes of kernels and quantum kernels.
            Each of the arrows represents a reduction found in this manuscript, they should be read as \enquote{for any element of the first set, there exists an element of the second set which is a good approximation.}
            In the case of Theorem~\ref{thm:approx_universality}, elements are individual functions.
            In every other case, elements are sequences of kernel functions, for which the notions of efficiency make sense.
            In summary, we find that efficient 
            \emph{embedding quantum kernels} (EQK) can approximate two important classes of kernels: shift-invariant and composition kernels.
        }
        \label{fig:venn_diagram}
    \end{figure}


\section{The universality of efficient shift-invariant embedding quantum kernels}\label{s:shift-invariant}

    \begin{figure*}[t]
        \centering
        \includegraphics{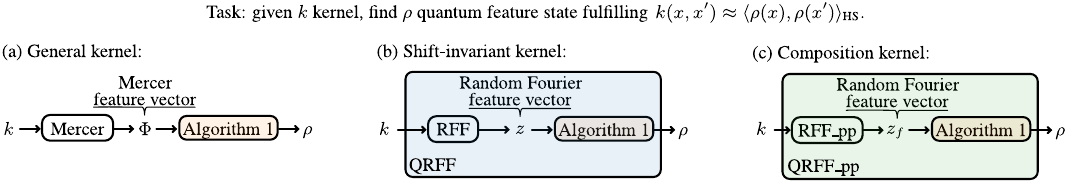}
        \caption{
            Conceptual sketch of three constructions to approximate kernel functions as Embedding Quantum Kernels (EQKs).
            The three parts correspond to different kernel families: (a) General kernels refers to any PSD kernel function, as introduced in Definition~\ref{def:kernel}; (b) Shift-invariant kernels are introduced in Definition~\ref{def:shift-invariant}; and (c) Composition kernels are a new family that we introduce in Section~\ref{s:composition}.
            The boxes refer to routines taken either from the existing literature (Mercer, from Corollary~A.3 in Appendix~A; and RFF, from Theorem~\ref{thm:RFF_existence} originally in Ref.~\cite{rahimi2007random}) or introduced in this manuscript (Algorithm~\ref{alg:QEPIP}; QRFF, as Algorithm~\ref{alg:QRFF}; RFF\_pp, as Algorithm~\ref{alg:RFF_PP}; and QRFF\_pp, as Algorithm~\ref{alg:QRFF_PP}).
            The details for each of the three families are elucidated in the corresponding sections: the universality of EQKs for general kernels is explained in Section~\ref{s:universality}, with Theorem~\ref{thm:approx_universality}; the universality of efficient EQKs for shift-invariant kernels appears in Section~\ref{s:shift-invariant}, studied formally in Corollaries~\ref{cor:secondderivative} and~\ref{cor:time_efficient}; finally, composition kernels are introduced in Section~\ref{s:composition}, with Proposition~\ref{prop:composition} stating their efficient approximation as EQKs, and Proposition~\ref{prop:projected} confirming that composition kernels contain the so-called projected quantum kernels presented in Refs.~\cite{huang2021power,huang2022provably}.
        }
        \label{fig:fig1}
    \end{figure*}

    In this section we present our second result: All shift-invariant kernels admit space-efficient EQK approximation provided they are smooth enough.
    We also give sufficient conditions for a constructive time-efficient EQK approximation.
    We arrive at these results in two steps:
    We first prove an upper bound on the Hilbert space dimension required for an approximation as an explicit inner product, classically.
    We next construct and EQK based on this classical approximation.
    
    Shift-invariant kernels have enjoyed significant attention in the ML literature.
    On the one hand, the Gaussian RBF kernel (arguably the most well-known shift-invariant kernel) has been found useful in a range of data-driven tasks.
    On the other hand, as we see in this section, shift-invariant kernels are more amenable to analytical study than other classes of kernels.
    The property of shift-invariance, combined with exchange symmetry and PSD, allows for deep mathematical characterization of functions.
    
    Let us first introduce the class of functions of interest:
    
    \begin{definition}[Shift-invariant kernel function]\label{def:shift-invariant}
        A kernel function $k\colon\calX\times\calX\to\bbR$ is called \emph{shift-invariant} if, for every $x,x',\xi\in\calX$, it holds that
        \begin{align}
            k(x+\xi,x'+\xi) &= k(x,x').
        \end{align}
    \end{definition}
    
    As is standard, we then write shift-invariant kernels as a function of a single argument $k(x-x')$ (which amounts to taking $\xi=-x'$.)
    We define $\Delta\coloneqq x-x'$ and then talk about $k(\Delta)$.

    One motivation for using shift-invariant kernels, as a more restricted function family, is that they have a few useful properties that ease their analysis.
    Also, it is difficult to decide whether an arbitrary function is PSD, so characterizing general kernel functions is difficult.
    Conversely, for shift-invariant functions, Bochner's theorem gives a condition that is equivalent with being PSD.

    \begin{theorem}[Bochner~\cite{Rudin1990fourier}]\label{thm:Bochner}
        Let $k$ be a continuous, even, shift-invariant function on $\bbR^d$.
        Then $k$ is PSD if and only if $k$ is the \emph{Fourier transform} (FT)
        of a non-negative measure $p$,
        \begin{align}
            k(\Delta) &= \int_{\bbR^d} p(\omega) e^{i\langle\omega,\Delta\rangle}\mathrm{d}\omega.
        \end{align}
        Furthermore, if $k(0)=1$, then $p$ is a probability distribution $\int_{\bbR^d} p(\omega)\mathrm{d}\omega=1$.
    \end{theorem}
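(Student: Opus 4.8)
The plan is to prove both implications of the biconditional separately, treating the forward direction (integral representation $\Rightarrow$ PSD) as routine and devoting the real effort to the converse. For the easy direction, I would substitute the claimed representation directly into the quadratic form from the PSD condition of Definition~\ref{def:kernel}, exchange the finite sum with the integral (justified since $p$ is a finite non-negative measure), and collect the exponentials into a squared modulus. Concretely, for real coefficients $a_1,\dots,a_m$ and points $x_1,\dots,x_m$,
\begin{align}
    \sum_{i,j=1}^m a_i a_j\, k(x_i-x_j) &= \int_{\bbR^d} p(\omega)\left\lvert\sum_{i=1}^m a_i\, e^{i\langle\omega,x_i\rangle}\right\rvert^2 \diff\omega \geq 0,
\end{align}
which holds because the integrand is pointwise non-negative and $p\geq0$. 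Continuity and evenness are not even needed here; only $p\geq0$ matters.

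For the converse I would first upgrade the discrete PSD condition to its continuous counterpart: for every continuous, compactly supported $f\colon\bbR^d\to\bbC$,
\begin{align}
    \int_{\bbR^d}\int_{\bbR^d} k(x-y)\, f(x)\,\overline{f(y)}\,\diff x\,\diff y \geq 0.
\end{align}
This follows by approximating the double integral with Riemann sums, each of which is non-negative by the PSD property, and passing to the limit using continuity of $k$. The core idea is then to feed in a modulated Gaussian test function $f(x)=e^{i\langle\omega,x\rangle}\,e^{-\lambda\norm{x}^2/2}$ for a regularization parameter $\lambda>0$. After the change of variables $u=x-y$ and integrating out the remaining Gaussian, the double integral collapses to a single Gaussian-regularized Fourier transform of $k$, so that
\begin{align}
    p_\lambda(\omega) &\coloneqq \frac{1}{(2\pi)^d}\int_{\bbR^d} k(u)\, e^{-i\langle\omega,u\rangle}\, e^{-\lambda\norm{u}^2/4}\,\diff u \geq 0
\end{align}
for every $\omega$, where evenness of $k$ guarantees $p_\lambda$ is real. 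Thus each $p_\lambda$ is a non-negative density, and by Fourier inversion it satisfies the recovery identity $\int_{\bbR^d} p_\lambda(\omega)\, e^{i\langle\omega,\Delta\rangle}\,\diff\omega = k(\Delta)\, e^{-\lambda\norm{\Delta}^2/4}$; in particular, setting $\Delta=0$ shows the total mass $\int p_\lambda(\omega)\,\diff\omega = k(0)$ is finite and independent of $\lambda$.

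The final step is to send $\lambda\to0$. The measures $p_\lambda(\omega)\,\diff\omega$ are non-negative with uniformly bounded mass $k(0)$, so after normalizing they form a family of probability measures; continuity of $k$ at the origin supplies tightness through a Lévy-continuity-type argument, and Prokhorov's theorem then yields a subsequence converging weakly to a finite non-negative measure $p$. Since $k(\Delta)\, e^{-\lambda\norm{\Delta}^2/4}\to k(\Delta)$ pointwise and the left-hand side converges to $\int e^{i\langle\omega,\Delta\rangle}\,\diff p(\omega)$ by weak convergence, I obtain the desired representation $k(\Delta)=\int_{\bbR^d} e^{i\langle\omega,\Delta\rangle}\,\diff p(\omega)$. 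The \emph{furthermore} clause is then immediate: taking $\Delta=0$ gives $\int\diff p = k(0)=1$, so $p$ is a probability distribution. I expect the limiting argument to be the main obstacle: establishing tightness and ruling out mass escaping to infinity is where continuity of $k$ is genuinely used, and it is the one place where the proof is analytic rather than a formal manipulation. (For the present paper one may alternatively just invoke the classical statement from Ref.~\cite{Rudin1990fourier}, but the sketch above is how I would reconstruct it from first principles.)
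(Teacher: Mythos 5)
The paper offers no proof of this statement to compare against: Theorem~\ref{thm:Bochner} is imported wholesale from Ref.~\cite{Rudin1990fourier} as a classical result, and the manuscript only ever \emph{uses} it (in Algorithm~\ref{alg:RFF}, Corollary~\ref{cor:secondderivative}, and Theorem~\ref{thm:trigpolyPSD}). Your reconstruction is correct in outline and follows the standard concrete route — Gaussian regularization of $k$, pointwise non-negativity of the mollified Fourier transform via the continuous PSD inequality, and a $\lambda\to0$ limit controlled by L\'evy's continuity theorem — which differs from Rudin's more abstract treatment for locally compact abelian groups but is the argument found in most probability texts. One pleasant byproduct of your route is that it makes explicit where each hypothesis enters: $p\geq0$ alone gives the easy direction, evenness makes $p_\lambda$ real, and continuity at the origin is exactly what rules out mass escaping to infinity. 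Note also that your limit object is a measure, not a density, which is consistent with the paper's phrasing (\enquote{non-negative measure $p$}) despite its notational $p(\omega)\,\mathrm{d}\omega$.

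Three steps deserve tightening, all routine but genuinely used. First, Definition~\ref{def:kernel} states PSD only for \emph{real} coefficients, while your continuous inequality needs complex test functions; for real even $k$ the complex form $\sum_{j,l}c_j\overline{c_l}\,k(x_j-x_l)$ has vanishing imaginary part by the symmetry $k(x_j-x_l)=k(x_l-x_j)$ and its real part splits into two real-coefficient forms, so real-PSD implies complex-PSD — this is a second, hidden place where evenness is used and should be stated. Second, your modulated Gaussian is not compactly supported, so either prove the continuous inequality directly for continuous integrable $f$ or truncate and pass to the limit; this is harmless because $\lvert k(\Delta)\rvert\leq k(0)$ (from PSD on two points), a bound you never state but which also guarantees absolute convergence of the $p_\lambda$ integral. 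Third, and most substantively: invoking Fourier inversion to get the recovery identity, and then setting $\Delta=0$ to conclude $\int p_\lambda = k(0)$, is circular as written — inversion requires $p_\lambda\in L^1$, which is precisely what finiteness of the total mass asserts. The standard repair is a second mollifier: compute $\int p_\lambda(\omega)\,e^{-\epsilon\norm{\omega}^2}\mathrm{d}\omega$ by Fubini (legitimate by the Gaussian damping and boundedness of $k$), identify it as a Gaussian average of $k_\lambda$ tending to $k_\lambda(0)=k(0)$ by continuity, and let $\epsilon\to0$ by monotone convergence since $p_\lambda\geq0$; only then does inversion apply. Finally, before normalizing to probability measures for L\'evy continuity, dispose of the degenerate case $k(0)=0$, which forces $k\equiv0$ by the same two-point bound. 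With these patches your sketch is a complete proof; for the paper itself, the intended reading is simply the citation to Ref.~\cite{Rudin1990fourier}.
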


    Bochner's theorem is also a central ingredient in a powerful kernel-approximation algorithm known as \emph{random Fourier features} 
    (RFF)~\cite{rahimi2007random} presented here as Algorithm~\ref{alg:RFF}.
    \begin{algorithm}[H]
        \caption{random Fourier features~\cite{rahimi2007random}, $\mathtt{RFF}$}
        \label{alg:RFF}
        \begin{algorithmic}[1]
            \Require A PSD, continuous, shift-invariant kernel ${k(x-x')}$.
            \Ensure A feature map $z$ so that $\langle z(x),z(x')\rangle\approx k(x-x')$.
            \State \label{algline:iFT} $p(\omega)\gets\frac{1}{2\pi}\int_{\bbR^d} e^{-i\langle\omega,\Delta\rangle}k(\Delta)\mathrm{d}\Delta$ \Comment Inverse FT of $k$.
            \State \label{algline:sample} $\{\omega_1,\ldots,\omega_{D/2}\}\sim p^{D/2}$ \Comment Draw $D/2$ i.i.d. samples from $p$.
            \State \label{algline:featmap} $z(\cdot) \gets \sqrt{\frac{2}{D}} \begin{pmatrix}
                \cos\langle\omega_1,\cdot\rangle \\ \sin\langle\omega_1,\cdot\rangle\\ \vdots \\ \cos\langle\omega_{D/2},\cdot\rangle \\ \sin\langle\omega_{D/2}, \cdot\rangle
            \end{pmatrix}$.
            \State\Return $z$
        \end{algorithmic}
    \end{algorithm}

    The input to Algorithm~\ref{alg:RFF} is any shift-invariant kernel function $k$, and the output is a feature map $z$ such that the same kernel can be $\varepsilon$-approximated as an explicit inner product.
    In Step~\ref{algline:iFT}, the inverse Fourier transform $p$ of the kernel $k$ is produced.
    In Step~\ref{algline:sample}, $D/2$ many samples $\omega_i$ are drawn i.i.d.~according to $p$.
    In Step~\ref{algline:featmap}, the feature map $z$ is constructed using the samples $\omega_i$ and the sine and cosine functions.
    That the inner product of $z$ gives a good approximation to $k$ is further elucidated in Appendix~A.
    
    With RFF, given a kernel function, Algorithm~\ref{alg:RFF} produces a randomized feature map such that the kernel corresponding to the inner product of pairs of such maps is an unbiased estimator of the initial kernel.
    The vital question is how large the dimension $D$ has to be in order to ensure $\varepsilon$ approximation error, which is the object of study of Theorem~\ref{thm:RFF_existence}.

    \begin{theorem}[Random Fourier features, Claim 1 in Ref.~\cite{rahimi2007random}]\label{thm:RFF_existence}
        Let $\calX\subseteq\bbR^d$ be a compact data domain.
        Let $k$ be a continuous shift-invariant kernel acting on $\calX$, fulfilling $k(0)=1$.
        Then, for the probabilistic feature map $z(\cdot)\colon\calX\to\bbR^D$ produced by Algorithm~\ref{alg:RFF},
        \begin{align}
            \bbP&\left[\sup_{x,x'\in\calX} \lvert \langle z(x),z(x')\rangle - k(x,x')\rvert \geq\varepsilon\right] \leq \\
            &\leq 2^8 \left(\frac{\sigma_p\operatorname{diam}(\calX)}{\varepsilon}\right)^2\exp\left(-\frac{D\varepsilon^2}{8(d+2)}\right)
        \end{align}
        holds for almost every $x,x'\in\calX$.
        Here $\operatorname{diam}(\calX)=\sup_{x,x'\in\calX}\{\lVert x-x'\rVert\}$ is the diameter of $\calX$, and $\sigma_p$ is the variance of the inverse FT of $k$ interpreted as a probability distribution
        \begin{align}
            p(\omega) &= \FT^{-1}[k](\omega),\\
            \sigma_p^2 &= \bbE_{p}\left[\lVert\omega\rVert^2\right].
        \end{align}
        In particular, it follows that for any constant success probability, there exists an $\varepsilon$-approximation of $k$ as an Euclidean inner product where the feature space dimension $D$ satisfies
        \begin{align}
            D &\in 
            \calO\left(\frac{d}{\varepsilon^2}\log\frac{\sigma_p\operatorname{diam}(\calX)}{\varepsilon}\right).
        \end{align}
    \end{theorem}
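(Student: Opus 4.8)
The plan is to follow the covering-number argument of Rahimi and Recht~\cite{rahimi2007random}, which promotes a pointwise concentration bound into a uniform one over the compact domain. First I would pass to the difference variable $\Delta = x-x'$. Since $k$ is real, even, shift-invariant and $k(0)=1$, Bochner's theorem (Theorem~\ref{thm:Bochner}) gives $k(\Delta)=\bbE_{\omega\sim p}[e^{i\langle\omega,\Delta\rangle}]=\bbE_{\omega\sim p}[\cos\langle\omega,\Delta\rangle]$, the second equality holding because the odd part integrates to zero by the symmetry of $p$. The identity $\cos\langle\omega,x-x'\rangle=\cos\langle\omega,x\rangle\cos\langle\omega,x'\rangle+\sin\langle\omega,x\rangle\sin\langle\omega,x'\rangle$ shows that each $(\cos,\sin)$ coordinate pair of the feature map $z$ from Algorithm~\ref{alg:RFF} contributes exactly one such summand, so that
\begin{align*}
\langle z(x),z(x')\rangle=\frac{2}{D}\sum_{j=1}^{D/2}\cos\langle\omega_j,\Delta\rangle=:\hat s(\Delta)
\end{align*}
is an unbiased estimator of $k(\Delta)$, assembled from $D/2$ i.i.d.\ terms each bounded in $[-1,1]$.

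Next I would set $f(\Delta)=\hat s(\Delta)-k(\Delta)$ on the compact difference set $\mathcal{M}=\{x-x':x,x'\in\calX\}$, whose diameter is at most $2\operatorname{diam}(\calX)$, and control $\sup_{\Delta}|f(\Delta)|$ through two events. \emph{(i) Control on a net:} cover $\mathcal{M}$ by balls of radius $r$ centred at anchors $\Delta_1,\dots,\Delta_T$ with $T\le(C\operatorname{diam}(\calX)/r)^d$ for an absolute constant $C$; at each anchor $\hat s(\Delta_i)$ is an average of $D/2$ bounded i.i.d.\ terms, so Hoeffding's inequality gives $\bbP[|f(\Delta_i)|\ge\varepsilon/2]\le 2\exp(-D\varepsilon^2/16)$, and a union bound yields $\bbP[\max_i|f(\Delta_i)|\ge\varepsilon/2]\le 2T\exp(-D\varepsilon^2/16)$. \emph{(ii) Control of the slope:} $f$ is Lipschitz with constant $L_f=\|\nabla f(\Delta^{*})\|$ at its maximiser, and since $\nabla\hat s$ is an unbiased estimate of $\nabla k$ with the frequencies $\omega_j$ pulled down by differentiation, one bounds $\bbE[L_f^2]\le\bbE[\|\nabla\hat s\|^2]\le\bbE_{p}[\|\omega\|^2]=\sigma_p^2$, whence Markov's inequality gives $\bbP[L_f\ge\varepsilon/(2r)]\le 4\sigma_p^2 r^2/\varepsilon^2$. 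On the complement of both events every $\Delta$ lies within $r$ of some anchor, so $|f(\Delta)|\le|f(\Delta_i)|+L_f r<\varepsilon/2+\varepsilon/2=\varepsilon$, giving
\begin{align*}
\bbP\!\left[\sup_{\Delta}|f(\Delta)|\ge\varepsilon\right]\le 2\left(\frac{C\operatorname{diam}(\calX)}{r}\right)^{d}\exp\!\left(-\frac{D\varepsilon^2}{16}\right)+\frac{4\sigma_p^2 r^2}{\varepsilon^2}.
\end{align*}

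Finally I would optimise the free radius $r$, balancing the first term (decreasing in $r$) against the second (increasing in $r$); setting the derivative to zero gives $r^{d+2}\propto(\operatorname{diam}(\calX))^{d}\varepsilon^2\exp(-D\varepsilon^2/16)/\sigma_p^2$, and substituting back effectively raises the Hoeffding factor to the power $2/(d+2)$. This is exactly what turns the clean exponent $-D\varepsilon^2/16$ into the stated $-D\varepsilon^2/(8(d+2))$ and generates the polynomial prefactor $2^8(\sigma_p\operatorname{diam}(\calX)/\varepsilon)^2$. The corollary on $D$ then follows by inverting: requiring the right-hand side to fall below a fixed constant $\delta$ and solving for $D$ yields $D\in\calO\!\left((d/\varepsilon^2)\log(\sigma_p\operatorname{diam}(\calX)/\varepsilon)\right)$.

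I expect the main obstacle to be purely the bookkeeping of constants in this final balancing step -- carrying the covering constant, the Hoeffding constant and the Markov bound through the optimisation so as to land precisely on the prefactor $2^8$ and the exponent $8(d+2)$ -- rather than any conceptual difficulty. A secondary technical point is justifying the gradient bound in step (ii) when $k$ is only assumed continuous: here one leans on the mild regularity already implicit in $\sigma_p<\infty$ (or replaces the gradient estimate by a direct modulus-of-continuity argument), and one must also check that the \enquote{almost every} qualifier in the statement does not obstruct passing to the supremum over the compact set.
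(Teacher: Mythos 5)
Your proposal is correct and follows essentially the same route as the source: the paper does not reprove this theorem but imports it from Rahimi--Recht, whose proof is exactly your covering-number argument (pass to $\Delta=x-x'$, Bochner plus the cosine addition identity for unbiasedness, Hoeffding and a union bound on an $\varepsilon$-net, a Markov bound on the Lipschitz constant via $\bbE[L_f^2]\le\sigma_p^2$, then optimization of the ball radius $r$). Your constant bookkeeping is also consistent with the statement as given here: with the paired $(\cos,\sin)$ features one gets the anchor-point exponent $-D\varepsilon^2/16$, which after the $2/(d+2)$ power from balancing $r$ yields precisely the stated $-D\varepsilon^2/(8(d+2))$, and your closing remarks on the regularity needed for the gradient step (finiteness of $\sigma_p$) and on the \enquote{almost every} qualifier are the right caveats.
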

    
    The RFF construction can fail to produce a good approximation with a certain probability, but the failure probability can be pushed down arbitrarily close to $0$ efficiently.
    
    This theorem can be understood as a probabilistic existence result for efficient embedding-quantum approximations of kernel functions, which we present next as our second main contribution.
    We consider the input dimension $d$ as our scaling parameter, which plays the role of $s$.
    We further take $\varepsilon$-approximation to be the supremum of the pointwise difference almost everywhere, which we inherit from Theorem~\ref{thm:RFF_existence}.
    In this result we do not need to fix how the input kernel sequence $\{k_s\}_{s\in\bbN}$ is specified, but we assume it is specified in a way that allows us to approximate it using a quantum computer.
    When it comes to what definition of efficiency we need to use, we consider first space efficiency, since we talk about the required number of dimensions to approximate the kernel as an explicit inner product.
    Later on we pin down the time complexity as well.

    In the following, for clarity of presentation, we decouple the construction of EQKs via RFFs as a two-step process: first produce the (classical) feature map $z$, and second realize the inner product as an EQK as in Lemma~\ref{l:innerprodqs}.
    As of now, Algorithm~\ref{alg:RFF} produces a classical approximation via Random Fourier Features, not an EQK yet.
    Next we add a smoothness assumption to Theorem~\ref{thm:RFF_existence} to produce Corollary~\ref{cor:secondderivative}.
    Further below we introduce Algorithm~\ref{alg:QRFF}, which takes the feature map $z$ from Algorithm~\ref{alg:RFF} and produces an EQK from it using Algorithm~\ref{alg:QEPIP}.

    \begin{corollary}[Smooth shift-invariant kernels]\label{cor:secondderivative}
        Let $\calX_d\subseteq[-R,R]^d$ be a compact domain.
        For $\varepsilon>0$, let $k_d$ be a continuous shift-invariant kernel function.
        Assume the kernel fulfills $k_d(0)=1$ and it has bounded second derivatives at the origin $\lvert\partial^2_i k_d(0)\rvert\leq B$.
        Then Algorithm~\ref{alg:RFF} produces an $\varepsilon$-approximation of $k_d$ as an explicit inner product.
        In particular, the scaling of the required dimension $D$ of the (probabilistic) feature map is
        \begin{align}
            D &\in\calO\left(\frac{d}{\varepsilon^2}\log\frac{Rd\sqrt{B}}{\varepsilon}\right).
        \end{align}
    \end{corollary}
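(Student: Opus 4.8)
The plan is to specialize Theorem~\ref{thm:RFF_existence} by controlling the two problem-dependent quantities in its dimension bound, namely $\operatorname{diam}(\calX_d)$ and the variance $\sigma_p^2=\bbE_p[\lVert\omega\rVert^2]$, in terms of the explicit parameters $R$, $d$, and $B$. Since Theorem~\ref{thm:RFF_existence} already supplies the sampling-complexity bound, all that remains is to upper bound these two geometric/spectral quantities and substitute.

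First I would bound the diameter. Because $\calX_d\subseteq[-R,R]^d$, any two points differ by at most $2R$ in each of the $d$ coordinates, so that $\operatorname{diam}(\calX_d)\le 2R\sqrt{d}$. The main step is to relate the variance $\sigma_p$ of the Fourier measure to the curvature of $k_d$ at the origin. By Bochner's theorem (Theorem~\ref{thm:Bochner}) and the normalization $k_d(0)=1$, we may write $k_d(\Delta)=\int_{\bbR^d}p(\omega)e^{i\langle\omega,\Delta\rangle}\diff\omega$ with $p$ a probability distribution. Differentiating twice under the integral with respect to the $i$-th coordinate and evaluating at $\Delta=0$ gives
\begin{align}
    \partial_i^2 k_d(0) &= -\int_{\bbR^d}\omega_i^2\, p(\omega)\diff\omega = -\bbE_p[\omega_i^2].
\end{align}
Hence $\bbE_p[\omega_i^2]=\lvert\partial_i^2 k_d(0)\rvert\le B$, and summing over the $d$ coordinates yields
\begin{align}
    \sigma_p^2 &= \bbE_p[\lVert\omega\rVert^2] = \sum_{i=1}^d \bbE_p[\omega_i^2] \le dB.
\end{align}

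Substituting $\sigma_p\le\sqrt{dB}$ and $\operatorname{diam}(\calX_d)\le 2R\sqrt{d}$ into the bound of Theorem~\ref{thm:RFF_existence} gives $\sigma_p\operatorname{diam}(\calX_d)\le 2Rd\sqrt{B}$, so that the constant prefactor is absorbed into the $\calO$-notation and
\begin{align}
    D &\in\calO\left(\frac{d}{\varepsilon^2}\log\frac{Rd\sqrt{B}}{\varepsilon}\right),
\end{align}
as claimed. The $\varepsilon$-approximation guarantee itself is inherited verbatim from Theorem~\ref{thm:RFF_existence}, since the hypotheses there (continuity, shift-invariance, $k_d(0)=1$) are all assumed here.

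The only delicate point, and the step I would be most careful about, is the interchange of differentiation and integration used to obtain $\partial_i^2 k_d(0)=-\bbE_p[\omega_i^2]$. I would justify it by observing that the assumption $\lvert\partial_i^2 k_d(0)\rvert\le B$ presupposes these derivatives exist and are finite; the resulting finiteness of $\bbE_p[\omega_i^2]$ then furnishes the integrable dominating function $\omega_i^2\, p(\omega)$ required to differentiate under the integral via dominated convergence. Everything else is a routine substitution into the pre-existing RFF bound, so I do not anticipate any real obstacle beyond this standard regularity check.
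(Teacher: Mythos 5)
Your proof is correct and follows essentially the same route as the paper: both bound $\operatorname{diam}(\calX_d)\leq 2R\sqrt{d}$, use Bochner's theorem together with the identity $\sigma_p^2=-\sum_{i=1}^d\partial_i^2 k_d(0)\leq dB$ (which the paper cites as a standard Fourier identity rather than deriving via differentiation under the integral, as you do), and substitute into Theorem~\ref{thm:RFF_existence}. Your explicit attention to the regularity needed for the interchange of differentiation and integration is a point the paper glosses over by citation, so if anything your write-up is slightly more careful.
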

    \begin{proof}
        From Bochner's theorem (Theorem~\ref{thm:Bochner}), 
        we know that $k_d$ is the FT of a probability distribution $p_d$.
        Next, a standard Fourier identity allows us to relate the variance $\sigma^2_d$ of $p_d$ and the trace of the Hessian $H(k_d)$ at the origin $\Delta=0$ (see, e.g., Refs.~\cite{rahimi2007random, scholkopf2002learning}):
        \begin{align}
            \sigma^2_d &= -\Tr\left\{\left.H(k_d)\right|_{\Delta=0}\right\} = -\sum_{i=1}^d\left.\frac{\partial^2k_d}{\partial\Delta_i^2}\right|_{\Delta=0}.
        \end{align}
        Using the assumption that $\lvert\partial_i^2 k(0)\rvert\leq B$ for all $i\in[d]$, this results in 
        $\sigma_d^2\leq dB$.
        In parallel, we can upper bound the diameter of $\calX_d$ by the diameter of $[-R,R]^d$ to obtain $\operatorname{diam}(\calX_d)\leq2R\sqrt{d}$.
        By plugging the bounds on $\sigma^2_d$ and $\operatorname{diam}(\calX_d)$ into the bound of Theorem~\ref{thm:RFF_existence} we obtain the claimed result
        \begin{align}
            D &\in\calO\left(\frac{d}{\varepsilon^2}\log\frac{\sigma_{d}\operatorname{diam}(\calX_d)}{\varepsilon}\right) \in\calO\left(\frac{d}{\varepsilon^2}\log\frac{Rd\sqrt{B}}{\varepsilon}
            \right).
        \end{align}
    \end{proof}

    In Appendix~C, we provide Corollary~C.1, which fixes the number of bits of precision required to achieve a close approximation to the second derivative using finite difference methods.

    Indeed, Corollary~\ref{cor:secondderivative} ensures that the Hilbert-space dimension required to $\varepsilon$-approximate any shift-invariant kernel fulfilling mild conditions of smoothness scales at most polynomially in the relevant scale parameters, if $R$ and $B$ are considered to be constant.
    This does not provide an answer to Question~\ref{Q:existence_neqk} yet, as the result only talks about existence of a space-efficient approximation, and not about the complexity of finding such an approximation only from the specification of each kernel in the sequence $\{k_d\}_d$.
    Noteworthy is that so far we have not made any assumptions on the complexity of finding the inverse FT of $k_d$, nor the complexity of sampling from it.
    
    Before going forward, one could enquire whether the upper bound from Corollary~\ref{cor:secondderivative} could be forced to scale exponentially in $d$.
    In this direction, one would e.g. consider cases where $R$ and $B$ are not fixed, but rather also depend on $d$.
    But, since both $R$ and $B$ appear inside a logarithm, in order to achieve a scaling exponential in $d$ overall, we would need to let either $R$ or $B$ to grow doubly-exponentially.
    While that remains a possibility, we point out that in such a case we would easily lose the ability to $\varepsilon$-approximate each of the kernels $k_d$ quantum-efficiently to begin with, so we judge these scenarios as less relevant.
    And even then, it should be noted that Theorem~\ref{thm:RFF_existence} offers only an upper bound to the required dimension $D$.
    In order to discuss whether $D$ can be forced to scale exponentially, we would also need a lower bound.
    The same reasoning applies for $\varepsilon$.

    Notice in Corollary~\ref{cor:secondderivative} we talk about the required feature dimension, not the number of qubits.
    Indeed, we can encode the feature vectors (which are nicely normalized) onto quantum states, as presented in Algorithm~\ref{alg:QRFF}, which we call \emph{quantum random Fourier features} (QRFF).
    What QRFF does is first obtain the probablistic map $z$ from RFF, and then encode it into quantum states and take their inner product as in Lemma~\ref{l:innerprodqs}.
    By construction, the feature maps produced by RFF are unit vectors with respect to the $2$-norm, $\lVert z(x)\rVert_2=1$ for any $x\in\calX$.
    For Algorithm~\ref{alg:QEPIP} we require unit vectors with respect to the $1$-norm, so we need to renormalize the vectors and this introduces another multiplicative factor, which now depends on the input vectors:
    
    \begin{lemma}[Inner product normalization]\label{l:2normQRFF}
        Let $r,r'\in\ell_2^d$ be $2$-norm unit vectors $\lVert r\rVert_2 = \lVert r'\rVert_2=1$.
        Then, the  identity 
         \begin{align}
            \langle r,r'\rangle &= \lVert r\rVert_1\lVert r'\rVert_1\left(2^n \Tr\left\{\rho_{\tr}\rho_{\tr'}\right\} -1\right)
        \end{align}
        holds, where $\tr^{(\prime)}=r^{(\prime)}/\lVert r^{(\prime)}\rVert_1\in\ell_1^d$ corresponds to 
        renormalizing with respect to the $1$-norm.
        Here $\rho_{\tr}$ refers to encoding $\tr$ onto a quantum state using Algorithm~\ref{alg:QEPIP}.
    \end{lemma}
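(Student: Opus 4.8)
The plan is to reduce this directly to Lemma~\ref{l:innerprodqs}, which already establishes the identity for $1$-norm unit vectors, and then to absorb the change of normalization through bilinearity of the Euclidean inner product. The whole content of the present statement is this scalar bookkeeping; the substantive work has already been done.

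First I would check that the renormalization is well-defined. Since $r,r'\in\ell_2^d$ are $2$-norm unit vectors they are in particular nonzero, so $\lVert r\rVert_1>0$ and $\lVert r'\rVert_1>0$, and the vectors $\tr = r/\lVert r\rVert_1$ and $\tr' = r'/\lVert r'\rVert_1$ are well-defined with $\lVert\tr\rVert_1=\lVert\tr'\rVert_1=1$ by construction. Thus $\tr,\tr'\in\ell_1^d$ are legitimate inputs to Algorithm~\ref{alg:QEPIP}, and the states $\rho_{\tr},\rho_{\tr'}$ in the statement are exactly the ones the algorithm produces.

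Second, I would apply Lemma~\ref{l:innerprodqs} to the pair $\tr,\tr'$ to get $\langle\tr,\tr'\rangle = 2^n\Tr\{\rho_{\tr}\rho_{\tr'}\}-1$. By bilinearity, $\langle\tr,\tr'\rangle = \langle r,r'\rangle / (\lVert r\rVert_1\lVert r'\rVert_1)$, and multiplying both sides of the previous display by the positive scalar $\lVert r\rVert_1\lVert r'\rVert_1$ yields $\langle r,r'\rangle = \lVert r\rVert_1\lVert r'\rVert_1\left(2^n\Tr\{\rho_{\tr}\rho_{\tr'}\}-1\right)$, which is the claim.

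I do not expect any genuine obstacle here, since the result is a one-line corollary of Lemma~\ref{l:innerprodqs}. The only point I would flag for care is that, unlike the constant $2^n$ prefactor in Lemma~\ref{l:innerprodqs}, the new factors $\lVert r\rVert_1$ and $\lVert r'\rVert_1$ are \emph{data-dependent}; this is precisely why the step is isolated as its own lemma. Downstream it will matter that these norms can be evaluated classically from the explicit feature vectors $z(x)$ produced by Algorithm~\ref{alg:RFF}, so that the data-dependent rescaling can simply be multiplied onto the estimated Hilbert--Schmidt inner product after the quantum evaluation, without requiring any extra quantum resources.
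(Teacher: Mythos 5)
Your proof is correct and matches the paper's own argument exactly: both invoke Lemma~\ref{l:innerprodqs} on the $1$-norm-renormalized vectors $\tr,\tr'$ and then pull the scalars $\lVert r\rVert_1\lVert r'\rVert_1$ out by bilinearity. Your added observations on well-definedness and the data-dependence of the prefactors are sensible but go beyond what the paper's one-line proof records.
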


    \begin{proof}
    The proof is given in Appendix~B.
    \end{proof}
    \begin{remark}{(Bounded pre-factors).}
        The re-normalization is of no concern, since the fact $r,r'$ are $2$-norm unit vector implies that their $1$-norm is bounded $\lVert r\rVert_1\in[1,\sqrt{d}]$.
    \end{remark}
    This explains the extra factor $g(x)g(x')$ appearing in Algorithm~\ref{alg:QRFF}, where we have
    \begin{align}\label{eq:g(x)}
        g(x) &= \lVert z(x)\rVert_1\\
        &= \sqrt{\frac{2}{D}}\sum_{i=1}^{D/2} \lvert\cos\langle\omega_i,x\rangle\rvert + \lvert\sin\langle\omega_i,x\rangle\rvert,\nonumber
    \end{align}
    and $g(x)\in[1,\sqrt{D}]$, so $g(x)g(x')\in[1,D]$ for all $x,x'\in\calX$.
    
    \begin{algorithm}[H]
        \caption{Quantum random Fourier features, $\mathtt{QRFF}$}
        \label{alg:QRFF}
        \begin{algorithmic}[1]
            \Require A PSD, continuous, shift-invariant kernel ${k(x-x')}$.
            \Ensure A quantum feature map $\rho$ so that $k(x-x')\approx g(x)g(x') (2^n\Tr\{\rho(x),\rho(x')\}-1)$. \Comment $g(x)$ defined in Eq.~\eqref{eq:g(x)}.
            \State \label{algline:callRFF} $z \gets \mathtt{RFF}(k)$ \Comment Apply Algorithm~\ref{alg:RFF}.
            \State \label{algline:qfeatmap} $\rho(\cdot) \gets \mathtt{C2QE}\left(z(\cdot)\right)$ \Comment Apply Algorithm~\ref{alg:QEPIP}.
            \State\Return$\rho$
        \end{algorithmic}
    \end{algorithm}
    
    The number of qubits required $n$ scales logarithmically in the dimension of the feature vector $n\in\calO(\log(D))$.
    So the number of qubits $n$ necessary for $\varepsilon$-approximating shift-invariant kernel functions as EQKs has scaling $n\in\tcalO\left(\log d/\varepsilon^2\right)$, where the tilde hides doubly-logarithmic contributions.

    With these, we can almost conclude that Corollary~\ref{cor:secondderivative} results in a positive answer to Question~\ref{Q:existence_neqk} for shift-invariant kernels provided they are smooth, with smoothness quantified as the magnitude of the second derivative at the origin.
    We are still missing the complexity of producing samples from the inverse FT of each $k_d$, as in Step~\ref{algline:sample} of Algorithm~\ref{alg:RFF}.
    The efficiency criterion taken here would be the required number of qubits.
    Nevertheless, it is true that for any smooth shift-invariant kernel, there exists an $\varepsilon$-approximation as an EQK using at most logarithmically many qubits following Algorithm~\ref{alg:QRFF}.
    
    The complexity of Algorithm~\ref{alg:RFF} could become arbitrarily large based on the difficulty of sampling from the distribution $p_d$ corresponding to the inverse FT of $k_d$.
    In order to ensure embedding-quantum efficiency, we need to add the requirement of efficient sampling from the inverse FT of the kernel.
    With this, we proceed to state our main result: efficient EQKs are universal within the class of smooth shift-invariant kernels whose inverse FT is efficiently sampleable.
    
    \begin{corollary}[Polynomial run-time]
    \label{cor:time_efficient}
        Under the assumptions of Corollary~\ref{cor:secondderivative}, let $p_d$ be the inverse FT of the kernel $k_d$.
        If obtaining samples from $p_d$ can be done in time $t\in\calO(\poly(d))$, then both Algorithms~\ref{alg:RFF} and~\ref{alg:QRFF} have total run-time complexity polynomial in $d$.
    \end{corollary}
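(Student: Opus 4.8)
The plan is to track the runtime contribution of each step in both algorithms and verify that composing polynomially-many polynomial-time primitives preserves overall polynomial scaling in $d$. The backbone of the argument is the dimension bound supplied by Corollary~\ref{cor:secondderivative}, which under the stated smoothness assumptions (treating $R$ and $B$ as constants) already gives $D\in\calO(\poly(d))$. With this in hand, everything reduces to checking that each elementary operation is polynomial and that there are at most polynomially many of them, so the total is a product of polynomials.

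First I would bound the runtime of Algorithm~\ref{alg:RFF}. Step~\ref{algline:iFT} is purely analytical: it specifies the target distribution $p_d$ and incurs no algorithmic cost. Step~\ref{algline:sample} draws $D/2$ i.i.d.\ samples from $p_d$; by the new hypothesis each individual draw costs $\calO(\poly(d))$, and since $D\in\calO(\poly(d))$ the total sampling cost is a product of two polynomials, hence $\calO(\poly(d))$. Step~\ref{algline:featmap} amounts to storing the $D/2$ frequency vectors $\omega_i\in\bbR^d$, so that evaluating $z$ at any point costs $\calO(Dd)$ arithmetic operations---one inner product $\langle\omega_i,\cdot\rangle$ and one trigonometric evaluation per entry---which is again polynomial in $d$. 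This settles the claim for Algorithm~\ref{alg:RFF}.

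Next I would handle Algorithm~\ref{alg:QRFF} by composition. Step~\ref{algline:callRFF} invokes Algorithm~\ref{alg:RFF}, already shown to run in time $\calO(\poly(d))$. Step~\ref{algline:qfeatmap} applies Algorithm~\ref{alg:QEPIP} ($\mathtt{C2QE}$) to the feature map $z$, whose ambient dimension is $D$; by Lemma~\ref{l:mappingqs} this runs in time $\calO(\poly(D))$, and substituting $D\in\calO(\poly(d))$ yields $\calO(\poly(d))$. Concretely, $\mathtt{C2QE}$ must (i) evaluate the $D$ entries of $z(x)$ to form the sampling weights $\lvert z_i(x)\rvert/\lVert z(x)\rVert_1$, costing $\calO(Dd)$ as above, (ii) draw a single index in $\calO(D)$ time, and (iii) prepare a Pauli eigenstate on $n=\lceil\log_4(D+1)\rceil\in\calO(\log d)$ qubits using only $\calO(n)$ gates. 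All three are polynomial, so the composition is polynomial, and the qubit count is in fact logarithmic.

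I do not expect a genuine obstacle here: the content is that the relevant factors multiply rather than compound, so the composition stays polynomial. The only point requiring a word of care is that the dimension bound of Corollary~\ref{cor:secondderivative} carries $R$ and $B$ inside a logarithm, so polynomiality of $D$ (and hence of every downstream step) relies on $R\sqrt{B}$ being at most exponential in $\poly(d)$. In the regime where $R$ and $B$ are constants this is immediate; I would simply flag that the conclusion degrades only if these quantities are permitted to grow doubly-exponentially in $d$, a regime already argued to be unphysical in the discussion preceding this corollary since it would already destroy quantum-efficient approximability of the $k_d$ themselves.
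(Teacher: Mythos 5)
Your proposal is correct and follows essentially the same route as the paper's proof: a step-by-step run-time accounting of Algorithms~\ref{alg:RFF} and~\ref{alg:QRFF}, anchored on the dimension bound $D\in\calO(\poly(d))$ from Corollary~\ref{cor:secondderivative} and the polynomial run-time of Algorithm~\ref{alg:QEPIP} from Lemma~\ref{l:mappingqs}. Your version merely adds more explicit per-step cost estimates (e.g.~$\calO(Dd)$ for the feature evaluation, $\calO(\log d)$ qubits) and the remark about $R$ and $B$ sitting inside a logarithm, which the paper discusses separately rather than inside the proof.
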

    \begin{proof}
        For Algorithm~\ref{alg:RFF}, Step~\ref{algline:iFT} is only a mathematical definition, Step~\ref{algline:sample} runs in time linear in $D$ and polynomial in $d$ by assumption, finally Step~\ref{algline:featmap} also runs in time linear in $D$ and polynomial in $d$.
        Corollary~\ref{cor:secondderivative} further states that $D$ is at most essentially linear in $d$.
        In total, Algorithm~\ref{alg:RFF} has run-time complexity at most polynomial in $d$.

        In turn, the first step of Algorithm~\ref{alg:QRFF} is calling Algorithm~\ref{alg:RFF}, which we just saw takes time polynomial in $d$.
        Step~\ref{algline:qfeatmap} uses Algorithm~\ref{alg:QEPIP}, which also runs in time polynomial in $d$.
        With this it is also clear that the total run-time complexity of Algorithm~\ref{alg:QRFF} is at most polynomial in $d$.
    \end{proof}

    With the added condition of producing samples in polynomial time, we obtain a positive answer to Question~\ref{Q:existence_neqk}.
    Namely, we show the existence of time-efficient EQK approximations for any kernel fulfilling the smoothness and efficient sampling conditions from Corollaries~\ref{cor:secondderivative} and~\ref{cor:time_efficient}.
    Notice our result does not even require quantum efficiency of $k_d$, so in fact we have proved something stronger.

    One point to address is whether assuming quantum efficiency for evaluating $k_d$ directly implies the ability to sample efficiently from the inverse FT of $k_d$.
    In the cases where this is true, adding quantum-efficiency to the assumptions of Corollary~\ref{cor:secondderivative} suffices to prove that efficient EQKs are universal within the class of efficient shift-invariant kernels.
    At the face of it, it is unclear whether for all reasonable ways of specifying $k_d$ the capacity to efficiently evaluate it using a quantum computer implies an efficient algorithm of sampling from the distribution obtained by the inverse FT.
    If the task were to sample from $k_d$ itself (in case $k_d$ were a probability distribution), then we know that being able to evaluate $k_d$ does not imply the capacity to efficiently sample from $k_d$\footnote{In the black box model, lower bounds on Grover's search algorithm imply an exponential cost of sampling.
    Also, NP-hardness of e.g. sampling from low-temperature Gibbs states of the Ising model~\cite{troyer2005computational} provides examples relevant for other input models.}.
    Then, it is unclear why sampling from the inverse FT of $k_d$ would be easy in every case, especially, e.g., in the black mox model.
    Resolving this question is left as an open problem, and we note it is not an entirely new one~\cite{kalai2013complexity, schwarz2013simulating}.

    Finally, it should be noted that, although the feature map produced by Algorithm~\ref{alg:RFF} can be stored classically (assuming $D\in\calO(d/\varepsilon^2)$), this is not a de-quantization algorithm.
    Algorithm~\ref{alg:RFF} requires sampling from the inverse FT of the input kernels $k_d$.
    Reason dictates that, if the kernel is quantum efficient and classically inefficient to $\varepsilon$-approximate, in general sampling from its inverse FT should also be at least classically inefficient.
    This is what we meant earlier when we said we consider quantum time: even though the intermediate variable $z(x)$ can be stored classically, producing it requires usage of a quantum computer.
    
    In this section, we show that smooth shift-invariant kernels admit space-efficient EQK approximations.
    We also give sufficient conditions for the same result to hold for embedding-quantum efficiency in total runtime.
    In the next section, we see how the same ideas extend to another class of kernel functions, beyond shift-invariant ones.

\section{Composition kernel and projected quantum kernel}\label{s:composition}

    In this section we introduce a new class of quantum kernels which can still be turned into EQKs using another variant of Algorithm~\ref{alg:RFF}.
    We show that the new class also admits time-efficient approximations as EQK, and that the projected quantum kernel from Ref.~\cite{huang2021power} belongs to this class.
    During the publication phase of this manuscript we were made aware of the similarities between the composition kernels we introduce here and the \enquote{distance kernel with RDM} proposed and studied in Ref.~\cite{nakaji2022deterministic}.
    
    Again, we separate the EQK-based approximation into two steps: first we propose a variant of RFF producing a classical feature map, and second we construct an EQK that evaluates the inner product of pairs of features.

    First we introduce the new class.
    Consider the usual Gaussian kernel with parameter $\sigma>0$ defined as
    \begin{align}
        k(x,x') &= e^{-\frac{\lVert x-x'\rVert^2}{2\sigma^2}},
    \end{align}
    only now we allow for some pre-processing of the inputs ${x\mapsto f(x)}$, resulting in
    \begin{align}\label{eq:composition}
        k_f(x,x') &= e^{-\frac{\lVert f(x)-f(x')\rVert^2}{2\sigma^2}}.
    \end{align}
    The introduction of $f$ breaks shift invariance in general, hence the need to specify both arguments independently $x,x'\in\calX$.
    Since $k_f$ is a PSD kernel for any function $f$, we refer to such constructions as \emph{composition kernel}.
    Next we propose a generalization of Algorithm~\ref{alg:RFF} that also works for composition kernels, as Algorithm~\ref{alg:RFF_PP}, which we call simply \emph{random Fourier features with pre-processing} (RFF\_pp).
    
    \begin{algorithm}[H]
        \caption{RFF with pre-processing, $\mathtt{RFF\_pp}$}
        \label{alg:RFF_PP}
        \begin{algorithmic}[1]
            \Require A composition kernel $k_f(x,x')$, see Eq.~\eqref{eq:composition}.
            \Ensure A feature map $z_f$ so that $k_f(x,x')\approx\langle z_f(x),z_f(x')\rangle$ .
            \State $z \gets \mathtt{RFF}(k\colon f(\calX)\to\bbR)$ \Comment Apply Algorithm~\ref{alg:RFF}.\label{line:domain}
            \State $z_f(\cdot) \gets z(f(\cdot))$ \Comment Apply the pre-processing function.
            \State\Return $z_f$
        \end{algorithmic}
    \end{algorithm}

    Since the core feature of Algorithm~\ref{alg:RFF_PP} is to call Algorithm~\ref{alg:RFF}, we inherit the $\varepsilon$-approximation guarantee.
    Notice in line~\ref{line:domain} of Algorithm~\ref{alg:RFF_PP} we invoke Algorithm~\ref{alg:RFF}, 
    but taking as input domain the range of the pre-processing function, $f(\calX)$, instead of the original domain $\calX$.
    If we assume $f$ to be continuous, it follows that $f(\calX)$ is also compact, which we require for the application of Theorem~\ref{thm:RFF_existence}.

    \begin{proposition}[Performance guarantee of
    Algorithm~\ref{alg:RFF_PP}]\label{prop:composition}
        Let ${f\colon\calX\to[-B,B]^{g_1(d)}}$ be a pre-processing function, and let $k_f$ be the Gaussian kernel composed with $f$, as introduced in Eq.~\eqref{eq:composition}.
        Let finally the parameter of the Gaussian kernel be $\sigma=g_2(d)$.
        If $g_1(d)\in\calO(\poly(d))$ and $g_2(d)\in\Omega(\poly(d)^{-1})$, then Algorithm~\ref{alg:RFF_PP} produces an $\varepsilon$-approximation of $k_f(x,x')$ as an explicit inner product.
        In particular, the required dimension $D$ of the randomized feature map is at most polynomial in the input dimension $d$ and the inverse error $1/\varepsilon$.
        \begin{align}
            D &\in\calO\left(\frac{\poly(d)}{\varepsilon^2}\log\left(\frac{dB}{\varepsilon}\right)\right) \in \tcalO\left(\frac{\poly(d)}{\varepsilon^2}\right).
        \end{align}
    \end{proposition}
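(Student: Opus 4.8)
The plan is to reduce the claim to the random Fourier features guarantee of Theorem~\ref{thm:RFF_existence}, applied not to $k_f$ itself but to the plain Gaussian kernel on the pushed-forward domain $f(\calX)$. The key observation is that, although $k_f$ is not shift-invariant on $\calX$, it factors through $f$ as $k_f(x,x') = k(f(x)-f(x'))$, where $k(\Delta)=e^{-\lVert\Delta\rVert^2/(2\sigma^2)}$ is the genuinely shift-invariant Gaussian kernel on $\bbR^{g_1(d)}$. This is exactly the decomposition that Algorithm~\ref{alg:RFF_PP} exploits: it runs Algorithm~\ref{alg:RFF} on $k$ restricted to the domain $f(\calX)$ to obtain a feature map $z$, and then post-composes with $f$ to define $z_f = z\circ f$.

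First I would verify that the approximation guarantee survives this composition. If $f$ is continuous, then $f(\calX)$ is compact as the continuous image of a compact set, so Theorem~\ref{thm:RFF_existence} applies to $k$ on $f(\calX)$ and yields $\lvert\langle z(y),z(y')\rangle - k(y-y')\rvert<\varepsilon$ uniformly for almost all $y,y'\in f(\calX)$. Substituting $y=f(x)$ and $y'=f(x')$ gives $\langle z_f(x),z_f(x')\rangle = \langle z(f(x)),z(f(x'))\rangle \approx k(f(x)-f(x')) = k_f(x,x')$, which is precisely the desired $\varepsilon$-approximation of $k_f$ as an explicit inner product. Hence no new approximation analysis is required beyond invoking the existing theorem on the transformed domain.

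The quantitative content is then to bound the feature dimension $D$ returned by Theorem~\ref{thm:RFF_existence} under the substitutions $d\mapsto g_1(d)$, $\operatorname{diam}(\calX)\mapsto\operatorname{diam}(f(\calX))$, and $\sigma_p$ computed for the Gaussian kernel in $g_1(d)$ dimensions. For the diameter I would use the containment $f(\calX)\subseteq[-B,B]^{g_1(d)}$ to get $\operatorname{diam}(f(\calX))\leq 2B\sqrt{g_1(d)}$. For $\sigma_p$ I would reuse the Hessian identity from the proof of Corollary~\ref{cor:secondderivative}: since $\partial_i^2 k(0)=-1/\sigma^2$ for the Gaussian kernel, summing over the $g_1(d)$ coordinates gives $\sigma_p^2 = -\Tr\{H(k)|_{\Delta=0}\} = g_1(d)/\sigma^2 = g_1(d)/g_2(d)^2$. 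Plugging $\sigma_p\operatorname{diam}(f(\calX))\leq 2Bg_1(d)/g_2(d)$ into the bound of Theorem~\ref{thm:RFF_existence} then yields
\begin{align}
  D \in \calO\!\left(\frac{g_1(d)}{\varepsilon^2}\log\frac{B\,g_1(d)}{g_2(d)\,\varepsilon}\right).
\end{align}

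Finally I would invoke the hypotheses $g_1(d)\in\calO(\poly(d))$ and $g_2(d)\in\Omega(\poly(d)^{-1})$, so that both $g_1(d)$ and $1/g_2(d)$ are at most polynomial in $d$. Since these quantities enter only inside a logarithm apart from the leading $g_1(d)$ prefactor, the identity $\log(\poly(d))=\calO(\log d)$ absorbs them, and the bound collapses to the claimed $D\in\calO(\poly(d)\varepsilon^{-2}\log(dB/\varepsilon))\in\tcalO(\poly(d)\varepsilon^{-2})$. I do not expect a serious obstacle here: the only genuinely nontrivial step is recognizing that the non-shift-invariance of $k_f$ is harmless, because RFF is applied to the shift-invariant $k$ on the transformed domain, after which everything reduces to a routine substitution into Theorem~\ref{thm:RFF_existence}. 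The one point requiring mild care is the continuity assumption on $f$ needed to guarantee compactness of $f(\calX)$, which the surrounding text already flags.
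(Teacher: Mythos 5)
Your proposal is correct and follows essentially the same route as the paper's proof: run Algorithm~\ref{alg:RFF} on the plain Gaussian kernel over the compact pushed-forward domain $f(\calX)\subseteq[-B,B]^{g_1(d)}$, bound $\operatorname{diam}(f(\calX))\leq 2B\sqrt{g_1(d)}$, substitute into Theorem~\ref{thm:RFF_existence}, and absorb $g_1(d)$ and $1/g_2(d)$ into the logarithm using the polynomial hypotheses. The one place you genuinely diverge is the computation of $\sigma_p$: you reuse the Hessian identity from the proof of Corollary~\ref{cor:secondderivative} to obtain $\sigma_p^2 = g_1(d)/g_2(d)^2$, whereas the paper invokes its Lemma~\ref{l:gaussian_variance}, which evaluates the Gaussian moment integral directly and asserts $\sigma_p^2 = d/\sigma$. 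Your version is in fact the more careful one: the inverse Fourier transform of the Gaussian kernel with parameter $\sigma$ is a Gaussian with per-coordinate variance $1/\sigma^2$, so the second moment in $g_1(d)$ dimensions is $g_1(d)/\sigma^2$; the paper's lemma both misstates the $\sigma$-scaling (it uses $\int_{-\infty}^{\infty} x^2 e^{-a^2x^2/2}\,\mathrm{d}x = \sqrt{2\pi}/a^2$, where the correct identity is $\sqrt{2\pi}/a^3$) and writes the ambient dimension $d$ where the kernel actually lives on $\bbR^{g_1(d)}$. Since $\sigma_p$ enters only inside the logarithm and both $g_1(d)$ and $1/g_2(d)$ are polynomial in $d$ by hypothesis, neither discrepancy changes the asymptotics, and your final bound $D\in\calO\left(\frac{g_1(d)}{\varepsilon^2}\log\frac{B\,g_1(d)}{g_2(d)\,\varepsilon}\right)\in\tcalO\left(\frac{\poly(d)}{\varepsilon^2}\right)$ matches the proposition exactly, with the compactness-via-continuity point handled the same way the paper flags it.
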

    
    \begin{proof}
    The proof is provided in Appendix~D.
    \end{proof}

    Notice Proposition~\ref{prop:composition} shares a deep similarity with Corollary~\ref{cor:secondderivative}.
    Both are direct applications of Theorem~\ref{thm:RFF_existence} to kernels fulfilling different properties.
    They have nevertheless one stark difference.
    The assumptions in Proposition~\ref{prop:composition} are sufficient to guarantee we can sample efficiently from the inverse FT of the kernels in the sequence $k_d$.
    This is because the probability distributions involved, $p_d$, are nothing but Gaussian distribution themselves.
    As shown in Algorithm~\ref{alg:RFF_PP}, the pre-processing function $f$ shows up only after the sampling step.
    So, unlike Corollary~\ref{cor:secondderivative}, Proposition~\ref{prop:composition} already represents a direct positive answer to Question~\ref{Q:existence_neqk}, relying on of Algorithm~\ref{alg:QRFF_PP}, which follows.

    In Algorithm~\ref{alg:QRFF_PP}, \emph{quantum random Fourier features with pre-processing} (QRFF\_pp), we take the output of Algorithm~\ref{alg:RFF_PP} and convert it into a quantum feature map with Algorithm~\ref{alg:QEPIP}, leading to the corresponding EQK approximation.
    Again now the multiplicative term $g(f(x))g(f(x'))$ appears from the re-normalization in Lemma~\ref{l:2normQRFF}.
    
    \begin{algorithm}[H]
        \caption{QRFF with pre-processing, $\mathtt{QRFF\_pp}$}
        \label{alg:QRFF_PP}
        \begin{algorithmic}[1]
            \Require A composition kernel $k_f(x,x')$, see Eq.~\eqref{eq:composition}.
            \Ensure A quantum feature map $\rho_f$ so that $k_f(x,x')\approx g(f(x))g(f(x'))(2^n\Tr\{\rho_f(x)\rho_f(x')\}-1)$. \Comment $g(x)$ defined in Eq.~\eqref{eq:g(x)}.
            \State $z_f \gets \mathtt{RFF\_pp}(k_f)$ \Comment Apply Algorithm~\ref{alg:RFF_PP}.
            \State $\rho_f(\cdot) \gets \mathtt{C2QE}\left(z_f(\cdot)\right)$ \Comment Apply Algorithm~\ref{alg:QEPIP}.
            \State\Return$\rho_f$
        \end{algorithmic}
    \end{algorithm}

    The scaling in the number of qubits is once more logarithmic in the scaling of the number of required dimensions given in Proposition~\ref{prop:composition}.
    For the number of qubits $n$, the scaling is $n\in \tcalO\left(\log(d/\varepsilon^2)\right)$.
    Moreover, the run-time complexity of Algorithm~\ref{alg:QRFF_PP} is polynomial in $d$ and in the run-time complexity of evaluating $f$, since now the sampling step corresponds to sampling from a product Gaussian distribution.

    For completeness sake, it would be good also to rule out the possibility of classically efficient approximation.
    Although it might be counter-intuitive, in principle there could exist pre-processing functions $f$ that are hard to evaluate but which result in a composition kernel $k_f$ that is not hard to evaluate.
    In the following we just confirm that there exist pre-processing functions which are hard to evaluate classically which result in composition kernels which are also hard to evaluate classically.
    
    \begin{proposition}[No efficient classical approximation]\label{prop:f_hardness}
        There exists a function $f\colon\calX\to[0,1]^d$ which can be $\varepsilon$-approximated quantum efficiently in $d$ for which the composition kernel $k_f$ cannot be $\varepsilon$-approximated classically efficiently in $d$, with
        \begin{align}
            k_f(x,x') &\coloneqq \exp\left(-\lVert f(x)-f(x')\rVert^2\right).
        \end{align}
        We take $\sigma^2=1/2$ for simplicity.
    \end{proposition}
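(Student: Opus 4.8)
The plan is to prove the statement by a reduction: I will construct $f$ so that a single one of its output coordinates carries a real-valued quantity that is quantum-efficiently approximable but classically hard, and so that querying $k_f$ against a fixed reference point recovers that coordinate. Concretely, I take $f(x)=(a(x),0,\dots,0)\in[0,1]^d$, where $a\colon\calX\to[0,1]$ is the hard-to-estimate function described below, and pad the remaining $d-1$ coordinates with zeros. Quantum efficiency of $f$ is then immediate, since only the first coordinate is nontrivial and it can be $\varepsilon$-approximated with $\calO(1/\varepsilon^2)$ quantum runs by Hoeffding's inequality.

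For the hard coordinate I would use a quantity already known to separate quantum from classical computation. One clean choice is to let $x$ encode a polynomial-size quantum circuit and set $a(x)$ to be its acceptance probability; estimating this to inverse-polynomial additive precision is quantum-efficient but classically inefficient under $\BQP\nsubseteq\mathsf{P/poly}$. An equally good choice is to set $a(x)\coloneqq\kappa_{\rho}(x,x_{\mathrm{ref}})$ equal to the discrete-logarithm embedding quantum kernel of Ref.~\cite{Liu2021discretelog} evaluated against a fixed reference input $x_{\mathrm{ref}}$: this overlap is estimable quantum-efficiently (e.g.\ by a $\SWAP$ test) but classically hard to approximate under the discrete-logarithm assumption.

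The reduction then runs as follows. I fix a reference point $x_0\in\calX$ with $a(x_0)=0$ (for the circuit encoding, a circuit that deterministically rejects; for the kernel encoding, an input mapped orthogonally to $x_{\mathrm{ref}}$). Because $\sigma^2=1/2$, I have
\begin{align}
    k_f(x,x_0) &= \exp\left(-\lVert f(x)-f(x_0)\rVert^2\right) = \exp\left(-a(x)^2\right),
\end{align}
so that $a(x)=\sqrt{-\ln k_f(x,x_0)}$. Since $a(x)\in[0,1]$, the value $k_f(x,x_0)$ lies in $[e^{-1},1]$ and is bounded away from zero, so $\ln$ is Lipschitz on this range and an additive $\varepsilon$-approximation of $k_f(x,x_0)$ yields an additive $\calO(\varepsilon)$-approximation of $a(x)^2$. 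Hence a hypothetical classical algorithm that $\varepsilon$-approximates $k_f$ in time $\poly(d,1/\varepsilon)$ would, upon a single query at $(x,x_0)$ with $\varepsilon\coloneqq\eta^2/C$, produce an $\eta$-approximation of $a(x)$ in time $\poly(d,1/\eta)$, contradicting the classical hardness of approximating $a$.

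The step needing the most care is the error propagation through $a(x)=\sqrt{-\ln k_f(x,x_0)}$ at the degenerate point $a(x)\approx 0$, where the square root has unbounded derivative: there an additive error $\varepsilon$ in $k_f$ becomes only an $\calO(\sqrt\varepsilon)$ error in $a(x)$. This is harmless, because squaring an inverse-polynomial precision is still inverse-polynomial, and the hardness of estimating $a$ holds for every inverse-polynomial additive precision; I would just make this quadratic loss explicit and confirm it keeps the reduction within the polynomial regime. A secondary bookkeeping point is to verify that $x_0$ (and $x_{\mathrm{ref}}$ in the kernel variant) is a legitimate element of the domain sequence $\calX_d$ and that the image of $f$ indeed lands in $[0,1]^d$ after any needed rescaling.
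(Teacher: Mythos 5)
Your proof is correct and follows essentially the same reduction as the paper's: the paper likewise constructs $f$ with a single classically hard coordinate (via subdomains on which $f(x)=(f_1(x),0,\dots,0)$ and $f(x)=(0,\dots,0)$), evaluates $k_f$ between a point of each to obtain $\exp(-f_1(x)^2)$, and inverts to recover the hard function. The only notable difference is that the paper dispatches the inversion with a one-line remark on the injectivity of $\exp$ and of the square on nonnegative values, whereas you additionally propagate the additive error through the Lipschitz bound on $\ln$ over $[e^{-1},1]$ and the square-root loss near $a(x)\approx 0$, and name concrete instantiations of the hard coordinate --- a welcome tightening, since the statement concerns $\varepsilon$-approximation rather than exact evaluation.
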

    
    \begin{proof}
        Here, the proof is presented in Appendix~D.
    \end{proof}
    
    By selecting pre-processing functions that are quantum efficient but classically inefficient, we reach a class of quantum kernels that are not shift-invariant, but for which the RFF\_pp construction still applies, from Algorithm~\ref{alg:RFF_PP}.
    Next, we show that this class of kernels contains the recently introduced \emph{projected quantum kernel}~\cite{huang2021power}.
    
    Going back to the quantum feature map $x\mapsto\rho(x)$, the authors of Ref.~\cite{huang2021power} proposed a mapping from the exponentially-sized $\rho(x)$, to an array of reduced density matrices $(\rho_k(x))_{k=1}^N$, for an $N$-qubit quantum state\footnote{
        We distinguish $N$, the number of qubits used to evaluate the projected quantum kernel, from $n$, the number of qubits required to approximate it as an EQK.
    }.
    According to our definitions, we would not call this a quantum feature map, since the data is not mapped onto the Hilbert space of quantum states.
    Rather, it is natural to think of this as a quantum pre-processing function.
    A nice property of this alternative mapping is that the feature vectors can be efficiently stored in classical memory, even though obtaining each of the $\rho_k(x)$ matrices can only be done efficiently using a quantum computer in general.
    Once these are stored, the projected quantum kernel $k^\text{PQ}$ is the composition kernel as we introduced earlier, just setting $f$ to be the function that computes the entries of all the reduced density matrices
    \begin{align}
        k^\text{PQ}(x,x') &= \exp\left(-\gamma\sum_{k=1}^N\lVert\rho_k(x)-\rho_k(x')\rVert_F^2\right),
    \end{align}
    where $\gamma>0$ can be safely taken as $\gamma=1/(2\sigma^2)$, $\rho_k(x)$ is the reduced density matrix of the $k^\text{th}$ qubit of $\rho$, and recall $\lVert\cdot\rVert_F$ is the Frobenius norm.
    The number of qubits $N$ is left as a degree of freedom, but for $d$-dimensional input data, reason would say the number of qubits used would be $N\propto d$, or at most $N\in\calO(\poly(d))$.

    The projected quantum kernel enjoys valuable features.
    On the one hand, it was used in an effort to prove quantum-classical learning separations in the context of data from quantum experiments~\cite{huang2021power}.
    On the other hand, the projected quantum kernel is less vulnerable to the exponential kernel concentration problem~\cite{thanasilp2022exponential}.
    The projected quantum kernel is also deeply related to the shadow tomography formalism and guarantees can be placed on its performance for quantum phase recognition~\cite{huang2022provably} among others.

    \begin{proposition}[Projected quantum kernel as an efficient EQK]\label{prop:projected}
        The projected quantum kernel $k^\text{PQ}$ fulfills the assumptions of Proposition~\ref{prop:composition}, so it can be efficiently approximated as an EQK with the number of dimensions required $D$ fulfilling
    \begin{align}
        D\in \tcalO\left(\frac{\poly(d)}{\varepsilon^2}\right).
    \end{align}
    \end{proposition}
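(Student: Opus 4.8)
The plan is to recognize $k^\text{PQ}$ as a special case of the composition kernel of Eq.~\eqref{eq:composition} and then invoke Proposition~\ref{prop:composition} directly; the entire content of the proof lies in identifying the correct pre-processing function $f$ and checking that it satisfies the four hypotheses of that proposition. I would define $f$ to be the quantum pre-processing map that, for each input $x$, computes all $N$ single-qubit reduced density matrices $\rho_k(x)$ and flattens their entries into a single real vector, splitting each complex off-diagonal entry into its real and imaginary parts. Since each $\rho_k(x)$ is a $2\times2$ Hermitian matrix, it contributes a constant number of real coordinates, so the length of $f(x)$ is $g_1(d)=\calO(N)=\calO(\poly(d))$, matching the first hypothesis of Proposition~\ref{prop:composition}.

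Next I would verify the remaining hypotheses. For boundedness, the entries of any density matrix are controlled by its trace (diagonal entries lie in $[0,1]$ and off-diagonals are bounded by $1/2$ in magnitude), so every coordinate of $f(x)$ lies in $[-1,1]$, giving $B=1=\calO(1)$ and hence ${f\colon\calX\to[-B,B]^{g_1(d)}}$ as required. For the exponent, under the standard isometric identification of $\bbC^{m}$ with $\bbR^{2m}$ the squared Euclidean distance of the flattened vectors reproduces the sum of Frobenius distances,
\begin{align}
    \lVert f(x)-f(x')\rVert^2 &= \sum_{k=1}^N \lVert\rho_k(x)-\rho_k(x')\rVert_F^2,
\end{align}
so that $k^\text{PQ}(x,x')=\exp(-\gamma\lVert f(x)-f(x')\rVert^2)$ is exactly the composition kernel with $\sigma^2=1/(2\gamma)$. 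Finally, the kernel-width condition $g_2(d)=\sigma\in\Omega(\poly(d)^{-1})$ holds whenever $\gamma$ is a constant (or grows at most polynomially in $d$), which is the standard regime for the projected quantum kernel.

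With all four hypotheses in place, Proposition~\ref{prop:composition} guarantees that Algorithm~\ref{alg:RFF_PP} produces an $\varepsilon$-approximation of $k^\text{PQ}$ as an explicit inner product with feature dimension
\begin{align}
    D &\in\calO\left(\frac{\poly(d)}{\varepsilon^2}\log\frac{dB}{\varepsilon}\right)\subseteq\tcalO\left(\frac{\poly(d)}{\varepsilon^2}\right),
\end{align}
and feeding this feature map into Algorithm~\ref{alg:QRFF_PP} realizes the approximation as an EQK, establishing the claim.

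I expect the only delicate point to be the bookkeeping in the norm identity: one must ensure the flattening separates real and imaginary parts (rather than listing complex entries directly) so that the real Euclidean norm reproduces the Frobenius norm exactly, and one must invoke continuity of the embedding $U(x)$ so that $f$ is continuous and $f(\calX)$ compact, as Theorem~\ref{thm:RFF_existence} requires inside Proposition~\ref{prop:composition}. A secondary subtlety worth flagging is that efficiency of $f$ itself relies on the fact that each reduced density matrix $\rho_k(x)$ can be estimated efficiently on a quantum computer (e.g.\ via classical shadows), which is precisely what makes $k^\text{PQ}$ a quantum-efficient but classically hard kernel in the sense of Proposition~\ref{prop:f_hardness}.
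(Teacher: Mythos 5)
Your proof follows essentially the same route as the paper's: it identifies the pre-processing function $f$ as the flattening of the entries of all reduced density matrices, checks boundedness of those entries, verifies the Frobenius-to-Euclidean norm identity so that $k^\text{PQ}$ is exactly the composition kernel with $\sigma^2=1/(2\gamma)$, confirms $g_1(d)\in\calO(\poly(d))$ and $g_2(d)\in\Omega(\poly(d)^{-1})$, and invokes Proposition~\ref{prop:composition}. If anything, your explicit splitting of complex off-diagonal entries into real and imaginary parts is slightly more careful than the paper's proof, which writes the Frobenius norm as a sum of squared entry differences as though all entries were real.
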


    \begin{proof}
        The proof of this statement is presented in Appendix~D.
    \end{proof}

    \begin{remark}{(General projected quantum kernels)\textbf{.}}
        In the proof, we have assumed that the projected quantum kernel is taken with respect to the subsystems being each individual qubit.
        A general definition of the projected quantum kernel would allow for other subsystems, but the requirement that the feature map must be efficient to store classically prevents that the number of subsystems scales more than polynomially, and that the local dimension of each subsystem scales more than logarithmically, in which case we still have $g_1(d)\in\calO(\poly(d))$.
    \end{remark}

    The projected quantum kernel poses a clear example of a quantum kernel that is not constructed from a feature map onto the Hilbert space of density operators.
    Its closeness to the classical shadow formalism has earned it attention from the moment it was proposed.
    Thus, proving that also the projected quantum kernel admits an efficient approximation as an EQK results in another important kernel family that is also covered by EQKs.

    In order to identify scenarios in which the scaling $D$ could become super-polynomial in the relevant parameters $d,1/\varepsilon$, one would require for example $g_1(d)\propto\exp(d)$, or $g_2(d)\propto1/\exp(\exp(d))$, using the definitions from Proposition~\ref{prop:composition}.
    Indeed, in those scenarios we would not be able to guarantee the space efficiency of Algorithm~\ref{alg:RFF_PP} while keeping good $\varepsilon$-approximation.
    Nevertheless the conditions $g_1(d)\propto\exp(d)$ or $g_2(d)\propto1/\exp(\exp(d))$ alone would not be sufficient to prove that Algorithm~\ref{alg:RFF_PP} must fail.
    Said otherwise, this is again because from Theorem~\ref{thm:RFF_existence} we only have an upper bound on the required space complexity, and not a lower bound.
    At the same time, though, both cases would prevent us from being able to quantum-efficiently $\varepsilon$-approximate the projected quantum kernel in the first place, so these scenarios must be ruled out as they break the hypotheses.
    And even if they were allowed, recall that we can store classical vectors in quantum states using only logarithmically many qubits in the length of the classical vector.
    That means that single exponential $g_1(d)$ and inverse doubly-exponential $g_2(d)$ would not be enough to require exponential embedding-quantum complexity in the number of qubits.
    In order to have the number of qubits to be, e.g., 
    $n\in\calO(\exp(d))$, we would require either $g_1(d)\in\calO(\exp(\exp(d)))$, or $g_2(d)\in\Omega(1/\exp(\exp(\exp(d))))$, which would prevent our ability to evaluate the projected kernel even further.

    With these considerations, we have shown that projected quantum kernels, despite their using a different feature map and inner product, can be efficiently realized as EQKs.
    As a recapitulation, Fig.~\ref{fig:venn_diagram} summarizes all our contributions as a collection of set inclusions in a Venn diagram.
    With this, to the best of our knowledge, we conclude that all quantum kernels in the literature are either EQKs directly, or they can be efficiently realized as EQKs.


\section{Outlook}\label{s:outlook}

    This manuscript so far offers restricted positive answers to Question~\ref{Q:existence_neqk}.
    In this section, we list some promising directions to search for other restricted answers to the same question.
    Answering Question~\ref{Q:existence_neqk} negatively would involve proving an non-existence result, for which one needs a different set of tools than the ones we have used so far.
    In particular, one would need to wield lower bounds for non-existence, which in this corner of the literature appear to be trickier to find.

\subsection{Time efficient EQKs}

    It should be possible to come up with a concrete construction that ensures also time efficiency of Algorithm~\ref{alg:RFF} while applying Corollary~\ref{cor:time_efficient}.
    For that to happen, it would be enough to give not only the kernel function, but also an efficient algorithm to sample from its inverse FT.
    We recognize this question as the clearest next step in this new research direction.

\subsection{Non-stationary kernels and indefinite kernels}

    The theory of integral kernel operators and their eigenvalues has been already extensively developed in the context of functional analysis~\cite{karlin1964total, hansen2010discrete, ha1986eigenvalues}, also with applications to random processes and correlation functions \cite{yaglom1987correlation}.
    In particular, for $1$-dimensional kernel functions, it is known that reasonable smoothness assumptions lead to kernel spectra which are concentrated on few eigenfunctions.
    This fact alone hints strongly toward the existence of low-dimensional EQK-based approximations for a larger class of kernel functions than the ones we explored here.
    While we restricted ourselves to shift-invariant and composition kernels, the theory of eigenvalues of integral kernel operators seems to suggest that similar results could be obtained for any smooth kernel function.
    That being said, the generalization from $1$ to $d$-dimensional data might well come with factors which grow exponentially with $d$, somewhat akin to the well-known \emph{curse of dimensionality}.

    Alternatively, instead of arbitrary PSD kernel functions, one might also consider other restricted classes of known kernels.
    For example rotation-invariant kernels, of which polynomial kernels are a central example, offer an interesting starting point.
    Polynomial kernels are not shift-invariant, but they are often derived with an explicit feature map in mind.
    A potentially interesting research line would be to generalize polynomial kernels in a way that is less straightforward to turn into an EQK.

    Our results relied strongly on the \emph{random Fourier features} (RFF) algorithm of Ref.~\cite{rahimi2007random}.
    The RFF approximation in turn rests on a sampling protocol that owes its success to Bochner's theorem for shift-invariant PSD functions (Theorem~\ref{thm:Bochner}).
    Yaglom's theorem (Theorem~A.4 in Appendix~A) is referred to as the generalization of Bochner's theorem to functions that are not shift-invariant.
    It would be interesting to see whether there are other, non shift-invariant kernels for which Yaglom's theorem could be used to furbish an explicit feature map with approximation guarantees akin to those of Theorem~\ref{thm:RFF_existence}. 

    The extension from PSD to non-PSD functions could be similar to the extension from shift-invariant to smooth functions we just described.
    Intuitively, finding an EQK approximation for a kernel is similar to finding its singular value decomposition in an infinite-dimensional function space.
    In this space, PSD functions result in the \enquote{left}- and \enquote{right}-hand matrices in the singular value decomposition to be equal (up to conjugation).
    Conversely, for non-PSD functions, the main difference would be the difference between left and right feature maps.

    In this direction, we identify four promising research lines: first, generalize our study to the general class of $1$-dimensional smooth functions; second, study restricted classes of higher-dimensional kernel functions for which the integral kernel operator retains desirable qualities for approximation; third, find restricted cases where a kernel approximation based on Yaglom's theorem is possible; and fourth, extend similar results for indefinite kernel functions.
    We comment on these directions in Appendix~A.

\subsection{The variational QML lens}
    
    When studying EQK-based approximations to given kernel functions, we have not restricted ourselves to PQC-based functions.
    Even though the literature on quantum kernel methods up to this point has not dealt exclusively with variational circuits, a large body of work has indeed focused on functions estimated as the expectation value of a fixed observable with respect to some parametrized quantum state.
    In this sense, combining the results from Ref.~\cite{schuld2021fourier} and the rules for generating PSD-functions from Ref.~\cite{scholkopf2002learning}, one can develop a framework for how to approach quantum kernel functions holistically.
    We give some first steps explicitly in Appendix~E.
    We expect many discoveries to result from an earnest study of what are the main recipes to construct PQC-based quantum kernels other than by the established EQK principles.

\section{Conclusion}\label{s:conclusion}

    In this manuscript we raise and partially answer a fundamental question regarding the expressivity of mainstream quantum kernels.
    We identify that most quantum kernel approaches are of a restricted type, which we call \emph{embedding quantum kernels} (EQKs), and we ask whether this family covers all quantum kernel functions.
    If we leave notions of efficiency aside, we show that all kernel functions can be realized as EQKs, proving their universality.
    Universality is an important ground fact to establish, as it softly supports the usage of EQKs in practice.

    Learning whether EQKs are indeed everything we need for QML tasks or whether we should look beyond them is 
    important in the context of model selection.
    In ML for kernel methods, model selection boils down to choosing a kernel.
    For ML to be successful, it is long known that models should be properly aligned with the data; for different learning tasks, different models should be used.
    This notion is captured by the concept of \emph{inductive bias}~\cite{kuebler2021inductive, peters2022generalization}, which is different for every kernel.
    Given a learning task with data, the first step should always be to gather information about potential structures in the data in order to select a model with a beneficial inductive bias.
    In this scenario, it becomes crucial to then have access to a model with good data-alignment.
    In Ref.~\cite{kuebler2021inductive} some simple EQKs were found to possess inductive biases which are uninteresting for practically relevant data sets.
    These findings fueled the need to ask whether EQKs can also have interesting inductive biases, which is softly confirmed based on their universality.
    When searching for new quantum advantages in ML, the more interesting classes of kernels we have access to, the better.

    We propose Question~\ref{Q:existence_neqk} as a new line of research in \emph{quantum machine learning} (QML).
    Characterizing a relation of order between general quantum kernel functions and the structured EQK functions when considering computational efficiency could have an important impact in the quest for quantum advantages.
    Indeed, if it were found that all practically relevant kernels can be realized as EQKs, then researchers in quantum kernel methods would not need to look for novel, different models to solve learning tasks with.
    Nevertheless, for now there is still room for practically interesting kernels beyond efficient EQKs to exist.
    
    Operationally, the results presented in this manuscript aim at confirming that the class of kernels currently used in practice already provably covers large, promising families of kernel functions.
    The question we focus on is what is ultimately possible with efficient EQKs, and not so much what is the best way of evaluating kernel functions, nor how to construct new, essentially different quantum kernels.
    For instance, given a kernel function and an efficient algorithm to evaluate it, it is not our goal to provide a new evaluation algorithm which is more efficient in some sense.
    Instead, we are interested in knowing whether there exists another similarly efficient algorithm but which fulfills the added structural restriction of being based on an efficient quantum embedding.
    Whether the embedding-based evaluation algorithm has a slightly higher or lower computational cost is beyond our scope, we are interested in the embedding-based algorithm to be efficient by itself.

    After raising Question~\ref{Q:existence_neqk}, we give an answer for the restricted case of shift-invariant kernels.
    In Corollary~\ref{cor:secondderivative} we show that, under reasonable assumptions, all shift-invariant kernels admit a memory-efficient approximation as EQKs.
    Shift-invariant kernels are widely used also in classical ML, so showing that EQKs are still universal for shift-invariant kernels with efficiency considerations is an important milestone.

    While shift-invariant kernels enjoy a privileged position in the classical literature, they have not been as instrumental in QML so far.
    Indeed, many well-known quantum feature maps lead to EQK functions which are not shift-invariant.
    Also, the milestone work~\cite{huang2021power} has introduced the so-called projected quantum kernel, which has been used to prove learning separations between classical and quantum ML models.
    The projected quantum kernel is not shift-invariant.
    In Proposition~\ref{prop:composition}, we adapt our result for shift-invariant kernels so that they carry over to another class of kernel functions, which we call composition kernels, in which the projected quantum kernel is contained.
    This way, we present the phenomenon that even kernel functions that did not arise from an explicit quantum feature map can admit an efficient approximation as EQKs.

    In all, we have seen that two relevant classes of kernels, namely shift-invariant and composition kernels, always admit an efficient approximation as EQKs.
    Our results are clear manifestations of the expressive power of EQKs.
    Nevertheless, many threads remain open to fully characterize the landscape of all efficient quantum kernel functions.
    We invite researchers to join us in this research line by listing promising approaches as outlook in Section~\ref{s:outlook}.

\subsection*{Acknowledgements}

    The authors would like to thank Ryan Sweke, Sofiene Jerbi, and Johannes Jakob Meyer for insightful comments in an earlier version of this draft, and also thank them and Riccardo Molteni for helpful discussions.

    This work has been in part supported by the BMWK 
    (EniQmA),
    BMBF (RealistiQ, MUNIQC-Atoms), 
    the Munich Quantum Valley (K-8), the Quantum 
    Flagship (PasQuans2, Millenion), 
    QuantERA (HQCC),
    the Cluster of Excellence MATH+, the DFG (CRC 183), 
    and the Einstein Foundation 
    (Einstein Research Unit on Quantum
    Devices).
    VD acknowledges the support from the Quantum Delta NL program.
    This work was in part supported by the Dutch Research Council (NWO/OCW), as part of the Quantum Software Consortium program (project number 024.003.037), and is also supported by the European Union under Grant Agreement 101080142, the project EQUALITY.

\bibliography{sources.bib}

\clearpage
\onecolumngrid
\appendix

\begin{center}
\large{Supplementary Material for \\ \enquote{Quantum kernel methods beyond quantum feature maps}
}
\end{center}

\numberwithin{theorem}{section}


\section{Towards diagonalizing integral kernel operators}\label{a:mercer}

    In this section, we give analytic results on how to turn kernel functions into explicit inner products of feature maps.
    As we see in the following, the techniques used in the literature are akin to diagonalizing infinite-dimensional matrices, hence the title of this appendix.

\subsection{Diagonalizing PSD kernel functions}\label{sa:mercer}

    Here we state Mercer's theorem and two of its corollaries, which we use to prove the universality of \emph{embedding quantum kernels} (EQKs) in Theorem~\ref{thm:approx_universality}.
    These results are taken from the textbook by Sch\"olkopf and Smola~\cite{scholkopf2002learning}.

    \begin{theorem}[Mercer's Theorem~\cite{scholkopf2002learning}]\label{thm:mercer}
        Let $k\in L_\infty(\calX^2)$ be a symmetric real-valued function such that the integral operator
        \begin{align}
            (T_kf)(x) &= \int_{\calX}k(x,x')f(x')\mathrm{d}\mu(x')
        \end{align}
        is positive semidefinite.
        That means, for any $f\in L_2(\calX)$ we have
        \begin{align}
            \int_{\calX^2} k(x,x')f(x')f(x)\mathrm{d}\mu(x)\mathrm{d}\mu(x') &\geq 0.
        \end{align}
        Let $(\psi_j)_j$ be the orthonormal eigenbasis of $T_k$ with eigenvalues $\lambda_j>0$, sorted non-increasingly.
        Then
        \begin{enumerate}
            \item The sequence $(\lambda_j)_j$ has finite $1$-norm, $(\lambda_j)_j\in\ell_1$.
            \item For almost every $(x,x')\in\calX^2$ it holds
            \begin{align}\label{eq:mercer_series}
                k(x,x') &= \sum_{j=1}^{N_{\calH}}\lambda_j\psi_j(x)\psi_j(x'),
            \end{align}
            where $N_{\calH}\in\bbN\cup\{\infty\}$.
            In the case $N_{\calH}=\infty$ the series converges absolutely and uniformly for almost every $(x,x')$.
        \end{enumerate}
    \end{theorem}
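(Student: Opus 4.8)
The plan is to realize $T_k$ as a compact self-adjoint operator, apply the spectral theorem, and then bootstrap the resulting eigendecomposition first into the trace-class statement and then into the pointwise series. First I would observe that since $\calX$ is compact and carries a finite measure $\mu$ while $k\in L_\infty(\calX^2)$, we have $k\in L_2(\calX^2)$, so $T_k$ is Hilbert--Schmidt on $L_2(\calX,\mu)$ and in particular compact; the symmetry $k(x,x')=k(x',x)$ makes it self-adjoint, and the positive-semidefiniteness hypothesis $\langle f,T_kf\rangle\geq0$ forces every eigenvalue to be nonnegative. The spectral theorem for compact self-adjoint operators then supplies an orthonormal eigenbasis $(\psi_j)_j$ with eigenvalues $\lambda_j\geq0$ accumulating only at $0$; restricting to the orthogonal complement of the null space of $T_k$ leaves precisely the $\lambda_j>0$, sorted non-increasingly.

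For the first claim, $(\lambda_j)_j\in\ell_1$, the plan is to prove $T_k$ is trace-class by a diagonal-domination argument. I would introduce the truncations $k_n\coloneqq\sum_{j=1}^n\lambda_j\psi_j\otimes\psi_j$, whose remainder $r_n\coloneqq k-k_n$ is the kernel of the operator carrying eigenvalues $\{\lambda_j\}_{j>n}$ and is therefore again positive semidefinite. The key step is that such a kernel has nonnegative diagonal, giving the bound $\sum_{j=1}^n\lambda_j\psi_j(x)^2\leq k(x,x)$. Integrating against $\mu$ and using orthonormality $\int\psi_j^2\,\mathrm{d}\mu=1$ yields
\begin{align}
    \sum_{j=1}^n\lambda_j &\leq \int_\calX k(x,x)\,\mathrm{d}\mu(x) \leq \lVert k\rVert_\infty\,\mu(\calX)<\infty,
\end{align}
and letting $n\to\infty$ establishes $(\lambda_j)_j\in\ell_1$.

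For the second claim, I would first note that the Hilbert--Schmidt / $L_2(\calX^2)$ isometry immediately gives $k=\sum_j\lambda_j\psi_j\otimes\psi_j$ with convergence in $L_2(\calX^2)$. To upgrade this to pointwise \emph{absolute} convergence almost everywhere, I would reuse the diagonal bound: since $\sum_j\lambda_j\psi_j(x)^2\leq k(x,x)<\infty$ for almost every $x$, the Cauchy--Schwarz estimate
\begin{align}
    \left|\sum_{j=m}^n\lambda_j\psi_j(x)\psi_j(x')\right| &\leq \left(\sum_{j=m}^n\lambda_j\psi_j(x)^2\right)^{1/2}\left(\sum_{j=m}^n\lambda_j\psi_j(x')^2\right)^{1/2}
\end{align}
shows the tails vanish for almost every $(x,x')$, so the series converges absolutely there; identifying its limit with $k$ via uniqueness of the $L_2$ limit along a subsequence closes the almost-everywhere equality in Eq.~\eqref{eq:mercer_series}.

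The hard part will be the measure-theoretic subtleties specific to the $L_\infty$ (rather than continuous) setting, and they cluster in two places. First, the diagonal $\{x=x'\}$ is a $\mu\otimes\mu$-null subset of $\calX^2$, so the diagonal bound above cannot be read off by naive restriction of $k$ and must be interpreted through a suitable version of the kernel. Second, the uniform convergence asserted when $N_\calH=\infty$ is genuinely delicate. In the classical continuous formulation both issues dissolve: the eigenfunctions are continuous, $k(x,x)$ has pointwise meaning, and Dini's theorem applied to the monotone sequence $k_n(x,x)\uparrow k(x,x)$ of continuous functions on the compact diagonal forces uniform convergence, which the Cauchy--Schwarz bound then transfers off the diagonal. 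For the ``uniform for almost every $(x,x')$'' statement I would either pass to a continuous representative of $k$ on a set of full measure and invoke Dini there, or replace Dini by an Egorov-type argument on the full-measure set where the diagonal partial sums converge. Making this uniformity precise without a continuity hypothesis is the main obstacle, and for the technical details I would defer to the reference~\cite{scholkopf2002learning}.
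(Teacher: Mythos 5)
The paper does not prove this theorem at all: it is imported verbatim by citation to Ref.~\cite{scholkopf2002learning} (where it is in turn quoted from K\"onig's work on eigenvalue distributions of integral operators), so there is no in-paper proof to compare against. Judged on its own merits, your route is the standard one and is essentially sound for the parts you actually carry out: compactness and self-adjointness of $T_k$ via the Hilbert--Schmidt embedding $L_\infty(\calX^2)\subseteq L_2(\calX^2)$ on a compact domain with finite measure, nonnegativity of the spectrum from the quadratic-form hypothesis, the diagonal-domination bound $\sum_{j\leq n}\lambda_j\psi_j(x)^2\leq k(x,x)$ for the trace-class claim, and the Cauchy--Schwarz tail estimate for almost-everywhere absolute convergence. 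You also correctly identify that the diagonal bound needs reinterpretation since $\{x=x'\}$ is $\mu\otimes\mu$-null; the standard repair is to test positive semi-definiteness of the remainder kernel $r_n$ against normalized indicators of small balls and invoke Lebesgue differentiation at Lebesgue points of the $\psi_j$, which works here because $\calX\subseteq\bbR^d$.

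The one genuine gap is the \enquote{uniformly for almost every $(x,x')$} assertion, and neither of your two proposed fixes closes it. Passing to \enquote{a continuous representative of $k$ on a set of full measure} is not available: an $L_\infty$ kernel need not admit one, and Lusin's theorem only yields continuity on compact sets of \emph{nearly} full measure, after which Dini's theorem does not apply on the exceptional remainder. An Egorov-type argument fails for the same quantitative reason: Egorov produces uniform convergence off sets of arbitrarily small positive measure, never off a null set, which is what the statement requires. The actual proof of this clause in the $L_\infty$ setting is substantially more delicate (this is precisely K\"onig's contribution) and cannot be recovered from the two soft tools you name; your deferral to the reference is honest, but one should be aware that this is the step where the sketch stops being a proof rather than a routine technicality.
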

    
    \begin{corollary}[Mercer kernel map~\cite{scholkopf2002learning}]\label{cor:mercer_map}
        From statement 2. in Theorem~\ref{thm:mercer} it follows that $k(x,x')$ corresponds to a dot product in $\ell_2^{N_{\calH}}$.
    \end{corollary}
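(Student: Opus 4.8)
The plan is to read the feature map directly off the Mercer series. Statement 2 of Theorem~\ref{thm:mercer} expresses $k(x,x')=\sum_{j=1}^{N_{\calH}}\lambda_j\psi_j(x)\psi_j(x')$ with each eigenvalue $\lambda_j>0$, so I can absorb a factor $\sqrt{\lambda_j}$ into each coordinate and define the candidate feature map
\begin{align}
    \Phi\colon\calX &\to \ell_2^{N_{\calH}}, \\
    x &\mapsto \left(\sqrt{\lambda_j}\,\psi_j(x)\right)_{j=1}^{N_{\calH}}.
\end{align}
The coordinatewise product of $\Phi(x)$ and $\Phi(x')$ is exactly the summand appearing in the Mercer series, so once I know that $\Phi$ genuinely takes values in $\ell_2^{N_{\calH}}$ and that the relevant series converges, the corollary is immediate.

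First I would establish that $\Phi(x)\in\ell_2^{N_{\calH}}$ for almost every $x$, which is the one genuinely quantitative point. The squared $2$-norm is $\lVert\Phi(x)\rVert_2^2=\sum_{j=1}^{N_{\calH}}\lambda_j\psi_j(x)^2$, a series of non-negative terms, so its partial sums increase monotonically and the limit exists in $[0,\infty]$. Integrating over $\calX$ and exchanging sum and integral by monotone convergence, then using orthonormality $\int_\calX\psi_j(x)^2\,\mathrm{d}\mu(x)=1$, gives
\begin{align}
    \int_{\calX}\lVert\Phi(x)\rVert_2^2\,\mathrm{d}\mu(x) &= \sum_{j=1}^{N_{\calH}}\lambda_j\int_{\calX}\psi_j(x)^2\,\mathrm{d}\mu(x) = \sum_{j=1}^{N_{\calH}}\lambda_j < \infty,
\end{align}
where finiteness is exactly statement 1 of Theorem~\ref{thm:mercer}, namely $(\lambda_j)_j\in\ell_1$. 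Hence $\lVert\Phi(\cdot)\rVert_2^2\in L_1(\calX)$, so it is finite for almost every $x$, and $\Phi(x)$ is a well-defined element of $\ell_2^{N_{\calH}}$ almost everywhere.

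With square-summability in hand, the inner-product computation closes the argument: for almost every pair $(x,x')$,
\begin{align}
    \langle\Phi(x),\Phi(x')\rangle &= \sum_{j=1}^{N_{\calH}}\sqrt{\lambda_j}\,\psi_j(x)\,\sqrt{\lambda_j}\,\psi_j(x') = \sum_{j=1}^{N_{\calH}}\lambda_j\psi_j(x)\psi_j(x') = k(x,x'),
\end{align}
where the last equality is statement 2 of Theorem~\ref{thm:mercer}; absolute convergence of the intermediate series follows from a Cauchy--Schwarz estimate on the tails, $\lvert\sum_{j>M}\lambda_j\psi_j(x)\psi_j(x')\rvert\le(\sum_{j>M}\lambda_j\psi_j(x)^2)^{1/2}(\sum_{j>M}\lambda_j\psi_j(x')^2)^{1/2}$, controlled by the square-summability just proved. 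This exhibits $k$ as a dot product in $\ell_2^{N_{\calH}}$.

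The step I expect to be the main obstacle is precisely the $\ell_2$-membership of $\Phi(x)$. The tempting shortcut — reading $\sum_j\lambda_j\psi_j(x)^2=k(x,x)$ off the Mercer series on the diagonal $x'=x$ — is not licensed directly, since statement 2 asserts pointwise convergence only for almost every $(x,x')\in\calX^2$ while the diagonal $\{(x,x)\}$ is a $\mu\otimes\mu$-null set. The monotone-convergence argument above sidesteps this entirely by invoking only the $\ell_1$-summability of the eigenvalues and orthonormality of the $\psi_j$, rather than any pointwise statement on the diagonal; everything else is the routine bookkeeping displayed above.
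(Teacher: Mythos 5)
Your proof is correct and takes the same route as the paper's: both define the feature map $\Phi(x)=\bigl(\sqrt{\lambda_j}\,\psi_j(x)\bigr)_{j=1}^{N_\calH}$ and identify $\langle\Phi(x),\Phi(x')\rangle$ with the Mercer series of Eq.~\eqref{eq:mercer_series}, valid for almost every $(x,x')$. Your additional verification that $\Phi(x)\in\ell_2^{N_\calH}$ almost everywhere---via monotone convergence, orthonormality, and $(\lambda_j)_j\in\ell_1$, correctly avoiding the null-set diagonal shortcut---is a justified tightening of a step the paper leaves implicit.
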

    \begin{proof}
        Take $\Phi$ as a feature map
        \begin{align}
            \Phi\colon\calX &\longto\ell_2^{N_{\calH}} , \\
            x &\longmapsto\left(\sqrt{\lambda_j}\psi_j(x)\right)_{j=1,\dots,N_{\calH}}.
        \end{align}
        One then recovers Eq.~\eqref{eq:mercer_series}, which holds for almost every $x,x'\in\calX$.
    \end{proof}
    
    \begin{corollary}[Finite-dimensional Mercer kernel approximation~\cite{scholkopf2002learning}]\label{cor:finite_mercer}
        Let $k(x,x')$ be a kernel and let $N_\calH\in\bbN\cup\{\infty\}$ be the dimension of its realization as a Mercer kernel.
        For any $\varepsilon>0$, there exists $m\in\bbN$ such that even if $N_\calH=\infty$, $k$ can be approximated 
        within accuracy $\varepsilon$ as a dot product in $\bbR^m$ for almost every $x,x'\in\calX$,
        \begin{align}
            \lvert k(x,x') - \langle\Phi_m(x),\Phi_m(x')\rangle\rvert & <\varepsilon,
        \end{align}
        where we have used the definition
        \begin{align}
            \Phi_m\colon \calX &\longrightarrow \bbR^m ,\\
            x &\longmapsto \left(\sqrt{\lambda_1}\psi_1(x),\dots,\sqrt{\lambda_m}\psi_m(x)\right).
        \end{align}
    \end{corollary}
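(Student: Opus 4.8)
The plan is to recognize $\langle\Phi_m(x),\Phi_m(x')\rangle$ as the $m$-th partial sum of the Mercer series furnished by Theorem~\ref{thm:mercer}, so that the approximation error is exactly the tail of that series, which I then control using its uniform convergence.

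First I would unfold the definition of $\Phi_m$ together with the (subscript-free) Euclidean inner product on $\bbR^m$, and use that each $\lambda_j>0$ so that $\sqrt{\lambda_j}$ is real, to write
\begin{align}
    \langle\Phi_m(x),\Phi_m(x')\rangle &= \sum_{j=1}^m \sqrt{\lambda_j}\,\psi_j(x)\,\sqrt{\lambda_j}\,\psi_j(x') = \sum_{j=1}^m \lambda_j\psi_j(x)\psi_j(x').
\end{align}
Subtracting this from the Mercer representation in Eq.~\eqref{eq:mercer_series}, which holds for almost every $(x,x')\in\calX^2$, the error is precisely the remainder of the Mercer series,
\begin{align}
    k(x,x') - \langle\Phi_m(x),\Phi_m(x')\rangle &= \sum_{j=m+1}^{N_\calH}\lambda_j\psi_j(x)\psi_j(x').
\end{align}

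If $N_\calH<\infty$, the statement is immediate: taking $m=N_\calH$ makes the remainder empty and the error vanishes identically. The substance of the corollary is therefore the case $N_\calH=\infty$. Here I would invoke the second part of Theorem~\ref{thm:mercer}, which guarantees that the series converges \emph{absolutely and uniformly} for almost every $(x,x')$; concretely, there is a full-measure set $S\subseteq\calX^2$ on which the partial sums converge uniformly to $k$. By the definition of uniform convergence, $\sup_{(x,x')\in S}\lvert k(x,x')-\langle\Phi_m(x),\Phi_m(x')\rangle\rvert\to 0$ as $m\to\infty$, so given $\varepsilon>0$ there exists $m\in\bbN$ with
\begin{align}
    \sup_{(x,x')\in S}\left\lvert \sum_{j=m+1}^{\infty}\lambda_j\psi_j(x)\psi_j(x')\right\rvert < \varepsilon.
\end{align}
Since $S$ has full measure, this delivers the claimed bound for almost every $x,x'\in\calX$.

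The main obstacle is essentially dissolved by the hypotheses: all the analytic difficulty, namely the summability $(\lambda_j)_j\in\ell_1$ and the Dini-type argument upgrading pointwise to uniform convergence, is already absorbed into Mercer's theorem as stated. What remains is only to match the truncated feature map $\Phi_m$ to the partial sum and to read off the tail estimate from uniform convergence. The single point warranting a little care is that the \enquote{almost every} qualifier in the conclusion is inherited exactly from the full-measure set $S$ on which Theorem~\ref{thm:mercer} asserts uniform convergence, so no separate measure-theoretic step is required beyond quoting that part of the theorem.
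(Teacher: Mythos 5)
Your proof is correct and follows exactly the route the paper takes: the paper's own proof is the one-line observation that the claim \enquote{follows immediately from the absolute and uniform convergence of the series in Eq.~\eqref{eq:mercer_series}}, which is precisely your tail-of-the-Mercer-series argument spelled out in detail. Your added care in matching $\langle\Phi_m(x),\Phi_m(x')\rangle$ to the partial sum and in tracking the full-measure set $S$ is a faithful elaboration, not a different approach.
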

    \begin{proof}
        It follows immediately from the absolute and uniform convergence of the series in Eq.~\eqref{eq:mercer_series}.
    \end{proof}


\subsection{Towards generalizing random Fourier features to non-shift-invariant kernels}\label{a:RFF}

    In this section, we introduce again the \emph{random Fourier features} (RFF) construction 
    from Ref.~\cite{rahimi2007random}, and then point out how it could be generalized further than shift-invariant kernels. The core insight on
    which RFF are built is that kernels of stationary processes can be expressed as the Fourier transform of a probability density function, invoking Bochner's theorem. This Fourier transform can then be efficiently approximated by Monte Carlo methods. 
    Given a shift-invariant kernel $k(x,x')=k(x-x')$, for $x,x'\in\calX\subseteq\bbR^d$, Bochner's theorem (Theorem~\ref{thm:Bochner}) states that there exists a distribution $p(\omega)$ such that
    \begin{align}
        k(x-x') &= \int_{\bbR^d} p(\omega) e^{i\langle\omega, x-x'\rangle}\mathrm{d}\omega.
    \end{align}
    Taking inspiration from measure theory, we can rewrite this integral as an expectation value
    \begin{align}
        k(x-x') &= \bbE_{\omega\sim p} \left[ e^{i\langle\omega,x\rangle}e^{-i\langle\omega,x'\rangle}\right].
    \end{align}
    Said otherwise, there is a probabilistic map $x\mapsto\phi_\omega(x)$ such that
    \begin{align}
        k(x-x') &= \bbE_{\omega\sim p}\langle\phi_\omega(x),\phi_\omega(x')\rangle.
    \end{align}
    Next, we can approximate the expectation value by its finite-sample version (which is the same as doing Monte Carlo estimation for the integral above), to get
    \begin{align}
        \bbE_{\omega\sim p} \langle\phi_\omega(x),\phi_\omega(x')\rangle &\approx \frac{1}{D}\sum_{j=1}^D e^{i\langle\omega_j,x\rangle}e^{-i\langle\omega_j,x'\rangle},
    \end{align}
    where $\omega_1,\ldots,\omega_D\sim p$ are individual samples from $p(\omega)$.
    If we next define $z(x)$ to be the 
    vector of $\phi_{\omega_j}(x)$ for $j\in\{1,\ldots,D\}$ normalized by $1/\sqrt{D}$, we obtain
    \begin{align}
        \langle z(x),z(x')\rangle &= \frac{1}{D}\sum_{j=1}^D e^{i\langle\omega_j,x\rangle}e^{-i\langle\omega_j,x'\rangle} \approx k(x-x').
    \end{align}
    Now we can understand why the RFF approximation works (see Theorem~\ref{thm:RFF_existence}).
    It is because of the parallelism between MC estimation of the Fourier integral and the finite-sample approximation of the expectation value.
    All of the above of course depends deeply on Bochner's theorem.

    If we are interested in non-shift-invariant kernels instead, then Bochner's theorem does not apply, and the RFF construction does not work.
    Instead, we could hope to use Yaglom's theorem~\cite{yaglom1987correlation}.
    \begin{theorem}[Yaglom~\cite{yaglom1987correlation}]\label{thm:yaglom}
        A kernel $k(x,x')$ is PSD in $\bbR^d$ if and only if it has the form
        \begin{align}
            k(x,x') &= \int_{\bbR^d\times\bbR^d} 
            e^{i\left(\langle\omega,x\rangle-\langle\omega',x'\rangle\right)}\mu\!\left(\mathrm{d}\omega,\mathrm{d}\omega'\right),
        \end{align}
        where $\mu\!\left(\mathrm{d}\omega,\mathrm{d}\omega'\right)$ is the Lebesgue-Stieltjes measure associated to some PSD function $f(\omega,\omega')$ with bounded variation.
    \end{theorem}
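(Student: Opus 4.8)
The plan is to establish the two implications separately; the ``if'' direction is elementary, whereas the ``only if'' direction carries essentially all of the difficulty.

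For sufficiency, I would assume the integral representation and verify Definition~\ref{def:kernel} directly. Fixing finite sets $\{x_1,\dots,x_m\}\subseteq\bbR^d$ and $\{a_1,\dots,a_m\}\subseteq\bbR$ and substituting into the quadratic form, I would use that $\mu$ has bounded variation and that the exponential integrands are bounded in modulus by $1$ to invoke Fubini and exchange the finite sum with the double integral, obtaining
\begin{align}
    \sum_{i,j=1}^m a_i a_j\, k(x_i,x_j) &= \int_{\bbR^d\times\bbR^d} u(\omega)\,\overline{u(\omega')}\,\mu(\mathrm{d}\omega,\mathrm{d}\omega'),
\end{align}
with $u(\omega)=\sum_{i=1}^m a_i e^{i\langle\omega,x_i\rangle}$. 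Since $\mu$ is the measure associated to a PSD function $f$, the right-hand side is nonnegative, so $k$ is PSD.

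For necessity I would pass through the theory of second-order random fields. First, because $k$ is PSD it is the covariance kernel of a complex, mean-square-continuous random field $\{X(x)\}_{x\in\bbR^d}$ satisfying $\bbE[X(x)\overline{X(x')}]=k(x,x')$; its existence follows from the Kolmogorov extension theorem, or equivalently by building the RKHS $\calH_k$ associated to $k$ and embedding it isometrically into an $L^2$ probability space. Second---and this is the crux---I would show that the field is \emph{strongly harmonizable}, i.e.\ that it admits a spectral representation $X(x)=\int_{\bbR^d}e^{i\langle\omega,x\rangle}\,\mathrm{d}Z(\omega)$ whose spectral bimeasure $\mu(\mathrm{d}\omega,\mathrm{d}\omega')=\bbE[\mathrm{d}Z(\omega)\overline{\mathrm{d}Z(\omega')}]$ is PSD and of bounded variation. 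Finally, substituting this representation into $k(x,x')=\bbE[X(x)\overline{X(x')}]$ and interchanging the expectation with the two integrals would reproduce the claimed form.

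The hard part is the harmonizability step. For shift-invariant kernels, Bochner's theorem (Theorem~\ref{thm:Bochner}) collapses everything to a single-variable Fourier transform and the spectral measure concentrates on the diagonal $\omega=\omega'$ as a genuine nonnegative measure; here no such reduction is available. The naive two-variable inverse Fourier transform of a general PSD kernel is only a tempered distribution, or a \emph{bimeasure}, and promoting it to an honest measure of finite total variation is precisely what the bounded-variation hypothesis on $f$ encodes. I would try to obtain it by a regularization argument: approximate $k$ by mollified kernels whose spectral bimeasures are explicitly controlled, and then extract a weak-$*$ limit, using the uniform total-variation bound to keep mass from escaping to infinity. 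Since the result is classical, in the main text I would simply cite Ref.~\cite{yaglom1987correlation} for the full argument.
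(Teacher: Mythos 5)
First, a point of reference: the paper contains no proof of this statement. Theorem~\ref{thm:yaglom} is quoted from Ref.~\cite{yaglom1987correlation} and used only heuristically in Appendix~\ref{a:RFF} to motivate a possible generalization of random Fourier features, so your decision to ultimately cite Yaglom matches the paper's own treatment. On the merits, your sufficiency direction is correct and standard: Fubini is licensed by the bounded total variation of $\mu$ and the unimodular integrand, and nonnegativity of $\int u(\omega)\overline{u(\omega')}\,\mu(\mathrm{d}\omega,\mathrm{d}\omega')$ is exactly what positive semi-definiteness of $f$ provides, by passing PSD increment matrices over rectangle partitions to the limit in the Riemann--Stieltjes sums. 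One small omission: Definition~\ref{def:kernel} also demands exchange symmetry, which you should note follows from the Hermitian symmetry of $\mu$, itself a consequence of the PSD property of $f$.

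The necessity direction, however, has a genuine gap, and it sits exactly where you placed the difficulty. Positive semi-definiteness alone does \emph{not} imply strong harmonizability, so no regularization or weak-$*$ compactness argument can close the step: any kernel of the stated form obeys $\lvert k(x,x')\rvert \le \lVert\mu\rVert_{\mathrm{TV}} < \infty$ uniformly in $x,x'$, whereas the kernel $k(x,x') = \langle x,x'\rangle$ is continuous, symmetric, and PSD on all of $\bbR^d$ yet unbounded on the diagonal, and therefore admits no representation with a bounded-variation bimeasure. This shows two things concretely. First, the ``uniform total-variation bound'' you invoke to extract a weak-$*$ limit from mollified kernels is not a consequence of PSD-ness (nor of continuity); it is an additional hypothesis, and your mollification scheme has no source for it. Second, the ``only if'' half of the theorem as transcribed in the paper is a loose rendering of the Lo\`eve--Yaglom result, which characterizes the covariances of \emph{strongly harmonizable} fields --- a strictly smaller class than mean-square continuous second-order fields, with boundedness of $k(x,x)$ a necessary condition and V-boundedness-type conditions entering the precise characterization. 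So your step ``show that the field is strongly harmonizable'' is not merely hard: for a general PSD kernel it is false, and a correct write-up would either add the missing hypotheses to the statement or restrict the claim to the harmonizable class before citing Ref.~\cite{yaglom1987correlation}.
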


    The integral can again be rewritten as
    \begin{align}
        k(x,x') &= \int_{\bbR^d\times\bbR^d} f(\omega,\omega') e^{i\left(\langle\omega,x\rangle-\langle\omega',x'\rangle\right)} \mathrm{d}\omega\mathrm{d}\omega'.
    \end{align}    
    The special case where $f(\omega,\omega')=p(\omega)p(\omega')$ has symmetric product structure is the shift-invariant case.
    But in general $f$ does not need to be a probability distribution, because the requirement is that $f$ be PSD, not positive everywhere.
    If $f$ were positive everywhere $f(\omega,\omega')>0$ and normalized $\int_{\bbR^d\times\bbR^d}f(\omega,\omega')\mathrm{d}\omega\mathrm{d}\omega'=1$, then the Fourier integral would again correspond to an expectation value
    \begin{align}
        \int_{\bbR^d\times\bbR^d} f(\omega,\omega') e^{i\left(\langle\omega,x\rangle-\langle\omega',x'\rangle\right)} \mathrm{d}\omega\mathrm{d}\omega' &= \bbE_{(\omega,\omega')\sim f} \left[ e^{i\langle\omega,x\rangle}e^{-i\langle\omega',x'\rangle}\right].
    \end{align}
    This grants another unbiased estimator for the kernel as an Euclidean inner product
    \begin{align}
        k(x,x') &= \bbE_{(\omega,\omega')\sim f}\langle\phi_\omega(x),\phi_{\omega'}(x')\rangle,
    \end{align}
    only this time with different sampled frequencies for the left and right sides.
    Nevertheless, this would once again allow us to exploit the link between Monte Carlo approximation and finite sample approximation.
    Sampling $(\omega_1,\omega_1'),\ldots,(\omega_D,\omega_D')\sim f$, and defining $z$ and $z'$ to be the left and right probabilistic feature vectors, we recover the familiar formula
    \begin{align}
        \langle z(x),z'(x')\rangle &= \frac{1}{\sqrt{D}} \begin{pmatrix} e^{i\langle\omega_1,x\rangle} & \cdots & e^{i\langle\omega_D,x\rangle} \end{pmatrix} \cdot\frac{1}{\sqrt{D}}\begin{pmatrix} e^{-i\langle\omega'_1,x'\rangle} \\ \vdots \\ e^{-i\langle\omega'_D,x'_D\rangle}\end{pmatrix}\\
         &= \frac{1}{D} \sum_{j=1}^D e^{i\langle\omega_j,x\rangle}e^{-i\langle\omega'_j,x'\rangle}\\ 
         &\approx k(x,x').
    \end{align}
    A similar construction could be furbished if some notion of \emph{sampling from $f$} were available even if $f$ were not a probability distribution.
    Notice this approximation requires two different feature maps, $z$ and $z'$.
    So even if $k$ is guaranteed to be PSD by assumption, we would be using different left and right feature maps, similarly to what we described for non-PSD earlier in this appendix.
    Nevertheless, it would be interesting to find restricted cases where such an approximation is possible, as they could potentially hint towards further de-quantization protocols for QML models.


\subsection{Towards diagonalizing smooth and indefinite kernel functions}\label{a:diagonalizing}

    Earlier in this appendix we showcased Mercer's construction for kernels as explicit inner products.
    Here we add considerations from the literature~\cite{karlin1964total, hansen2010discrete, ha1986eigenvalues} that aim at quantifying the required resources of Mercer's finite-dimensional kernel approximation, as well as generalize it to indefinite (non-PSD) kernels.
    Drawing parallels to matrices, we refer to \enquote{the number of dimensions used to approximate a kernel function as an explicit inner product of feature vectors} as the \emph{rank} of said kernel approximation.
    The thesis of Mercer's theorem is that there always exist \emph{finite}-rank approximations for any kernel function.
    Yet, the relevant question is whether there always exist \emph{low}-rank approximations for kernel functions.
    The following claims indicate there always exist low-rank approximations for any kernel function on one-dimensional input data, assuming smoothness.
    These could be taken as starting point for further studying the existence of quantum kernel functions beyond embedding quantum kernels for non-shift-invariant kernels.

    Let $k:\calX\times\calX\to\bbR$ be a PSD kernel function, with $\calX\subseteq\bbR$ bounded.
    If $k$ is square-integrable 
    \begin{align}
        \int_{\calX\times\calX}\mathrm dx\,\mathrm dx' \lvert k(x,x')\rvert^2 &< \infty,
    \end{align}
    then there exist sequences $(s_l)_l\subseteq\bbR_{\geq0}$ of non-negative numbers and $(U_l(x))_l\subseteq \bbR^{\calX}$ of functions such that:
    \begin{align}
        k(x,x') &= \sum_{l=0}^\infty s_l U_l(x)U_l(x'),
    \end{align}
    where $(s_l)_l$ is monotonically decaying and has finite norm (which means its only aggregation point is $0$), and $(U_l(x))_l$ form an orthonormal basis of $\bbR^{\calX}$.

    The following is known about how quickly the eigenvalues $(s_l)_l$ must decay, which confirms the existence of low-rank approximations:
    \begin{itemize}
        \item If $k\in\calC^t[\calX\times\calX]$, then for sufficiently large $l$ we have the asymptotic scaling $s_l\in\calO\left(l^{-t-1}\right)$.
        \item If $k\in\calC^\infty[\calX\times\calX]$, then $s_l \in\calO\left(1/\exp(\Theta(l))\right)$.
        \item Finally, if $k\in\calC^\infty[\bbR^2]$ (now defined on $\calX=\bbR$), then $s_l \in o\left(1/\exp(\Theta(l))\right)$, where the \enquote{little-$o$} notation means this is an upper bound that is strictly not tight.
    \end{itemize}
    Truncating after the first $L$ terms of this series gives the optimal $L$-rank approximation to the function \emph{in $2$-norm}.

    \begin{remark}[Low rank approximation.]
        Having quickly decaying eigenvalues is a sufficient condition to ensure existence of a good low-rank approximation.
        For any orthonormal function basis $(U_l)_l$, each kernel function in $L_2$ admits an exact series expansion.
        When truncating the series to finitely-many terms, orthonormality implies that optimal approximation comes from keeping the terms with the largest eigenvalues.
        Since the sequence of eigenvalues has finite norm, there is a correspondence between the decay speed and the relative magnitude of the largest eigenvalues with respect to the smaller ones.
        For instance, if the decay is exponential, we know that keeping logarithmically-many eigenvalues suffices to ensure a constant approximation error.
    \end{remark}

    \begin{remark}[Input dimension.]
        Notice this is only for $1$-dimensional data!
        One would need to see how the statements generalize to higher dimensions.
        The case of functions that admit a separation as product of $1$-dimensional kernel functions (for instance, the Gaussian kernel) presents itself as a particularly nice one to analyze.
    \end{remark}

    The main difference when dealing with non-PSD functions is that the series expansion
    \begin{align}
        k(x,x') &= \sum_{l=0}^\infty s_l U_l(x)V_l(x')
    \end{align}
    requires two orthonormal bases of eigenfunctions $(U_l)_l,(V_l)_l$, one for the left and one for the right arguments.
    One simple example to see how the difference between the $U_l$ and the $V_l$ functions can come about is the following:
    \begin{enumerate}
        \item Let $k$ be some PSD kernel function.
        \item Let $(s_l)_l$ and $(U_l)_l$ be the sequences of positive eigenvalues and pairwise orthogonal, normalized functions.
        \item We have $k(x,x') = \sum_{l=0}^\infty s_l U_l(x)U_l(x')$, still a PSD function.
        \item Pick any $l'\in\bbN$ and consider the function $k'(x,x') = \sum_{l=0}^\infty (-1)^{\delta_{l,l'}} s_l U_l(x)U_l(x')$, which is not PSD anymore, since it has one negative eigenvalue $s_{l'}<0$.
        \item It is clear to see that $k'$ still admits a series expansion with only positive numbers $s_l$, provided we are allowed to use two different sequences of functions.
            Indeed, we only need to set $V_l$ to be
            \begin{align}
                V_l &= \begin{cases} -U_l &\text{if } l=l', \\ U_l & \text{else.} \end{cases}
            \end{align}
        \item It follows that $k'(x,x') = \sum_{l=0}^\infty s_l U_l(x)V_l(x')$, where, in particular, the sequence $(s_l)_l$ is the exact same one as we had for the original function $k$, and the only difference between the sequences $U_l$ and $V_l$ is a single minus sign in one element.
    \end{enumerate}

    We present a second and final example, which is closely related to random Fourier features.
    One of the basic facts used in the RFF-based approximation is the explicit rewriting
    \begin{align}
        \cos(x-x') &= \begin{pmatrix} \cos(x) & \sin(x)\end{pmatrix}\begin{pmatrix} \cos(x') \\ \sin(x')\end{pmatrix}.
    \end{align}
    In the language we have just introduced, $k(x,x')=\cos(x-x')$, $s_1=s_2=1$, $U_1(x)=\cos(x)$, and $U_2(x)=\sin(x)$.
    This confirms that the simple cosine function is a shift-invariant PSD kernel.
    If we use the sine instead of the cosine function, we lose positive semidefiniteness, but that does not prevent us from envisioning a similar rewriting
    \begin{align}
        \sin(x-x') &= \begin{pmatrix} \cos(x) & \sin(x)\end{pmatrix} \begin{pmatrix} -\sin(x') \\ \cos(x')\end{pmatrix}.
    \end{align}
    In this case, the identifications would be $k(x,x')=\sin(x-x')$, $s_1=s_2=1$, $U_1(x)=\cos(x)$, and $U_2(x)=\sin(x)$, all same as before, except for the necessity to introduce the right functions $V_1(x)=-U_2(x)$, and $V_2(x)=U_1(x)$.
    As we see, the vulneration of positive semi-definiteness can come in a way which does not significantly increase the difficulty of low-rank approximation.


\section{Mapping real-vectors to quantum states}\label{a:rtorho}

    Here we give the details of the quantum feature maps introduced in Sections~\ref{s:universality} and~\ref{s:shift-invariant}.
    The main feature of this quantum feature map is that it preserves inner products up to a multiplicative and an additive factor.
    That is, the Hilbert-Schmidt inner product of the quantum states relates to the Euclidean inner product of the vectors encoded in the states.
    The number of qubits required is logarithmic in the dimension of the vector to be encoded, and the resulting quantum states are mixed.
    First, we present the feature map and prove that it produces valid quantum states, that is properly normalized, trace one, PSD matrices.
    
    \begin{customlemma}{\ref{l:mappingqs}}[Correctness and runtime of Algorithm~\ref{alg:QEPIP}]
        Let $r\in\ell_1^d\subseteq\bbR^d$ be a unit vector with respect to the $1$-norm, $\lVert r\rVert_1=1$.
        Take $n=\lceil\log_4(d+1)\rceil$ and pad $r$ with $0$s until it has length $4^n-1$.
        Let $(P_i)_{i=1}^{4^n-1}$ be the set of all Pauli matrices on $n$ qubits without the identity.
        Then Algorithm~\ref{alg:QEPIP} prepares the following state as a classical mixture
        \begin{align}
            \rho_{(\cdot)}\colon\ell_1^d\to & \operatorname{Herm}(2^n) ,\\
            r \mapsto &\rho_r = \frac{\bbI + \sum_{i=1}^{4^n-1} r_i P_i}{2^n}.
        \end{align}
        The total runtime complexity $t$ of Algorithm~\ref{alg:QEPIP} fulfills $t\in\calO(\poly(d))$.
    \end{customlemma}
    \begin{proof}
    We have
    \begin{align}
        \frac{\bbI + \sum_{i=1}^{4^n-1} r_i P_i}{2^n} &= \frac{1}{{2^n}} \left({\sum_{i=1}^{4^n-1} \lvert r_i\rvert \bbI + \sum_{i=1}^{4^n-1} r_i P_i}\right) \\
        &= \frac{1}{{2^n}} \left({\sum_{i=1}^{4^n-1} \lvert r_i\rvert \bbI + \sum_{i=1}^{4^n-1} \lvert r_i\rvert \sign(r_i) P_i}\right) \\
        &= \frac{1}{{2^n}} \sum_{i=1}^{4^n-1} \lvert r_i\rvert (\bbI + \sign(r_i) P_i) \geq 0, 
    \end{align}
    making use of the fact that $\sum_i |r_i| = \lVert r\rVert_1=1$ and that $\bbI\pm P_i\geq 0$ for every Pauli operator $P_i$.
    Notice that preparing $\bbI \pm P_i$ can be done efficiently by rotating the $\lvert0\rangle$ basis state vector of each qubit to the respective Pauli basis, and then flip the necessary qubits individually.
    Since the latter is a convex combination of quantum states, this preparation is efficient by means of mixing whenever this expression has polynomially many terms, as each term is prepared separately and then mixed.
    \end{proof}

    Next, we show the relation between the Hilbert-Schmidt inner products of quantum states $\Tr\{\rho_r\rho_r'\}$ and the Euclidean inner product of the vectors they encode $\langle r,r'\rangle$.
    
    \begin{customlemma}
    {\ref{l:innerprodqs}}[Euclidean inner products]
        Let $r,r'\in\bbR^d$ be unit vectors with respect to the $1$-norm $\lVert r\Vert_1=1$, $\lVert r'\rVert=1$.
        Then, for $\rho_r,\rho_{r'}$ as constructed with Algorithm~\ref{alg:QEPIP}, the following identity holds
         \begin{align}
            \langle r,r'\rangle &= 2^n \Tr\{\rho_{r}\rho_{r'}\}-1.
        \end{align}
    \end{customlemma}
    \begin{proof}
        In order to prove this statement, we need to resort to two main ideas.
        \begin{itemize}
            \item The trace is linear, and the trace of the tensor product is the product of traces.
            \item All Pauli words are traceless except for the identity, and each Pauli word is its only own inverse. So the product of two different Pauli words is also traceless.
        \end{itemize}
        With these, we only need to expand the inner product as a sum and then manipulate each term according to the previous ideas,
        to get
        \begin{align}
            \Tr\{\rho_{r}\rho_{r'}\} &= \Tr\left\{\frac{1}{4^n} \left(\bbI + \sum_{j=1}^{4^n-1} r_jP_j\right)\left(\bbI + \sum_{k=1}^{4^n-1} r'_kP_k\right) 
            \right\} \\
            &= \frac{1}{4^n} \Tr\left\{\bbI + \sum_{j=1}^{4^n-1} r_jr'_j P_j^2 + \sum_{j\neq k}^{4^n-1} r_jr'_kP_jP_k\right\} \\
            &= \frac{1}{4^n} \left(\Tr\left\{\bbI\right\} + \Tr\left\{\sum_{j=1}^{4^n-1} r_jr'_j P_j^2\right\} +
             \Tr\left\{\sum_{j\neq k}^{4^n-1} r_jr'_kP_jP_k\right\}\right) \\
            &= \frac{1}{4^n} \left(\Tr\left\{\bbI\right\} + \Tr\left\{\sum_{j=1}^{4^n-1} r_jr'_j \bbI\right\}\right) \\
            &= \frac{1 + \sum_{j=1}^{4^n-1} r_jr'_j}{4^n} \Tr\{\bbI\} \\
            &= \frac{1 + \langle r,r'\rangle}{2^n}.
        \end{align}
        This completes the proof.
    \end{proof}
    
    \begin{customlemma}{\ref{l:2normQRFF}}[Inner product normalization]
        Let $r,r'\in\bbR^d$ be $2$-norm unit vectors $\lVert r\rVert_2 = \lVert r'\rVert_2=1$.
        Then, the  identity
         \begin{align}
            \langle r,r'\rangle &= \lVert r\rVert_1\lVert r'\rVert_1\left(2^n \Tr\left\{\rho_{\tr}\rho_{\tr'}\right\} -1
            \right)
        \end{align}
        holds, where $\tr^{(\prime)}=r^{(\prime)}/\lVert r^{(\prime)}\rVert_1$ corresponds to 
        renormalizing with respect to the $1$-norm.  
        Here $\rho_{\tr}$ refers to encoding $\tr$ onto a quantum state using the construction in Lemma~\ref{l:innerprodqs}.
        Note the renormalization is of no concern, since the fact $r,r'$ are $2$-norm unit vector implies that their $1$-norm is lower-bounded by $1$.
    \end{customlemma}
    \begin{proof}
        We prove this directly by just invoking Lemma~\ref{l:innerprodqs} using the re-normalized vectors $\tr,\tr'$, to get
        \begin{align}
            2^n \Tr\{\rho_{\tr}\rho_{\tr'}\}-1 &= \langle \tr,\tr'\rangle \\
            &= \left\langle\frac{r}{\lVert r\rVert_1},\frac{r'}{\lVert r'\rVert_1}\right\rangle \\
            &= \frac{\langle r, r'\rangle}{\lVert r\rVert_1\lVert r'\rVert_1}.
        \end{align}
        This completes the proof.
    \end{proof}    


\section{Precision required for estimating second derivatives with finite difference methods}\label{a:smoothness}
    \begin{corollary}[Finite precision derivative accuracy]\label{cor:smoothness}
        Let $\calX_d\subseteq[-R,R]^d$ be a compact domain, and let $\calX_{d,P}\subseteq\calX_d$ 
        be a subset which we can represent with up to $P$ bits of precision.
        For $\varepsilon>0$, let $k_d$ be a continuous kernel function such that it can be quantum efficiently $\varepsilon$-approximated on $\calX_{d,P}$.
        Assume the fourth derivatives of $k_d$ have bounded magnitude 
        \begin{equation}
        \left\lvert \partial^{(4)}_i k_d(\Delta)\right\rvert\leq L(d).
        \end{equation}
        Then, for any $\varepsilon>0$, the number of bits of precision required to achieve $\lvert\partial^2_i k_d - \delta^2_i k_d\rvert\leq\varepsilon$ is
        \begin{align}
            P = \left\lceil\log_4\left(\frac{L(d)}{12\varepsilon}\right)\right\rceil.
        \end{align}
    \end{corollary}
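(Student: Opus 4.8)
The plan is to realize $\delta_i^2$ as the standard symmetric three-point finite-difference stencil and control its deviation from $\partial_i^2$ by a fourth-order Taylor expansion. Concretely, for a step size $h>0$ and the $i$-th coordinate unit vector $e_i$, I would set
\begin{align}
\delta_i^2 k_d(\Delta) \coloneqq \frac{k_d(\Delta + h e_i) - 2 k_d(\Delta) + k_d(\Delta - h e_i)}{h^2}.
\end{align}
The symmetric choice is precisely what forces the odd-order Taylor terms to cancel, so that the leading error is governed by the fourth derivative, which is the quantity the hypothesis bounds.

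First I would expand $k_d(\Delta \pm h e_i)$ by Taylor's theorem to fourth order in the $i$-th coordinate, with Lagrange remainder. Summing the two expansions, the first- and third-order contributions cancel and I obtain
\begin{align}
k_d(\Delta + h e_i) + k_d(\Delta - h e_i) = 2 k_d(\Delta) + h^2 \partial_i^2 k_d(\Delta) + \frac{h^4}{24}\left(\partial_i^{(4)} k_d(\xi_+) + \partial_i^{(4)} k_d(\xi_-)\right)
\end{align}
for some intermediate points $\xi_\pm$ between $\Delta$ and $\Delta \pm h e_i$. Dividing by $h^2$ and subtracting $\partial_i^2 k_d(\Delta)$ gives the error identity, and invoking the assumed uniform bound $\lvert\partial_i^{(4)} k_d\rvert \leq L(d)$ on both remainder terms yields the clean estimate
\begin{align}
\left\lvert \partial_i^2 k_d(\Delta) - \delta_i^2 k_d(\Delta)\right\rvert \leq \frac{h^2 L(d)}{12}.
\end{align}

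Next I would connect the step size $h$ to the number of bits of precision. Representing the coordinates of $\calX_{d,P}$ with $P$ bits after the binary point makes the smallest nonzero increment available $h = 2^{-P}$, so $h^2 = 4^{-P}$. Substituting into the error bound and demanding $4^{-P} L(d)/12 \leq \varepsilon$, I solve for $P$ to get $4^P \geq L(d)/(12\varepsilon)$, i.e. $P \geq \log_4\!\left(L(d)/(12\varepsilon)\right)$; taking the ceiling gives the claimed $P = \lceil \log_4(L(d)/(12\varepsilon)) \rceil$.

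The main obstacle I anticipate is not the calculus but the bookkeeping around the domain and the precision model. I must ensure the shifted evaluation points $\Delta \pm h e_i$ remain inside $\calX_d$ (or restrict attention to the interior point $\Delta = 0$ that is actually used in Corollary~\ref{cor:secondderivative}, where a one-sided or centered stencil can be arranged to stay in-domain), and I must pin down precisely what ``representable with $P$ bits'' means so that $h = 2^{-P}$ is simultaneously achievable and the smallest useful step. A secondary subtlety is that $L(d)$ is assumed to bound $\partial_i^{(4)} k_d$ \emph{uniformly}; this is exactly what licenses bounding both Lagrange remainders by the same constant, whereas a bound valid only at the origin would instead require a short local-continuity argument to transfer it to $\xi_\pm$.
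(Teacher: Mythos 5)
Your proposal matches the paper's proof essentially step for step: the same symmetric central-difference stencil, the same fourth-order Taylor expansion with Lagrange remainder giving the error bound $h^2 L(d)/12$, the same choice $h=2^{-P}$, and the same solve-and-ceiling step for $P$. Your closing remark about the uniform bound on $\partial_i^{(4)}k_d$ also mirrors the paper's own note that, in practice, the bound only needs to hold in a neighborhood of $\Delta=0$.
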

    \begin{proof}
        For the finite difference second derivative, we take the central version
        \begin{align}
            \delta^2_i k_d \coloneqq \frac{k_d(\Delta+h \he_i) + k_d(\Delta-h\he_i) - 2 k_d(\Delta)}{h^2},
        \end{align}
        where $\he_i$ is the $i^\text{th}$ basis vector.
        We drop the subscript referring to the $i^\text{th}$ second derivative for ease of notation, the following holds for any $i\in[d]$.
        Start by taking the Lagrange formulation of Taylor's theorem, for $h>0$ there exists $\xi_+\in[\Delta,\Delta+h]$ such that
        \begin{align}
            k_d(\Delta+h) &= k_d(\Delta) + \partial k_d(\Delta) h + \frac{\partial^2 k_d(\Delta)}{2} h^2 + \frac{\partial^{(3)} k_d(\Delta)}{3!} h^3 + \frac{\partial^{(4)} k_d(\xi_+)}{4!} h^4.
        \end{align}
        Next, consider the same expansion for $k_d(\Delta-h)$, which will result in a different $\xi_-\in[\Delta-h,\Delta]$, and allows us to use the following trick
        \begin{align}
            k_d(\Delta+h) &= k_d(\Delta) + \partial k_d(\Delta) h + \frac{\partial^2 k_d(\Delta)}{2} h^2 + \frac{\partial^{(3)} k_d(\Delta)}{3!} h^3 + \frac{\partial^{(4)} k_d(\xi_+)}{4!} h^4 \\
            k_d(\Delta-h) &= k_d(\Delta) - \partial k_d(\Delta) h + \frac{\partial^2 k_d(\Delta)}{2} h^2 - \frac{\partial^{(3)} k_d(\Delta)}{3!} h^3 + \frac{\partial^{(4)} k_d(\xi_-)}{4!} h^4 \\ 
            k_d(\Delta+h) + k_d(\Delta-h) &= 2 k_d(\Delta) + \partial^2 k_d(\Delta) h^2 + \frac{\partial^{(4)} k_d(\xi_+)+\partial^{(4)} k_d(\xi_-)}{4!} h^4 \\
            \frac{\partial^{(4)} k_d(\xi_+)+\partial^{(4)} k_d(\xi_-)}{4!} h^2 &= \frac{k_d(\Delta+h) + k_d(\Delta-h) - 2 k_d(\Delta)}{h^2} - \partial^2 k_d(\Delta) \\
            \frac{\partial^{(4)} k_d(\xi_+)+\partial^{(4)} k_d(\xi_-)}{4!} h^2 &= \delta^2 k_d(\Delta) - \partial^2 k_d(\Delta).   
        \end{align}
        Following our assumptions, we have
        \begin{align}
             \lvert\partial^{(4)} k_d(\xi_+)+\partial^{(4)} k_d(\xi_-)\rvert &\leq 2L(d),
        \end{align}
        then the approximation error of the finite difference partial derivative is
        \begin{align}
            \lvert \partial^2 k_d(\Delta)-\delta^2 k_d(\Delta)\rvert\leq \frac{2 L(d)}{4!} h^2.
        \end{align}
        Now, since we can represent inputs with up to $P$ bits of precision, we can afford to set the finite difference to be machine precision $h=2^{-P}$, which results in
        \begin{align}
            \lvert \partial^2 k_d(\Delta)-\delta^2 k_d(\Delta)\rvert\leq \frac{2 L(d)}{4!\,4^P}.
        \end{align}
        Finally, we set
        \begin{align}
            \frac{2 L(d)}{4!\,4^P} &\leq \varepsilon,
        \end{align}
        solve for $P$ as
        \begin{align}
            4^P &\geq \frac{2 L(d)}{4!\varepsilon} = \frac{L(d)}{12\varepsilon}, \\
            P &\geq \log_4\frac{L(d)}{12\varepsilon},
        \end{align}
        and round $P$ up to the next integer, thus completing the proof.        
    \end{proof}

    Notice in practice we are interested for second derivatives at the origin $\Delta=0$, so we need the bound on the magnitude to hold only in a small environment around $\Delta=0$.


\section{Proofs from Section~\ref{s:composition}}\label{a:variance}

     \begin{customproposition}{\ref{prop:composition}} [Performance guarantee of Algorithm~\ref{alg:RFF_PP}]
        Let $f\colon\calX\to[-B,B]^{g_1(d)}$ be a pre-processing function, and let $k_f$ be the Gaussian kernel composed with $f$, as introduced in Eq.~\eqref{eq:composition}.
        Let the parameter of the Gaussian kernel be  $\sigma=g_2(d)$.
        If $g_1(d)\in\calO(\poly(d))$ and $g_2(d)\in\Omega(\poly(d)^{-1})$, then, $k_f(x,x')$ can be $\varepsilon$-approximated efficiently in the number of feature dimensions.
        In particular, the required dimension of a (probabilistic) feature map $D$ is at most
        \begin{align}
            D &\in\calO\left(\frac{\poly(d)}{\varepsilon^2}\log\left(\frac{dB}{\varepsilon}\right)\right) \in \tcalO\left(\frac{\poly(d)}{\varepsilon^2}\right).
        \end{align}
    \end{customproposition}
    \begin{proof}
        In order to reach the claim, we only need to use the random Fourier feature approach of Algorithm~\ref{alg:RFF}, but taking 
        $f(\calX)$ as input domain for the kernel.
        Indeed, we give the Gaussian kernel with parameter $\sigma=g_2(d)$ as input to the Algorithm, and we obtain the probabilistic 
        feature map
        \begin{align}
            z\colon[-B,B]^{g_1(d)} &\to\bbR^{D}, \\
            f(x) &\mapsto z(f(x)) = \sqrt{\frac{2}{D}}\begin{pmatrix}\cos\langle\omega_1,f(x)\rangle \\ \sin\langle\omega_1,f(x)\rangle \\ \vdots \\ 
            \cos\langle\omega_{D/2},f(x)\rangle \\ \sin\langle\omega_{D/2}, f(x)\rangle\end{pmatrix}.
        \end{align}
        Now, using the identity $\sigma_p^2=d/g_2(d)$ from Lemma~\ref{l:gaussian_variance}, it follows that
        \begin{align}
            \bbP \left[\sup_{x,x'\in\calX} \lvert\langle z(f(x)), z(f(x'))\rangle - k_f(x,x')\rvert \geq\varepsilon\right] &\leq2^8\left(\sqrt{\frac{d}{g_2(d)}}\frac{2B\sqrt{g_1(d)}}{\varepsilon}\right)^2\exp\left(-\frac{D\varepsilon^2}{4(g_1(d)+2)}\right).
        \end{align}
        In turn, this guarantees $\varepsilon$ approximation with any constant success probability using $D$ features, with
        \begin{align}
            D \in\calO\left(\frac{g_1(d)}{\varepsilon^2}\log\sqrt{\frac{d}{g_2(d)}}\frac{2B\sqrt{g_1(d)}}{\varepsilon}\right).
        \end{align}
        Now, setting $g_1(d)\in\calO(\poly(d))$ and $g_2(d)\in\Omega(\poly(d)^{-1})$ we obtain $\sqrt{d g_1(d) /g_2(d)}\in\calO(\poly(d))$, resulting in
        \begin{align}
            D\in\calO\left(\frac{\poly(d)}{\varepsilon^2}\log\frac{dB}{\varepsilon}\right)\in \tcalO\left(\frac{\poly(d)}{\varepsilon^2}\right),
        \end{align}
        thus reaching the claim.
    \end{proof}
    Recall one last time the definition 
    \begin{align}
        k(\Delta) &= e^{-\frac{\lVert\Delta\rVert^2}{2\sigma^2}}
    \end{align}
    of the Gaussian kernel, with parameter $\sigma>0$.
    This is a PSD symmetric function, so a 
    valid shift-invariant kernel, and also it fulfills $k(0)=1$.
    Accordingly, Bochner's theorem ensures that the FT of this function
    \begin{align}
        \operatorname{FT}[k](\omega) &\coloneqq p(\omega) \\
        &= \left(\frac{\sigma}{\sqrt{2\pi}}\right)^D e^{-\frac{\sigma^2\lVert\omega\rVert^2}{2}}
    \end{align}
 is a probability distribution.
    One can check that $p(\omega)>0$ everywhere, and that $\int_{\bbR^D} p(\omega)\mathrm{d}\omega =1$.
    Also, one can compute the variance of the distribution $p$, which we now do.
    
    \begin{lemma}[Gaussian kernel variance]\label{l:gaussian_variance}
        Let $p$ be the inverse $\mathrm{FT}$ of the Gaussian kernel with parameter $\sigma$.
        Let $\sigma_p^2=\bbE_p[\lVert\omega\rVert^2]$ be the variance of the distribution $p$.
        It holds that $\sigma_p^2=d/\sigma$.
    \end{lemma}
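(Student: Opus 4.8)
The plan is to read the variance off directly from the closed form of $p$ recorded just before the lemma. By Bochner's theorem (Theorem~\ref{thm:Bochner}) applied to the Gaussian kernel $k(\Delta)=e^{-\lVert\Delta\rVert^2/(2\sigma^2)}$, its inverse Fourier transform is the isotropic Gaussian density $p(\omega)=(\sigma/\sqrt{2\pi})^d\,e^{-\sigma^2\lVert\omega\rVert^2/2}$, which I would first confirm is a bona fide probability distribution (it is non-negative everywhere and integrates to $1$, as already noted in the text), so that $\sigma_p^2=\bbE_p[\lVert\omega\rVert^2]$ is well defined.

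The key structural observation is that $p$ factorizes across coordinates. Since $e^{-\sigma^2\lVert\omega\rVert^2/2}=\prod_{i=1}^d e^{-\sigma^2\omega_i^2/2}$ and the normalizing prefactor splits as $(\sigma/\sqrt{2\pi})^d=\prod_{i=1}^d(\sigma/\sqrt{2\pi})$, the measure $p$ is a product of $d$ identical one-dimensional, zero-mean Gaussian densities $p_i(\omega_i)=(\sigma/\sqrt{2\pi})\,e^{-\sigma^2\omega_i^2/2}$. Writing $\lVert\omega\rVert^2=\sum_{i=1}^d\omega_i^2$ and invoking linearity of expectation then reduces the claim to a single univariate second moment, $\sigma_p^2=\sum_{i=1}^d\bbE_{p_i}[\omega_i^2]=d\,\bbE_{p_1}[\omega_1^2]$, where the last equality uses that the $p_i$ are identical.

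The only remaining computation is the univariate Gaussian second moment $\bbE_{p_1}[\omega_1^2]=\int_{\bbR}\omega_1^2\,p_1(\omega_1)\,\mathrm{d}\omega_1$, which I would evaluate either by quoting the textbook variance of a centered Gaussian or, self-containedly, by differentiating the normalization identity $\int_{\bbR}p_1=1$ with respect to the width parameter. Multiplying this univariate value by the factor of $d$ produced by the sum then yields the claimed identity for $\sigma_p^2$. I do not expect any genuine obstacle here: the single point requiring care is matching the Gaussian-kernel parameter $\sigma$ to the width of $p_1$ consistently, so that $\int_{\bbR}p_1=1$ holds and the second moment is read off with the correct dependence on $\sigma$; everything else is a routine Gaussian integral.
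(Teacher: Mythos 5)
Your plan is essentially the same route as the paper's own proof: factorize $p$ into a product of $d$ identical one-dimensional centered Gaussians, reduce $\bbE_p[\lVert\omega\rVert^2]$ to a single univariate second moment by linearity, and evaluate that moment. But the one step you defer -- \enquote{quote the textbook variance} -- is precisely where the discrepancy lies, and carrying it out correctly does \emph{not} yield the claimed identity. Your $p_1(\omega_1)=(\sigma/\sqrt{2\pi})\,e^{-\sigma^2\omega_1^2/2}$ is a centered Gaussian of standard deviation $1/\sigma$, so $\bbE_{p_1}[\omega_1^2]=1/\sigma^2$ and hence
\begin{align}
    \sigma_p^2 \;=\; d\,\bbE_{p_1}\left[\omega_1^2\right] \;=\; \frac{d}{\sigma^2},
\end{align}
not $d/\sigma$ as the lemma states. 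The paper's computation reaches $d/\sigma$ only because it invokes the erroneous moment identity $\int_{-\infty}^{\infty}x^2 e^{-a^2x^2/2}\,\mathrm{d}x=\sqrt{2\pi}/a^2$; the correct value (substitute $u=ax$, or differentiate the normalization identity with respect to $a$, exactly as you propose) is $\sqrt{2\pi}/a^3$, consistent with the fact that $\omega$ scales as $1/\sigma$ so the second moment must scale as $1/\sigma^2$. In short: your proof, executed as written, is sound but proves $\sigma_p^2=d/\sigma^2$ and thereby exposes a typo in the lemma; your closing assertion that the computation \enquote{yields the claimed identity} is the only wrong step, and you flagged the right place yourself when you warned about matching $\sigma$ to the width of $p_1$.

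The slip is benign downstream, which is worth noting when you correct it. In the proof of Proposition~\ref{prop:composition} the lemma is used as $\sigma_p^2=d/g_2(d)$, and $\sigma_p$ enters the bound of Theorem~\ref{thm:RFF_existence} only through $\log\left(\sigma_p\operatorname{diam}(\calX)/\varepsilon\right)$; replacing $d/g_2(d)$ by $d/g_2(d)^2$ changes only constants inside the logarithm, and with $g_2(d)\in\Omega(\poly(d)^{-1})$ the conclusion $D\in\tcalO(\poly(d)/\varepsilon^2)$ survives unchanged. So when writing up your proof, state and prove the corrected identity $\sigma_p^2=d/\sigma^2$ rather than the lemma's $d/\sigma$.
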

    \begin{proof}
        We prove the proposition directly, plugging in the formulas.
        We need the two identities:
        \begin{align}
            \int_{-\infty}^\infty e^{-\frac{a^2 x^2}{2}} \mathrm{d}x &= \frac{\sqrt{2\pi}}{a} \quad\text{(normalization, $0$th moment)} ,\\
            \int_{-\infty}^\infty x^2 e^{-\frac{a^2 x^2}{2}} \mathrm{d}x &= \frac{\sqrt{2\pi}}{a^2} \quad\text{($2$nd moment)}.
        \end{align}
        At this point, we merely  need
        to expand $\bbE_p\left[\lVert\omega\rVert^2\right]$ as a sum, 
        and then substitute the formulas for the moments, to arrive at
        \begin{align}
            \sigma_p^2 &= \bbE_p\left[\lVert\omega\rVert^2\right] \\
            &= \sum_{i=1}^d \bbE_p\left[\omega_i^2\right] \\
            &= \sum_{i=1}^d \int_{\bbR^d} \omega_i^2 p(\omega)\mathrm d\omega \\
            &= \sum_{i=1}^d \int_{\bbR^d} \omega_i^2 \left(\frac{\sigma}{\sqrt{2\pi}}\right)^d e^{-\frac{\sigma^2\left(\omega_1^2 + \cdots + \omega_d^2\right)}{2}} \mathrm d\omega \\
            &= \sum_{i=1}^d \left(\frac{\sigma}{\sqrt{2\pi}}\right)^d \int_{-\infty}^\infty \omega_i^2 e^{-\frac{\sigma^2\omega_i^2}{2}} \mathrm d\omega_i \prod_{j\neq i}^d \int_{-\infty}^\infty e^{-\frac{\sigma^2\omega_j^2}{2}}\mathrm d\omega_j \\ 
            &= \sum_{i=1}^d \left(\frac{\sigma}{\sqrt{2\pi}}\right)^d \frac{\sqrt{2\pi}}{\sigma^2} \prod_{j\neq i}^d \frac{\sqrt{2\pi}}{\sigma} \quad\text{(from normalization and $2$nd moment)} \\
            &= \sum_{i=1}^d \frac{1}{\sigma} \\
            &= \frac{d}{\sigma}. 
        \end{align}
        The proof is complete.
    \end{proof}

    \begin{customproposition}{\ref{prop:f_hardness}}[No efficient classical approximation]
        There exists a function $f\colon\calX\to[0,1]^d$ which can be $\varepsilon$-approximated quantum efficiently in $d$ for which the composition kernel $k_f$ cannot be $\varepsilon$-approximated classically efficiently in $d$, with
        \begin{align}
            k_f(x,x') &\coloneqq \exp\left(-\lVert f(x)-f(x')\rVert^2\right).
        \end{align}
        We have taken $\sigma^2=1/2$ for simplicity.
    \end{customproposition}
    \begin{proof}
        We prove the statement constructively by giving an example of such a function.
        
        Let $f=(f_1, f_2, \ldots, f_d)$ be a vector of PQC-based functions $f_i\colon\calX\to[0,1]$ for $i\in\{1,2, \ldots, d\}$ with the following assumptions:
        \begin{enumerate}
            \item Each function $f_i$ requires a circuit with $d$ qubits and $d$ non-commuting layers of gates to be $\varepsilon$-approximated, such that evaluating $f_i(x)$ is quantum efficient but classically inefficient for every $i\in\{1,\ldots,d\}$.
            \item There exists a subset $\calX_1\subseteq\calX$ such that, for every $x\in\calX_1$, $f(x)=(f_1(x), 0, \ldots, 0)$.
            \item There exists a subset $\calX_2\subseteq\calX$ such that, for every $x\in\calX_2$, $f(x)=(0, 0, \ldots, 0)$.
        \end{enumerate}
        
        Such a function would fulfill the hypothesis of the proposition, and now we show that it also fulfills the thesis.
        That is, we prove that evaluating the composition of this function with the 
        Gaussian kernel also requires a quantum computer.
        We reach the conclusion by \emph{reductio ab absurdum}: We 
        assume $k$ is not hard and reach a contradiction (namely, that $f_1$ is not hard).
        In particular, if we could evaluate $k$ on a classical computer efficiently, we show that we would also be able to evaluate $f_1$ on a classical computer efficiently.
        
        We take $x_1\in\calX_1$ and $x_2\in\calX_2$, then we have
        \begin{align}
            k(x_1,x_2) &\coloneqq \exp\left(-\lVert f(x_1)-f(x_2)\rVert^2\right) = \exp\left(-\lVert(f_1(x_1), 0, \ldots, 0) - (0, \ldots, 0) \rVert^2\right) \\
            &= \exp\left(-\lvert f_1(x_1)\rvert^2\right) = \exp\left(-f_1(x_1)^2\right).
        \end{align}
        In the last line, we
        have used the fact that the range of all $f_i$ functions comprises only non-negative numbers.
        Now, because of injectivity of the exponential and the square over positive functions, it follows that the hardness of evaluating $\exp(-f_1(x_1)^2)$ is the same as the hardness of evaluating $f_1(x_1)$.
        So, if $k$ is easy to evaluate, then $f_1$ is also easy to evaluate, which contradicts the hypothesis.
        With this, we have that efficiently evaluating $k$ requires a quantum computer, thus completing the proof.
    \end{proof}

    \begin{customproposition}{\ref{prop:projected}} [Projected quantum kernel as an efficient EQK]
        The projected quantum kernel $k^\text{PQ}$ fulfills the assumptions of Proposition~\ref{prop:composition}, so it can be efficiently approximated as an EQK with the number of dimensions required $D$ fulfilling
    \begin{align}
        D\in \tcalO\left(\frac{\poly(d)}{\varepsilon^2}\right).
    \end{align}
    \end{customproposition}
    \begin{proof}
        The Gaussian RBF kernel is the same, up to rewiring the constant parameter.
        And then the function $f$ is the map from the input data $x$ to each entry of each reduced density matrix of the embedded quantum state $\rho(x)$.
        Density matrix elements are bounded, they have norm at most $1$, so we can injectively re-scale them to fit the $[0,1]$ range without altering the hardness of approximation.
        
        Formally, we let $f$ be characterized by three indices $k,m,n$ and we see the correspondence with the entries of the reduced 
        quantum states
        \begin{align}
            \sum_k\lVert\rho_k(x) - \rho_k(x')\rVert^2_F &= \sum_k \sum_{m,n=0}^{d_k-1} \left(\rho_{k,m,n}(x) - \rho_{k,m,n}(x')\right)^2 \\
            &= \sum_{k,m,n} \left(f_{k,m,n}(x) - f_{k,m,n}(x')\right)^2 \\
            &= \lVert f(x) - f(x')\rVert_2^2,
        \end{align}
        taking $f_{k,m,n}(x)$ to be the $(m,n)$ entry of the $k^\text{th}$ reduced density matrix of $\rho(x)$.
        In particular, we have that the dimension of the target space is $g_1(d) = \sum_{k=1}^Kd_k$, where $K$ is the number of subsystems considered, and $d_k$ is the local dimension of each of the subsystems.
        In the case where all subsystems have the same local dimension $d_k=d'$, we have $g_1(d) = Kd'$.
        Usually one takes $K=n$ to be the number of qubits, and $d_k=2$ corresponding to a single qubit.
        This, together with the assumption $n\in\calO(\poly(d))$ renders $g_1(d)\in\calO(\poly(d))$, as required by the proposition.
    \end{proof}


\section{Towards quantum kernels from parametrized quantum circuits}\label{a:variational}

    Embedding quantum kernels offer a straightforward recipe for constructing quantum kernels.
    Given a quantum feature map $x\mapsto\rho(x)$, one needs only take the Hilbert-Schmidt inner product $\kappa_\rho(x,x') = \Tr\{\rho(x)\rho(x')\}$.
    This approach holds irrespective of whether $\rho(x)$ is prepared with a \emph{parametrized quantum circuit} 
    or with some other, more complex state preparation protocol.
    Yet, often, $\rho(x)$ will correspond to a circuit with a few parametrized single-qubit gates interspersed with some fixed two-qubit entangling gate, as in the case of the IQP encoding used in Ref.~\cite{havlicek2019supervised}.

    To put things in perspective, one would construct a learning model from an \emph{encoding first} PQC as a three-step process:
    \begin{enumerate}
        \item Encode the classical data $x\mapsto\rho(x) = U(x)\lvert0\rangle\!\langle0\rvert U\dagg(x)$.
        \item Apply a trainable information-processing unitary $\rho(x)\mapsto \trho(x;\vartheta)= V(\vartheta)\rho(x)V\dagg(\vartheta)$.
        \item Measure the expectation value of a fixed observable $\trho(x;\vartheta)\mapsto f_\vartheta(x) = \langle \calM\rangle_{\trho(x;\vartheta)} = \langle0\rvert U\dagg(x)V\dagg(\vartheta)\calM V(\vartheta)U(x)\lvert0\rangle$.
    \end{enumerate}
    The corresponding EQK would just use the first step.
    
    Conversely, in order to construct a learning model from a \emph{data re-uploading} PQC, one would interleave encoding gates $U(x)$ with trainable gates $V(\vartheta)$ without a physical separation between the two:
    \begin{align}
        f(x) &= \langle 0\rvert V\dagg(\vartheta_0)U\dagg(x)V\dagg(\vartheta_1)U\dagg(x)\cdots U\dagg(x)V\dagg(\vartheta_L)\calM V(\vartheta_L)U(x)\cdots V(\vartheta_1)U(x)V(\vartheta_0)\lvert0\rangle.
    \end{align}
    By now we know that data re-uploading circuits can be rewritten as encoding-first circuits~\cite{jerbi2023beyond}, but still it might be intuitive to think of them as a separate entity.

    Re-uploading circuits can give rise to EQKs at least in two different ways.
    As has been proposed in Ref.~\cite{jerbi2023beyond}: translating the re-uploading circuit as an encoding-first circuit and then building the canonical EQK.
    As proposed in Ref.~\cite{hubregtsen2021training}, by considering the parametrized quantum feature map
    \begin{align}
        \lvert\phi(x,\vartheta)\rangle &= V(\vartheta_L)U(x)\cdots V(\vartheta_1)U(x)V(\vartheta_0)\lvert0\rangle,\\
        x\mapsto \rho(x,\vartheta) &= \lvert\phi(x,\vartheta)\rangle\!\langle\phi(x,\vartheta)\rvert ,\\
        \kappa_\vartheta(x,x') &= \Tr\{\rho(x,\vartheta)\rho(x',\vartheta)\}.
    \end{align}
    The result is a parametrized EQK $\kappa_\vartheta$.
    While it is not uncommon for kernels to have a tunable parameter (e.g., the Gaussian kernel has $\sigma$), it is rare to allow for more than a few.
    In Ref.~\cite{hubregtsen2021training} an algorithm for finding optimal values for the parameters $\vartheta$ has been proposed, but this approach is vulnerable to the problems of non-convex optimization we find in gradient-based QML approaches.

    There should be at least one other approach to build kernels from PQCs.
    In particular, nothing prevents us from encoding two different input variables into a re-uploading PQC
    \begin{align}
        f(x,x') &= \Tr\{\rho(x,x')\calM\}.
    \end{align}
    We make the $\vartheta$ dependence implicit for clarity, but in principle $\rho(x,x')$ could also be trainable.
    Notice traditional EQKs are a special case of this form, where for instance
    \begin{align}
        \rho(x,x') &= \rho'(x)\otimes\rho'(x'),
    \end{align}
    and $\calM$ would be the observable used in the $\SWAP$ test, resulting in $f(x,x')=\Tr\{\rho'(x)\rho'(x')\}$.
    Yet, there should exist other PQCs producing a parametrized state $\rho(x,x',\vartheta)$ that does not admit the separation into one part only dependent on $x$, and one part only dependent on $x'$.
    Indeed, for us to build a kernel like this it should suffice to take a general parametrization of a re-uploading circuit, and then find the parameters that ensure exchange symmetry and PSD, as in Definition~\ref{def:kernel}.

    As mentioned above, given a function described as a black-box, it is hard to establish whether the function fulfills the PSD property.
    Accordingly, except for the few recipes mentioned in this appendix, given a function described as a re-uploading PQC, it is hard to establish whether the function fulfills the PSD property.
    In the restricted case of shift-invariant functions, we can use the Fourier description of quantum circuits from Ref.~\cite{schuld2021fourier}.

    Recall that from Ref.~\cite{schuld2021fourier}, we know that PQCs give rise to real-valued trigonometric polynomials\footnote{
    Technically they give rise to truncated Fourier series, of which trigonometric polynomials are a special case.
    The two defining properties of trigonometric polynomials are that $(1)$ the sum has a finite number of terms, and $(2)$ all frequencies are integer multiples of a fixed base frequency.
    Thus, for a truncated Fourier series to be a trigonometric polynomial, there cannot be any pairwise incommensurate frequencies.
    Further, if input data is encoded with Pauli gates only, we are guaranteed all the frequencies are integer-valued.
    Since Pauli-gate encoding is the most common one, we abuse nomenclature slightly and call these functions trigonometric polynomials, even though in all generality that could be not the case.
    } $f(x)$ as
    \begin{align}
        f(x) &= \sum_{\omega\in\Omega} a_\omega\cos\langle\omega,x\rangle + b_\omega\sin\langle\omega,x\rangle.
    \end{align}
    The sum is over the so-called \emph{frequency spectrum} $\Omega\subset\bbN^d$, where $d$ is also the dimension of the input space $x\in\bbR^d$.
    The \emph{Fourier coefficients} $(a_\omega)_{\omega\in\Omega}$ and $(b_\omega)_{\omega\in\Omega}$, with $a_\omega,b_\omega\in\bbR$, can be separated into sine and cosine terms because $f$ is real-valued.
    We take the function to be $2\pi$-periodic and have base frequency $\omega_0=1$, without loss of generality.
    
    In combination with Bochner's theorem, we can establish whether a shift-invariant PQC gives rise to a PSD function by looking at the Fourier coefficients of the output function.
    Remember for shift invariant functions we introduce the variable $\Delta\coloneqq x-x'$, so that we have $f(x,x')=f(\Delta)$, abusing notation.
    Also, notice that for shift-invariant functions being symmetric under exchange is equivalent with being even $f(x-x')=f(x,x')=f(x',x)=f(x'-x)$.
    
    \begin{theorem}[Shift-invariant kernels from trigonometric polynomials]\label{thm:trigpolyPSD}
        Let $f(\Delta)$ be a real-valued trigonometric polynomial.
        Then, $f$ is even $f(\Delta)=f(-\Delta)$ if and only if all sine
         coefficients are zero, $b_\omega=0$ for all $\omega\in\Omega$.
        If $f$ is even, then $f$ is PSD if and only if all cosine coefficients 
        are non-negative, $a_\omega\geq0$ for all $\omega\in\Omega$.
    \end{theorem}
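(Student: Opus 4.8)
The plan is to treat the two claims separately, in both cases exploiting the parity of the cosine and sine functions together with the uniqueness of the Fourier representation of a trigonometric polynomial.

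For the first claim, I would start from the observation that $\cos\langle\omega,\Delta\rangle$ is even in $\Delta$ while $\sin\langle\omega,\Delta\rangle$ is odd, so that
\begin{align}
    f(\Delta) - f(-\Delta) &= 2\sum_{\omega\in\Omega} b_\omega \sin\langle\omega,\Delta\rangle.
\end{align}
Since $\Omega\subset\bbN^d$ consists of distinct integer frequencies, the functions $\{\cos\langle\omega,\cdot\rangle\}_{\omega\in\Omega}\cup\{\sin\langle\omega,\cdot\rangle\}_{\omega\neq0}$ are linearly independent (they form a mutually orthogonal system on the torus $[0,2\pi]^d$). Hence the right-hand side vanishes identically if and only if every $b_\omega=0$, which is precisely the statement that $f$ is even if and only if all sine coefficients vanish.

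For the second claim, I assume $f$ is even, so $b_\omega=0$ and $f(\Delta)=\sum_\omega a_\omega\cos\langle\omega,\Delta\rangle$. The easy direction is that non-negativity of the $a_\omega$ is sufficient for PSD-ness. Here I would use the angle-subtraction identity
\begin{align}
    \cos\langle\omega,x-x'\rangle &= \cos\langle\omega,x\rangle\cos\langle\omega,x'\rangle + \sin\langle\omega,x\rangle\sin\langle\omega,x'\rangle = \langle\varphi_\omega(x),\varphi_\omega(x')\rangle,
\end{align}
with $\varphi_\omega(x)=(\cos\langle\omega,x\rangle,\sin\langle\omega,x\rangle)$, exactly the two-dimensional feature map underlying random Fourier features. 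This exhibits each $\cos\langle\omega,\cdot\rangle$ as a PSD kernel, and a non-negative linear combination of PSD kernels is again PSD; hence $a_\omega\geq0$ for all $\omega$ implies that $f$ is PSD.

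For the converse (PSD implies $a_\omega\geq0$) I would invoke Bochner's theorem (Theorem~\ref{thm:Bochner}). Writing $\cos\langle\omega,\Delta\rangle=\tfrac12(e^{i\langle\omega,\Delta\rangle}+e^{-i\langle\omega,\Delta\rangle})$ shows that $f$ is the Fourier transform of the atomic measure
\begin{align}
    p &= a_0\,\delta_0 + \sum_{\omega\in\Omega,\,\omega\neq0}\frac{a_\omega}{2}\left(\delta_\omega+\delta_{-\omega}\right).
\end{align}
Because the Fourier transform is injective on finite measures, $p$ is the unique spectral measure of $f$, and Bochner forces $p\geq0$ whenever $f$ is PSD. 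Since $\Omega\subset\bbN^d$, the atoms sit at distinct locations $\pm\omega$ (the only possible coincidence being at the origin), so non-negativity of $p$ is equivalent to non-negativity of every atomic weight, i.e.\ $a_\omega\geq0$. A self-contained alternative to Bochner is to expand the PSD quadratic form directly as $\sum_{i,j}c_ic_j f(x_i-x_j)=\sum_\omega a_\omega\bigl|\sum_i c_i e^{i\langle\omega,x_i\rangle}\bigr|^2$ and then isolate a single $a_{\omega^*}$ by choosing the $x_i$ so that $\omega\mapsto\sum_i c_i e^{i\langle\omega,x_i\rangle}$ vanishes on $\Omega\setminus\{\omega^*\}$. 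The main obstacle I anticipate is the bookkeeping around the $\omega=0$ atom and making the uniqueness-of-spectral-measure step airtight: one must confirm that the distinctness of the integer frequencies prevents cancellation between the $\delta_\omega$ and $\delta_{-\omega'}$ contributions, so that reading off the atomic weights really does recover each individual $a_\omega$.
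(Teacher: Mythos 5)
Your proof is correct, and it reaches the result by a route that differs from the paper's in instructive ways, so a brief comparison is worthwhile. The paper handles the entire second claim with a single application of Bochner's iff: it computes the inverse Fourier transform of $f$ by integrating against $e^{-i\langle\omega,\Delta\rangle}$ over $[0,2\pi)^d$ and using $L_2$-orthogonality of the cosines, obtaining an atomic spectral measure with weights proportional to $a_\omega$, whence nonnegativity of the measure is equivalent to $a_\omega\geq 0$ in both directions at once. You instead split the two directions: for sufficiency you give an elementary argument via the angle-subtraction identity, exhibiting each $\cos\langle\omega,\cdot\rangle$ as the inner product of the two-dimensional feature map $\varphi_\omega(x)=(\cos\langle\omega,x\rangle,\sin\langle\omega,x\rangle)$ and using closure of PSD kernels under nonnegative combinations --- this avoids Bochner entirely for that direction and, pleasantly, makes explicit the very feature map underlying the paper's RFF-based EQK constructions; for necessity you identify the atomic measure $p=a_0\delta_0+\sum_{\omega\neq0}\tfrac{a_\omega}{2}(\delta_\omega+\delta_{-\omega})$ and invoke injectivity of the Fourier transform on finite measures to conclude that Bochner's nonnegative measure must equal $p$. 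Your injectivity step is the abstract counterpart of the paper's explicit orthogonality integral (the two establish the same uniqueness), and it is arguably the cleaner formulation, since the paper's computation informally treats the resulting sum of Kronecker deltas as a pointwise density. Your attention to the $\pm\omega$ atoms and the fact that $\Omega\subset\bbN^d$ rules out coincidences away from the origin is a detail the paper glosses over, as is your observation that $b_0$ is irrelevant because $\sin\langle 0,\Delta\rangle\equiv 0$; your linear-independence argument for the evenness claim is likewise a rigorous version of the paper's one-line appeal to even/odd decomposition. The sketched self-contained alternative (isolating $a_{\omega^*}$ in the quadratic form) is not needed given the completed Bochner argument, and you correctly flag that it would require extra bookkeeping.
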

    \begin{proof}
        The first statement is a definitory trait of even functions.
        Every square-integrable function (also trigonometric polynomials) 
        admits a decomposition into even (cosine) part and odd (sine) part.
        For the function to be even, its odd part needs to vanish.

        The second statement follows from Bochner's theorem: a continuous, even, shift-invariant function is PSD if and only if its inverse FT is a non-negative measure
        \begin{align}
            p(\omega) &= \frac{1}{2\pi}\int_{[0,2\pi)^d} f(\Delta) e^{-i\langle\omega,\Delta\rangle}\mathrm{d}\Delta.
        \end{align}
        Now we expand the integral into the sum of $f$, using the facts that the sine terms are $0$, and that the cosine functions form an orthonormal basis with respect to the $L_2$ norm $\int_{[0,2\pi)^d} \cos\langle\omega,\Delta\rangle \cos\langle\omega',\Delta\rangle\mathrm{d}\Delta=\delta_{\omega,\omega'}$, $\int_{[0,2\pi)^d} \cos\langle\omega,\Delta\rangle \sin\langle\omega',\Delta\rangle\mathrm{d}\Delta=0$,
        to get
        \begin{align}
            \int_{[0,2\pi)^d} f(\Delta) e^{-i\langle\omega,\Delta\rangle}\mathrm{d}\Delta &= \int_{0,2\pi)^d} \sum_{\omega'\in\Omega} a_{\omega'}\cos\langle\omega',\Delta\rangle e^{-i\langle\omega,\Delta\rangle} \mathrm{d}\Delta \\
            &= \sum_{\omega'\in\Omega}a_{\omega'} \left(\int_{[0.2\pi)^d} \cos\langle\omega',\Delta\rangle\cos\langle\omega,\Delta\rangle \mathrm{d}\Delta -i \int_{[0,2\pi)^d}\cos\langle\omega',\Delta\rangle\sin\langle\omega,\Delta\rangle\mathrm{d}\Delta\right) \\
            &= \sum_{\omega'\in\Omega}a_{\omega'} \delta_{\omega,\omega'}.
        \end{align}
        It follows that $p(\omega)$ is a discrete function with peaks on the frequencies $\omega\in\Omega$ with height proportional to $a_\omega$.
        So, for $p(\omega)$ to be non-negative, each of the coefficients $a_\omega$ must also be non-negative, thus completing the proof.
    \end{proof}

    Using Theorem~\ref{thm:trigpolyPSD}, one could take a PQC $f$ with $\Delta$ as input and try to look for parameter specifications that made the Fourier coefficients fulfill the restrictions: sine coefficients are zero, cosine coefficients are non-negative.
    Yet, from the lens of efficient EQK-approximation, this approach is very likely doomed to fail.
    Given a PQC that fulfilled the hypotheses, with frequency spectrum $\Omega_f$, in order for us to be able to evaluate it efficiently, the circuit could have at most polynomially many encoding gates $U_1(\Delta),\ldots,U_L(\Delta)$, $L\in\calO(\poly(d))$, which we will assume to fulfill $U_l\dagg(\Delta)=U_l(-\Delta)$.
    Then, we could consider the  quantum feature map $x\mapsto\lvert\phi(x)\rangle$ defined as
    \begin{align}
        \lvert\phi(x)\rangle &:=  \left(\bigotimes_{l=1}^L U_l(x)\right) V_\phi \lvert0\rangle, 
    \end{align}
    where $V_\phi$ is a unitary acting on as many gates as all encoding gates combined.
    With this, we can construct an EQK with the same Fourier spectrum $\Omega_f$ as the function $f$, we have started from
    \begin{align}
        \kappa_\phi(x,x') &= \Tr\{\lvert\phi(x)\rangle\!\langle\phi(x)\rvert\phi(x')\rangle\!\langle\phi(x')\rvert\} \\
        &= \lvert\langle\phi(x)\rvert\phi(x')\rangle\rvert^2 \\
        &= \left\lvert\langle0\rvert V_\phi\dagg \left(\bigotimes_{l=1}^L U\dagg_l(x)\right) \left(\bigotimes_{l=1}^L U_l(x')\right) V_\phi \lvert0\rangle\right\rvert^2 \\
        &= \left\lvert\langle0\rvert V_\phi\dagg \left(\bigotimes_{l=1}^L U_l(x'-x)\right) V_\phi \lvert0\rangle\right\rvert^2 .
    \end{align}
    Now, this construction $\kappa_\phi$ has the same Fourier spectrum as $f$, since it uses the same encoding gates.
    Also, while $\kappa_\phi$ requires potentially more qubits $n$ than $f$, the scaling is at most polynomial in the number of input dimensions $n\in\calO(\poly(L))\in\calO(\poly(d))$.
    Finally, the construction proposed retains flexibility in the choice of unitary gates $V_\phi$.
    Different unitaries result in different Fourier coefficients, but by construction it is ensured that $\kappa_\phi$ is always an even, PSD function.
    Ref.~\cite{gf2022demo} contains some visuals on the construction.
    In all, $K:=\{\kappa_\phi \,|\, V_\phi\in\operatorname{SU}(2^n)\}$ constitutes a family of EQKs that covers part of the space of kernel functions achievable as trigonometric polynomials with spectrum $\Omega_f$.
    That means, for $f$ to be an efficient non-EQK, it not only needs to fulfill the restrictions on the Fourier coefficients, but it also needs to not fall inside of $K$.
    We do not claim this is impossible, but it would be quite surprising if that were not the case.

\end{document}